%
\documentclass[runningheads]{llncs}


\usepackage[T1]{fontenc}


\usepackage{graphicx}
%

\usepackage{url}
\usepackage[small]{caption}

\usepackage{amsmath}
\usepackage{amsthm}
\usepackage{booktabs}
\urlstyle{same}
\usepackage{comment}
\usepackage{amsfonts}

\usepackage{multicol}

\usepackage{amsfonts}
\usepackage{amssymb}
\usepackage{latexsym}
\usepackage{stmaryrd}
\usepackage{wasysym}
\usepackage{mathrsfs}
\usepackage{graphicx}
\usepackage{rotating}
\usepackage{pifont}
\usepackage{booktabs}
\usepackage{array}
\newcolumntype{?}{!{\vrule width 1pt}}

\usepackage[numbers]{natbib}

\usepackage{multirow}
\usepackage{tikz}
\usetikzlibrary{arrows}
\usepackage{color}
\usepackage{xcolor}
\usepackage[inline]{enumitem}
\usepackage{xspace}
\usepackage{thm-restate} 
\usepackage{bussproofs}

\usepackage{tcolorbox}

\usepackage[edges]{forest}
\usepackage{varwidth}
\usepackage{adjustbox}

\usepackage[textsize=footnotesize]{todonotes}
\setuptodonotes{backgroundcolor=white, textcolor=red}

\usepackage[lined,linesnumbered,algoruled]{algorithm2e}

\makeatletter
\newsavebox{\@brx}
\newcommand{\llangle}[1][]{\savebox{\@brx}{\(\m@th{#1\langle}\)}%
  \mathopen{\copy\@brx\kern-0.5\wd\@brx\usebox{\@brx}}}
\newcommand{\rrangle}[1][]{\savebox{\@brx}{\(\m@th{#1\rangle}\)}%
  \mathclose{\copy\@brx\kern-0.5\wd\@brx\usebox{\@brx}}}
\makeatother





\newcommand{\ext}[1]{\llangle#1\rrangle}

\newcommand{\neighborhood}{neighbourhood\xspace} 
\newcommand{\p}{\varphi}

\newcommand{\reldomain}{\ensuremath{W}\xspace}
\newcommand{\relations}{\ensuremath{R}\xspace}
\newcommand{\role}{\ensuremath{r}\xspace}
\newcommand{\eset}{\emptyset}
\newcommand{\mdl}{\models}
\newcommand{\Int}{\ensuremath{\mathcal{I}}\xspace}

\newcommand{\dnot}{\ensuremath{\dot{\lnot}}\xspace}

\newcommand{\sbs}{\subseteq}
\newcommand{\sqs}{\sqsubseteq}

\newcommand{\tr}{^{\dagger}}
\newcommand{\ttr}{^{\ddagger}}

\newcommand{\valuation}{\ensuremath{\nu}\xspace}

\newcommand{\propmodel}{\ensuremath{\mathcal{M}^{\sf P}}\xspace}
\newcommand{\propdomain}{\ensuremath{\mathcal{W}}\xspace}
\newcommand{\propneigh}{\ensuremath{\mathcal{N}}\xspace}
\newcommand{\propassign}{\ensuremath{\mathcal{V}}\xspace}

\newcommand{\alcform}{\ensuremath{\hat{\varphi}}_{\Vmc, w}\xspace}

\newcommand{\con}{\ensuremath{\mathsf{con}}\xspace}
\newcommand{\conneg}{\ensuremath{\mathsf{con}_{\dot{\lnot}}}\xspace}
\newcommand{\for}{\ensuremath{\mathsf{for}}\xspace}
\newcommand{\forneg}{\ensuremath{\mathsf{for}_{\dot{\lnot}}}\xspace}
\newcommand{\rol}{\ensuremath{\mathsf{rol}}\xspace}
\newcommand{\ind}{\ensuremath{\mathsf{ind}}\xspace}

\newcommand{\fg}{\ensuremath{\mathsf{Fg}}\xspace}

\newcommand{\formtp}[1]{\ensuremath{{\boldsymbol{f}}^{\Vmc, w}_{#1}}\xspace}



\newcommand{\D}{\Diamond}

\newcommand{\B}{\Box}

\newcommand{\prop}[1]{\ensuremath{#1_{\sf prop} }\xspace} 
\newcommand{\elaxiom}{\ensuremath{\pi}\xspace}


\newcommand{\ALC}{\ensuremath{\smash{\mathcal{ALC}}}\xspace}


\newcommand{\NC}{\ensuremath{{\sf N_C}}\xspace}
\newcommand{\NI}{\ensuremath{{\sf N_I}}\xspace}
\newcommand{\NR}{\ensuremath{{\sf N_R}}\xspace}

\newcommand{\NPr}{\ensuremath{{\sf N_P}}\xspace}

\newcommand{\NV}{\ensuremath{{\sf N_{V}}}\xspace}





\newcommand{\MLn}{\ensuremath{\smash{\mathcal{ML}^{n}}}\xspace}

\newcommand{\MLALC}[1]{\ensuremath{\smash{\mathcal{ML}^{#1}_{\mathcal{ALC}}}}\xspace}

\newcommand{\MLnALC}{\ensuremath{\smash{\mathcal{ML}^{n}_{\mathcal{ALC}}}}\xspace}

\newcommand{\MLnALCg}{\ensuremath{\smash{\mathcal{ALC}\textnormal{-}\mathcal{ML}^{n}}\xspace}}



\newcommand{\Log}{\ensuremath{\smash{\mathsf{Pantheon}}}\xspace}


\newcommand{\logicnamestyle}[1]{\ensuremath{\smash{\mathbf{#1}}\xspace}}

\newcommand{\E}{\logicnamestyle{E}}
\newcommand{\EM}{\logicnamestyle{EM}}
\newcommand{\EC}{\logicnamestyle{EC}}
\newcommand{\EN}{\logicnamestyle{EN}}
\newcommand{\ET}{\logicnamestyle{ET}}
\newcommand{\ED}{\logicnamestyle{ED}}
\newcommand{\EP}{\logicnamestyle{EP}}
\newcommand{\EQ}{\logicnamestyle{EQ}}

\newcommand{\EMC}{\logicnamestyle{EMC}}
\newcommand{\EMN}{\logicnamestyle{EMN}}
\newcommand{\EMT}{\logicnamestyle{EMT}}
\newcommand{\EMD}{\logicnamestyle{EMD}}
\newcommand{\EMP}{\logicnamestyle{EMP}}

\newcommand{\ECN}{\logicnamestyle{ECN}}
\newcommand{\ECT}{\logicnamestyle{ECT}}
\newcommand{\ECD}{\logicnamestyle{ECD}}
\newcommand{\ECP}{\logicnamestyle{ECP}}
\newcommand{\ECQ}{\logicnamestyle{ECQ}}

\newcommand{\ENT}{\logicnamestyle{ENT}}
\newcommand{\END}{\logicnamestyle{END}}
\newcommand{\ENP}{\logicnamestyle{ENP}}

\newcommand{\ETD}{\logicnamestyle{ETD}}
\newcommand{\ETP}{\logicnamestyle{ETP}}
\newcommand{\ETQ}{\logicnamestyle{ETQ}}

\newcommand{\EDP}{\logicnamestyle{EDP}}
\newcommand{\EDQ}{\logicnamestyle{EDQ}}

\newcommand{\EPQ}{\logicnamestyle{EPQ}}

\newcommand{\EMCN}{\logicnamestyle{EMCN}}
\newcommand{\EMCT}{\logicnamestyle{EMCT}}
\newcommand{\EMCD}{\logicnamestyle{EMCD}}
\newcommand{\EMCP}{\logicnamestyle{EMCP}}

\newcommand{\EMNT}{\logicnamestyle{EMNT}}
\newcommand{\EMND}{\logicnamestyle{EMND}}
\newcommand{\EMNP}{\logicnamestyle{EMNP}}

\newcommand{\EMTD}{\logicnamestyle{EMTD}}
\newcommand{\EMTP}{\logicnamestyle{EMTP}}

\newcommand{\EMDP}{\logicnamestyle{EMDP}}

\newcommand{\ECNT}{\logicnamestyle{ECNT}}
\newcommand{\ECND}{\logicnamestyle{ECND}}
\newcommand{\ECNP}{\logicnamestyle{ECNP}}

\newcommand{\ECTD}{\logicnamestyle{ECTD}}
\newcommand{\ECTP}{\logicnamestyle{ECTP}}
\newcommand{\ECTQ}{\logicnamestyle{ECTQ}}

\newcommand{\ECDQ}{\logicnamestyle{ECDQ}}
\newcommand{\ECPQ}{\logicnamestyle{ECPQ}}

\newcommand{\ENTD}{\logicnamestyle{ENTD}}
\newcommand{\ENTP}{\logicnamestyle{ENTP}}

\newcommand{\ETDP}{\logicnamestyle{ETDP}}
\newcommand{\ETDQ}{\logicnamestyle{ETDQ}}
\newcommand{\ETPQ}{\logicnamestyle{ETDQ}}

\newcommand{\EDPQ}{\logicnamestyle{EDPQ}}

\newcommand{\EMCNT}{\logicnamestyle{EMCNT}}
\newcommand{\EMCND}{\logicnamestyle{EMCND}}
\newcommand{\EMCNP}{\logicnamestyle{EMCNP}}

\newcommand{\EMCTD}{\logicnamestyle{EMCTD}}
\newcommand{\EMCTP}{\logicnamestyle{EMCTP}}

\newcommand{\EMCDP}{\logicnamestyle{EMCDP}}

\newcommand{\EMNTD}{\logicnamestyle{EMNTD}}
\newcommand{\EMNDP}{\logicnamestyle{EMNDP}}
\newcommand{\EMNTP}{\logicnamestyle{EMNTP}}

\newcommand{\EMTDP}{\logicnamestyle{EMTDP}}

\newcommand{\ECTDP}{\logicnamestyle{ECTDP}}
\newcommand{\ECTDQ}{\logicnamestyle{ECTDQ}}
\newcommand{\ECTPQ}{\logicnamestyle{ECTPQ}}

\newcommand{\ECDPQ}{\logicnamestyle{ECDPQ}}

\newcommand{\ECNTD}{\logicnamestyle{ECNTD}}
\newcommand{\ECNTP}{\logicnamestyle{ECNTP}}

\newcommand{\ECNDP}{\logicnamestyle{ECNDP}}

\newcommand{\ENTDP}{\logicnamestyle{EMTDP}}

\newcommand{\ETDPQ}{\logicnamestyle{ETDPQ}}

\newcommand{\EMCNTD}{\logicnamestyle{EMCNTD}}
\newcommand{\EMCNTP}{\logicnamestyle{EMCNTP}}

\newcommand{\EMCNDP}{\logicnamestyle{EMCNDP}}

\newcommand{\EMCTDP}{\logicnamestyle{EMCTDP}}

\newcommand{\EMNTDP}{\logicnamestyle{EMNTDP}}

\newcommand{\ECNTDP}{\logicnamestyle{ECNTDP}}

\newcommand{\ECTDPQ}{\logicnamestyle{ECTDPQ}}

\newcommand{\EMCNTDP}{\logicnamestyle{EMCTDP}}

%








\newcommand{\LnALC}{\ensuremath{\mathit{L}^{n}_{\ALC}}\xspace}

\newcommand{\EnALC}[1]{\ensuremath{\smash{\mathbf{E}^{#1}_{\mathcal{ALC}}}}\xspace}

\newcommand{\MnALC}[1]{\ensuremath{\smash{\mathbf{EM}^{#1}_{\mathcal{ALC}}}}\xspace}
\newcommand{\KnALC}[1]{\ensuremath{\smash{\mathbf{K}^{#1}_{\mathcal{ALC}}}}\xspace}

\newcommand{\LnALCg}{\ensuremath{\smash{{\mathcal{ALC}}\textnormal{-}{\mathit{L}^{n}}}\xspace}}



\newcommand{\NP}{\textsc{NP}}
\newcommand{\PSpace}{\textsc{PSpace}}
\newcommand{\ExpTime}{\textsc{ExpTime}}
\newcommand{\NExpTime}{\textsc{NExpTime}}


\newcommand{\Fmf}{\ensuremath{F}\xspace}

\newcommand{\Mmf}{\ensuremath{M}\xspace}


\newcommand{\Cmc}{\ensuremath{\mathcal{C}}\xspace}

\newcommand{\Fmc}{\ensuremath{\mathcal{F}}\xspace}

\newcommand{\Imc}{\ensuremath{\mathcal{I}}\xspace}

\newcommand{\Mmc}{\ensuremath{\mathcal{M}}\xspace}
\newcommand{\Nmc}{\ensuremath{\mathcal{N}}\xspace}

\newcommand{\Pmc}{\ensuremath{\mathcal{P}}\xspace}

\newcommand{\Tmc}{\ensuremath{\mathcal{T}}\xspace}

\newcommand{\Vmc}{\ensuremath{\mathcal{V}}\xspace}
\newcommand{\Wmc}{\ensuremath{\mathcal{W}}\xspace}





\newcommand{\W}{\ensuremath{\mathcal{W}}\xspace}

\newcommand{\Lvar}{\mathit{L}}
\newcommand{\falseprop}{\mathsf{ff}}
\newcommand{\trueprop}{p \lor \lnot p}

\newcommand{\ax}{\AxiomC}
\newcommand{\llab}{\LeftLabel}

\newcommand{\uinf}{\UnaryInfC}
\newcommand{\disp}{\DisplayProof}


\newcommand{\T}{\mathcal{T}}
\spnewtheorem{claim}{Claim}{\itshape}{\rmfamily}
\usepackage{etoolbox}
\AtEndEnvironment{document}{\setcounter{claim}{0}}
%



\begin{document}


\title{Non-Normal Modal Description Logics \\ (Extended Version)}


\author{%
Tiziano Dalmonte$^1$
\and
Andrea Mazzullo$^2$
\and
Ana Ozaki$^3$
\and
Nicolas Troquard$^1$
}

\authorrunning{T. Dalmonte et al.}

\institute{
Free University of Bozen-Bolzano
\email{
\{name.surname\}@unibz.it}
\and
University of Trento
\email{
andrea.mazzullo@unitn.it}
\and
University of Oslo \& University of Bergen
\email{
anaoz@ifi.uio.no}
}

\maketitle

\begin{abstract}

Modal logics are widely used in multi-agent systems to reason about actions, abilities, norms, or epistemic states.
Combined with description logic languages, they
are also a powerful tool to formalise modal aspects of ontology-based reasoning over an object domain.
However, the standard relational semantics for modalities is known to validate principles
deemed problematic in agency, deontic, or epistemic applications.
To overcome these difficulties, weaker systems of so-called \emph{non-normal} modal logics, equipped with \emph{neighbourhood semantics} that generalise the relational one, have been investigated both at the propositional and at the description logic level.
We present here
a family of
\emph{non-normal modal description logics}, obtained by extending $\ALC$-based languages with non-normal modal operators.
For formulas interpreted on neighbourhood models over varying domains, we provide a modular framework of terminating, correct, and complete tableau-based satisfiability checking algorithms in $\NExpTime$.
For a subset of these systems, we also consider a reduction to satisfiability on constant domain relational models.
Moreover, 
we investigate the satisfiability problem in fragments obtained by disallowing the application of modal operators to description logic concepts,
providing tight $\ExpTime$ complexity results.

\end{abstract}

\section{Introduction}

\emph{Modal logics} are powerful tools used to represent and reason about actions and abilities~\cite{Brown,Elg}, coalitions~\cite{Pau,Tro}, knowledge and beliefs~\cite{Ago,Bal,Var1,LismontMongin}, obligations and permissions~\cite{AngEtAl,Gob,Wright}, etc.
In combination with \emph{description logics}, they give rise to \emph{modal description logics}~\cite{WolZak98,GabEtAl03}, knowledge representation formalisms used for modal reasoning over an object domain and with a good trade-off between expressive power and decidability.

The standard \emph{relational semantics} for modal operators is given in terms of \emph{frames} consisting of a set of \emph{possible worlds} equipped with binary \emph{accessibility relations}.
The foundations of modal description logics, so far, have also mostly been studied with relational semantics.
However, all the modal systems interpreted
with respect to
this semantics, known as \emph{normal}, validate principles that have been considered problematic or debatable for agency-based, coalitional, epistemic, or deontic applications, in that they lead to
unacceptable conclusions,
e.g.,
\emph{logical omniscience} in epistemic settings~\cite{Var1}, as well as
\emph{agency} or \emph{deontic paradoxes} in the representation of agents' abilities~\cite{Elg} and obligations~\cite{Ross,Aqv,For}.

To overcome these problems,
a generalisation of relational semantics, known as \emph{neighbourhood semantics}, was introduced by Scott~\cite{Sco} and Montague~\cite{Mon}.
Since it avoids in general the problematic principles
validated by relational semantics, it has been used to interpret a number of \emph{non-normal} modal logics, first studied by C.I. Lewis~\cite{CIL}, Lemmon~\cite{Lem}, Kripke~\cite{Kripke},
Segerberg~\cite{Seg}, and Chellas~\cite{Che}, among others.
A \emph{neighbourhood frame} consists of a set of worlds, each one associated with a ``neighbourhood'', i.e., a set of subsets of worlds.
Intuitively, a subset of worlds can be
thought of as
representing a fact
in a model, namely, those worlds where that fact holds.
Hence, the idea is that every world is assigned to a collection of
facts,
those
that are
brought about,
known, obligatory, etc.,
in that world of the model.
%

These are the neighbourhood semantics ingredients for \emph{propositional} non-normal modal logics.
A further line of research focuses on the behaviour of modal operators interpreted on neighbourhood frames in combination with \emph{first-order} logic.
In this direction, completeness results for first-order non-normal modal logics have been provided~\cite{Cos,CosPac}.
In addition, \emph{non-normal modal} \emph{description logics}, extending standard description logics, 
with modal operators interpreted on neighbourhood frames, have been considered for knowledge representation applications~\cite{SeyErd09,DalEtAl19,DalEtAl22}, also in multi-agent coalitional settings~\cite{SeyJam09,SeyJam10}.

To illustrate the expressivity of non-normal modal description logic languages, as well some of the limitations of relational frames behind adoption of neighbourhood semantics, we provide an example based on a classic multi-agent purchase choreography scenario~\cite{MonEtAl10} (see the Appendix for a detailed version).
Our multi-agent
setting involves
a customer $\mathit{c}$ and a seller $\mathit{s}$, as well as
agency operators $\mathbb{D}_i$ and $\mathbb{C}_i$, for $i \in \{ c, s \}$, read as `agent $i$ does/makes' and `agent $i$ can do/make', respectively~\cite{Elg,GovernatoriRotolo}.
The formula
$
\label{eq:1eq}
\mathsf{Ord} \equiv \mathbb{D}_{c}\exists \mathsf{req}.(\mathsf{Prod} \sqcap \mathsf{InCatal} )
$
defines an order $\mathsf{Ord}$ as a request made by customer $c$ of an in-catalogue
product.

By stating
$
\label{eq:2eq}
\exists \mathsf{req}.(\mathsf{Prod} \sqcap \mathsf{InCatal} ) \sqsubseteq \mathsf{Confirm} \sqcup \lnot \mathsf{Confirm},
$
we can also enforce that any request of an in-catalogue product is either confirmed or not confirmed.
However, relational semantics validates the so-called \emph{$\mathbf{M}$-principle} (often called \emph{monotonicity}) as well, according to which $C \sqsubseteq D$ always entails $\mathbb{D}_{c} C \sqsubseteq \mathbb{D}_{c} D$, for any concepts $C, D$.
Thus,
from the $\mathbf{M}$-principle
and $\mathsf{Ord}$ definition,
we obtain
$
\label{eq:monconc}
\mathsf{Ord} \sqsubseteq \mathbb{D}_{c} ( \mathsf{Confirm} \sqcup \lnot \mathsf{Confirm} ),
$
meaning that any order
is made confirmed or not confirmed by $c$. This is an unwanted conclusion in our agency-based scenario, since customers' actions should be unrelated to order confirmation aspects.\footnote{Other approaches (out of the scope of this paper) to avoid such consequences would involve rejecting the principle of \emph{excluded middle}, as done e.g. in \emph{intuitionistic description logics}~\cite{Dep06,BozEtAl07,Sch15}.}

Moreover, the formula
$
\label{eq:agglprem}
\mathsf{SubmitOrd} \sqsubseteq \mathbb{C}_{s}\mathsf{Confirm} \sqcap \mathbb{C}_{s}\mathsf{PartConf} \sqcap \mathbb{C}_{s}\mathsf{Reject}
$
states that a submitted order can be confirmed, can be partially confirmed, and can be rejected by the seller $s$.
On relational frames,
$ \mathbb{C}_{s} C \sqcap \mathbb{C}_{s} D \sqsubseteq \mathbb{C}_{s}( C \sqcap D ) $
 is a valid formula, for any concepts $C, D$, known as the \emph{$\mathbf{C}$-principle} (or \emph{agglomeration}).
 Therefore, by 
 the $\mathbf{C}$-principle, under relational semantics we would be forced to conclude that
$
\label{eq:agglconc}
\mathsf{SubmitOrd} \sqsubseteq \mathbb{C}_{s}(\mathsf{Confirm} \sqcap \mathsf{PartConf} \sqcap \mathsf{Reject}),
$
meaning that any submitted order is such that the seller $s$ has the ability to make it confirmed, partially confirmed, and rejected, all \emph{at once}, which is unreasonable.

Finally, consider the formula
$
\label{eq:necprem}
\top \sqsubseteq \mathsf{Confirm} \sqcup \lnot \mathsf{Confirm},
$
i.e., the truism stating that anything is either confirmed or not confirmed.
By the so called \emph{$\mathbf{N}$-principle} (or \emph{necessitation}) of relational semantics, we have that if $\top \sqsubseteq C$ is valid on relational frames, then $\top \sqsubseteq \mathbb{D}_{c} C$ holds as well, for any concept $C$.
Thus, from
the $\mathbf{N}$-principle
it would follow
on relational semantics
that
$
\label{eq:necprem}
\top \sqsubseteq \mathbb{D}_{c} (\mathsf{Confirm} \sqcup \lnot \mathsf{Confirm}),
$
thereby forcing us to the consequence that every object is made by customer $c$ to be either confirmed or not confirmed, hence leading again to an unreasonable connection between customer's actions and confirmation of orders.

The $\mathbb{D}_{i}$ and $\mathbb{C}_{i}$ modalities are axiomatised similarly to~\protect\cite{Elg}, by means of additional principles as well:
$\mathbb{D}_{i}$ obeys the $\mathbf{C}$- (seen above) and \emph{$\mathbf{T}$-principle}
($
\mathbb{D}_{w} C  \sqsubseteq C
$),
stating a \emph{factivity of actions} principle, well-known also in epistemic logic;
and both satisfy the \emph{$\mathbf{Q}$-principle}
($
\label{eq:qprinc}
\top \sqsubseteq \lnot \mathbb{D}_{c} \top
$),
asserting a principle of \emph{impotence towards tautologies} that is unsatisfiable in relational frames, but admissible over neighbourhood ones, and the \emph{$\mathbf{E}$-principle} ($C \equiv D$ entails $\mathbb{D}_{i} C \equiv \mathbb{D}_{i} D$ and $\mathbb{C}_{i} C \equiv \mathbb{C}_{i} D$), valid both on relational and neighbourhood frames.

In this paper, which is an extension of~\cite{DalEtAl19,DalEtAl22},
we investigate reasoning in a family of non-normal modal description logics, providing terminating, sound, and complete tableau algorithms for checking formula satisfiability on neighbourhood models based on \emph{varying domains} of objects.
Moreover,
we study the complexity of reasoning in a restricted fragment that disallows modalities on description logic concepts. Finally, for two modal description logics interpreted on \emph{constant domain} neighbourhood models, we adjust a reduction (known from the propositional case) to satisfiability with respect to standard relational semantics.

The paper is structured as follows.
Section~\ref{sec:prelim} provides the necessary definitions and the preliminary results on non-normal modal description logics.
In Section~\ref{sec:tableaux} we present the tableau algorithms for the family of logics here considered.
The case of fragments without modalised concepts is then studied in Section~\ref{sec:fragvardom}.
Section~\ref{sec:reasoncondom} contains the results for the constant domain case.
Finally, Section~\ref{sec:discuss} concludes the paper, discussing related work and possible future research directions.

\section{Preliminaries}
\label{sec:prelim}

Here we introduce
modal description logics, first presenting their syntax, and then their semantics based on neighbourhood and relational models, respectively.
Finally, we introduce the family of frame conditions here considered. 

\subsection{Syntax}

Let \NC, \NR and \NI be countably infinite and pairwise disjoint 
sets of \emph{concept}, \emph{role}, and \emph{individual names}, respectively.
An $\MLALC{n}$ \emph{concept} is an expression of the form
$
C ::= A \mid \lnot C \mid C \sqcap C \mid \exists \role.C \mid \B_{i} C,
$
where $A \in \NC$, $\role \in \NR$, and $\B_{i}$ such that
$i \in J = \{ 1, \ldots, n \}$.
%
%
An \emph{\MLALC{n} atom} is
a \emph{concept inclusion} (\emph{CI}) of
the form $(C \sqsubseteq D)$, or an \emph{assertion} of the form $C(a)$ or
$\role(a,b)$, with $C, D$ \MLALC{n} concepts,
$\role \in \NR$, and $a, b \in \NI$.
An \emph{\MLALC{n} formula}
has the form
$
\varphi::= \pi \mid \neg \varphi\mid \varphi\land \varphi\mid \B_{i} \p,
$
where
$\pi$ is an \MLALC{n} atom
and
$i \in J$.
We use the following standard definitions for concepts:
$\forall \role.C :=  \lnot \exists \role.\lnot C$;
$(C \sqcup D) :=  \lnot(\lnot C \sqcap \lnot D)$;
$\bot := A \sqcap \lnot A$,
$\top := A \sqcup \lnot A$
(for an arbitrarily fixed $A \in \NC$);
$\D_{i} C := \lnot \B_{i} \lnot C$.
Concepts of the form $\B_{i} C$, $\D_{i} C$ are \emph{modalised concepts}.
Analogous conventions
hold for formulas,
writing
$C \equiv D$ for $(C \sqsubseteq D) \land (D \sqsubseteq C)$
and setting
$\mathsf{false} := (\top \sqsubseteq \bot)$, $\mathsf{true} := (\bot \sqsubseteq \top)$.
%

\subsection{Semantics}\label{sec:sem}

%
We now define neighbourhood semantics, which (as already mentioned) can be seen as a generalisation of the relational semantics, introduced immediately after.

\subsubsection{Neighbourhood Semantics}
 A \emph{neighbourhood frame}, or simply \emph{frame},
is a pair
$\Fmc = ( \Wmc, \{\Nmc_i \}_{i \in J})$,
where 
$\Wmc$ is a non-empty set
of \emph{worlds}
and
$\Nmc_{i} \colon \W \rightarrow 2^{2^{\Wmc}}$ is   a \emph{neighbourhood function}, for each
\emph{agent}
$i \in J = \{1, \ldots, n\}$. 
%
%
An \emph{\MLALC{n} varying domain neighbourhood model}, or simply \emph{model}, based on a neighbourhood frame $\Fmc$ is a pair
$\Mmc = (\Fmc, \Int)$,
where
$\Fmc = (\Wmc,  \{\Nmc_i \}_{i \in J})$ is a neighbourhood frame
and $\Imc$ is a function associating with every $w \in \Wmc$ an \emph{$\ALC$ interpretation}
$\Imc_{w} = (\Delta_{w}, \cdot^{\Imc_{w}})$,
with non-empty \emph{domain} $\Delta_{w}$,
and where $\cdot^{\Imc_{w}}$ is a function such that:
for all $A \in \NC$, $A^{\Imc_{w}} \subseteq \Delta_{w}$;
for all $\role \in \NR$, $\role^{\Imc_{w}} \subseteq \Delta_{w} {\times} \Delta_{w}$;
for all $a \in \NI$, $a^{\Imc_{w}} \in \Delta_{w}$.
An \MLALC{n} \emph{constant domain neighbourhood model}
is defined in the same way, except that, for all $w,w'\in\Wmc$,
we have that $\Delta_{w}=\Delta_{w'}$
and, for all $u, v \in \Wmc$,
we require $a^{\Imc_{u}}= a^{\Imc_{v}}$ (denoted by $a^{\Imc}$), that is, individual names are \emph{rigid designators}.
We often write $\Mmc = (\Fmc, \Delta, \Imc)$ to denote a constant domain neighbourhood model $\Mmc = (\Fmc, \Imc)$ with domain $\Delta = \Delta_{w}$, for every $w \in \Wmc$.
%
%
%
Given a model $\Mmc = (\Fmc, \Int)$ and a world $w \in \Wmc$ of $\Fmc$ (or simply \emph{$w$ in $\Fmc$}), the \emph{interpretation $C^{\Imc_{w}}$ of a concept $C$ in $w$}
is defined
as:
$
		(\neg D)^{\Imc_{w}} = \Delta_{w} \setminus D^{\Imc_{w}}, \quad (D \sqcap E)^{\Imc_{w}}  = D^{\Imc_{w}} \cap E^{\Imc_{w}},
		$
		$
		(\exists r.D)^{\Imc_{w}} = \{d \in \Delta_{w} \mid \exists  e \in D^{\Imc_{w}}{:} (d,e) \in r^{\Imc_{w}}\},
		$
		$
		 (\B_{i} D)^{\Imc_{w}} = \{ d \in \Delta_{w} \mid \llbracket D \rrbracket^{\Mmc}_{d} \in \Nmc_{i}(w) \},
		 $
where, for all 
$d \in \bigcup_{w \in \Wmc} \Delta_{w}$, the set
$\llbracket D \rrbracket^{\Mmc}_{d} = \{ v \in \Wmc \mid  d \in D^{\Imc_{v}} \}$
is called the \emph{truth set of $D$ with respect to \Mmc and $d$}. 
We say that a concept $C$ is \emph{satisfied in $\Mmc$} if there is $w$ in $\Fmc$ such that $C^{\Imc_{w}} \neq \eset$, and that $C$ is \emph{satisfiable} (over varying or constant neighbourhood models, respectively) if there is a (varying or constant domain, respectively) neighbourhood model in which it is satisfied.
The \emph{satisfaction of an $\MLALC{n}$ formula~$\p$ in $w$ of $\Mmc$}, written $\Mmc, w  \models \p$, is defined
as follows:
\begin{alignat*}{6}
	& \Mmc, w \models C\sqsubseteq D && \text{ iff } && C^{\Imc_{w}} \subseteq D^{\Imc_{w}}, 
	&& \ \
	\Mmc, w  \models C(a) && \text{ iff } && a^{\Imc_{w}} \in C^{\Imc_{w}}, \\
	& \Mmc, w \models \role(a,b) && \text{ iff } && (a^{\Imc_{w}},b^{\Imc_{w}}) \in \role^{\Imc_{w}},
	&& \ \
	\Mmc, w  \models \neg \psi && \text{ iff } && \Mmc, w  \not \models \psi, \\
	& \Mmc, w  \models \psi \land \chi && \text{ iff } && \Mmc, w  \models \psi \text{ and } \Mmc, w  \models \chi, 
	&& \ \
	\Mmc, w  \models \B_{i} \psi && \text{ iff } && \llbracket \psi \rrbracket^{\Mmc} \in \Nmc_{i}(w),
 \end{alignat*} 
%
where
$\llbracket \psi \rrbracket^{\Mmc} = \{ v \in \Wmc \mid \Mmc, v \models \psi \}$ is the \emph{truth set of $\psi$}.
As a consequence of the above definition, we obtain the following
condition for $\Diamond_{i}$ formulas:
$\Mmc, w \models \Diamond_{i} \psi$  iff  $\llbracket \neg \psi \rrbracket^{\Mmc} \notin \Nmc_{i}(w)$.
Given a
neighbourhood
frame $\Fmc = (\Wmc,  \{\Nmc_i \}_{i \in J})$
and a
neighbourhood
model $\Mmc = (\Fmc, \Imc)$,
we say that $\varphi$ is \emph{satisfied in $\Mmc$} if there is $w \in \Wmc$ such that
$\Mmc, w \models \varphi$,
and that $\p$ is \emph{satisfiable} (over varying or constant domain neighbourhood models, respectively) if it is satisfied in some (varying or constant domain, respectively) neighbourhood model.
Also, $\p$ is    \emph{valid in $\Mmc$}, $\Mmc \models \p$, if it is satisfied in all $w$ 
of $\Mmc$, and it is \emph{valid on $\Fmc$} if, for all $\Mmc$ based on $\Fmc$,
$\p$ is valid in $\Mmc$,
writing $\Fmc \models \p$.
\subsubsection{Relational Semantics}
A
\emph{relational frame} 
is a pair
$\Fmf = ( \reldomain, \{\relations_i\}_{i \in J})$,
with 
$\reldomain$ non-empty set and $\relations_i$ 
binary relation on $\reldomain$,
for $i \in J = \{ 1, \ldots, n \}$.
%
An \emph{$\MLALC{n}$ (constant domain) relational model} 
based on a relational frame
$\Fmf = ( W, \{ R_{i} \}_{i \in J})$
is a
pair
$\Mmf = ( \Fmf, I)$,
where
$I$ is a function associating with every $w \in W$ an \ALC \emph{interpretation}
$I_w = (\Delta, \cdot^{I_{w}})$,
having non-empty \emph{constant domain} $\Delta$,
and where $\cdot^{I_{w}}$ is a function such that:
for all $A \in \NC$, $A^{I_{w}} \subseteq \Delta$;
for all $\role \in \NR$, $\role^{I_{w}} \subseteq \Delta {\times} \Delta$;
for all $a \in \NI$, $a^{I_{w}} \in \Delta$, and for all $u, v \in \reldomain$, $a^{I_u}=a^{I_{v}}$(denoted by $a^{I}$).
%
Given a relational model $M = (F, I)$ and a world $w \in W$ of $F$ (or simply $w$ in $F$), the 
\emph{interpretation of a concept $C$ in $w$}, written $C^{I_{w}}$, is defined by taking:
$
	(\neg C)^{I_{w}} = \Delta \setminus C^{I_{w}},$
	$
	(C \sqcap D)^{I_{w}}  = C^{I_{w}} \cap D^{I_{w}},
	$
	$
	(\exists \role.C)^{I_{w}} =
		\{d \in \Delta \mid \exists  e \in C^{I_{w}}{:}(d,e) \in \role^{I_{w}}\},
		$
		$
	(\B_{i} C)^{I_{w}} = 
		\{ d \in \Delta \mid \ \forall v \in W:
		w R_{i} v
\Rightarrow
	d \in C^{I_{v}} \}.
$
%

A concept $C$ is \emph{satisfied in $\Mmf$} if there is $w$ in $\Fmf$ such that $C^{I_{w}} 
\neq \eset$, and that $C$ is \emph{satisfiable on relational models} if there is a relational model in which it is satisfied.
The \emph{satisfaction of a $\MLALC{}$ formula~$\p$ in $w$ of $\Mmf$}, written $\Mmf, w  \models \p$, is defined, for atoms, negation and conjunction, similarly to the previous case, and as follows for the $\Box_{i}$ case:
$
		\Mmf, w  \models \B_{i} \varphi
		\text{ \ iff \ }
		\forall v \in \reldomain: w \relations_{i} v \Rightarrow \Mmf, v  \models \p.
		$
%
Given a relational frame $\Fmf = (\reldomain, \{\relations_i\}_{i \in J})$
and a relational model $\Mmf = (\Fmf, \Delta, I)$,
we say that $\varphi$ is \emph{satisfied in $\Mmf$} if there is $w \in \reldomain$ such that
$\Mmf, w \models \varphi$,
and that $\p$ is \emph{satisfiable on relational models} if it is satisfied in some relational model.
Also, $\p$ is said to be \emph{valid in $\Mmf$}, $\Mmf \models \p$, if it is satisfied in all $w$ 
of $\Mmf$, and it is \emph{valid on $\Fmf$} if, for all $\Mmf$ based on $\Fmf$,
$\p$ is valid in $\Mmf$,
writing $\Fmf \models \p$.
\subsection{Frame Conditions and Formula Satisfiability}

We consider the following conditions on neighbourhood frames $\Fmc = ( \Wmc, \{\Nmc_i \}_{i \in J})$. We say that  \emph{$\Fmc$ satisfies the}:
%
\begin{alignat*}{3}
	\text{\emph{$\mathbf{E}$-condition}} && \text{ \ iff \ } & \text{ $\Nmc_{i}$ is a neighbourhood function; } \\
	\text{\emph{$\mathbf{M}$-condition}} && \text{ \ iff \ }& \text{ $\alpha\in \Nmc_{i}(w)$ and $\alpha\subseteq\beta$ implies $\beta\in \Nmc_{i}(w)$; } \\
	\text{\emph{$\mathbf{C}$-condition}} && \text{ \ iff \ } & \text{ $\alpha\in \Nmc_{i}(w)$ and $\beta\in \Nmc_{i}(w)$ implies $\alpha\cap\beta\in \Nmc_{i}(w)$; } \\
	\textnormal{\emph{$\mathbf{N}$-condition}}  && \text{ \ iff \ } & \text{ $\Wmc \in \Nmc_{i}(w)$; } \\
	\text{\emph{$\mathbf{T}$-condition}} && \text{ \ iff \ } & \text{ $\alpha \in \Nmc_{i}(w)$ implies $w \in \alpha$; } \\
	\text{\emph{$\mathbf{D}$-condition}} && \text{ \ iff \ } & \text{  $\alpha \in \Nmc_{i}(w)$ implies $\Wmc \setminus \alpha \not \in \Nmc_{i}(w)$; } \\
	\text{\emph{$\mathbf{P}$-condition}} && \text{ \ iff \ } & \text{ $\emptyset \not \in \Nmc_{i}(w)$; } \\
	\text{\emph{$\mathbf{Q}$-condition}} && \text{ \ iff \ } & \text{  $\Wmc \not \in \Nmc_{i}(w)$; }
\end{alignat*}
for every $w\in \Wmc$, $\alpha,\beta\subseteq \Wmc$.
%
Combinations of conditions, such as the $\mathbf{EMCN}$-condition, are obtained by suitably joining the ones above.
Moreover, since the $\mathbf{E}$-condition is always satisfied by any neighbourhood frame, we often omit the letter $\mathbf{E}$ from this naming scheme, writing for instance `$\mathbf{MCN}$' in place of `$\mathbf{EMCN}$'.

On the relationships among (combinations of) neighbourhood frame conditions, we make the following observations.

\begin{restatable}{theorem}{PropImplicationSystem}\label{prop:implicationsystem}
Given a neighbourhood frame
$\Fmc = ( \Wmc, \{\Nmc_i \}_{i \in J})$, the following statements hold, for $i \in J$.
\begin{enumerate}
	\item If $\Nmc_i$ satisfies the $\mathbf{MQ}$-condition then, for every $w \in \Wmc$, $\Nmc_{i}(w) = \emptyset$. Hence, $\Nmc_i$ satisfies all but the $\mathbf{N}$-condition.
	\item\label{item:P-cond} $\Nmc_i$ satisfies the $\mathbf{P}$-condition, if $\Nmc_i$ satisfies one of the following:\\
		\begin{enumerate*}[label=(\roman*)]
			\item $\mathbf{MD}$-condition;
			\item $\mathbf{ND}$-condition; or
			\item $\mathbf{T}$-condition.
		\end{enumerate*}
	\item\label{item:D-cond} $\Nmc_i$ satisfies the $\mathbf{D}$-condition, if $\Nmc_i$ satisfies one of the following:\\
		\begin{enumerate*}[label=(\roman*)]
			\item $\mathbf{CP}$-condition; or
			\item $\mathbf{T}$-condition.
		\end{enumerate*}	
	\item $\Nmc_i$ does not satisfy the $\mathbf{NQ}$-condition.
\end{enumerate}
\end{restatable}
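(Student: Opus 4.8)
The plan is to verify each of the four statements directly from the definitions of the frame conditions, fixing an arbitrary agent $i \in J$ and an arbitrary world $w \in \Wmc$ throughout. These are all short set-theoretic arguments, so the main work is simply organising the case analysis cleanly.

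For item~1, suppose $\Nmc_i$ satisfies both the $\mathbf{M}$- and $\mathbf{Q}$-conditions. Towards a contradiction, assume $\alpha \in \Nmc_i(w)$ for some $\alpha \subseteq \Wmc$. Since $\alpha \subseteq \Wmc$, the $\mathbf{M}$-condition gives $\Wmc \in \Nmc_i(w)$, contradicting the $\mathbf{Q}$-condition ($\Wmc \notin \Nmc_i(w)$). Hence $\Nmc_i(w) = \emptyset$. With an empty neighbourhood, the $\mathbf{M}$-, $\mathbf{C}$-, $\mathbf{T}$-, $\mathbf{D}$-, $\mathbf{P}$-, and $\mathbf{Q}$-conditions are all vacuously satisfied (each is a universally quantified implication over members of $\Nmc_i(w)$, or the negative statement $\emptyset \notin \Nmc_i(w)$, resp. $\Wmc \notin \Nmc_i(w)$), and only the $\mathbf{N}$-condition ($\Wmc \in \Nmc_i(w)$) fails.

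For item~2, I would treat the three sufficient conditions separately. (i) If $\Nmc_i$ satisfies $\mathbf{MD}$: suppose for contradiction $\emptyset \in \Nmc_i(w)$; then $\Wmc \setminus \emptyset = \Wmc \notin \Nmc_i(w)$ by $\mathbf{D}$, yet $\emptyset \subseteq \Wmc$ and $\mathbf{M}$ give $\Wmc \in \Nmc_i(w)$, a contradiction, so $\emptyset \notin \Nmc_i(w)$, which is the $\mathbf{P}$-condition. (ii) If $\Nmc_i$ satisfies $\mathbf{ND}$: from $\mathbf{N}$ we have $\Wmc \in \Nmc_i(w)$, so by $\mathbf{D}$ (applied with $\alpha = \Wmc$) we get $\Wmc \setminus \Wmc = \emptyset \notin \Nmc_i(w)$, again the $\mathbf{P}$-condition. (iii) If $\Nmc_i$ satisfies $\mathbf{T}$: suppose $\emptyset \in \Nmc_i(w)$; then $\mathbf{T}$ yields $w \in \emptyset$, impossible, so $\emptyset \notin \Nmc_i(w)$. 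For item~3: (i) if $\Nmc_i$ satisfies $\mathbf{CP}$, suppose $\alpha \in \Nmc_i(w)$ and $\Wmc \setminus \alpha \in \Nmc_i(w)$; then $\mathbf{C}$ gives $\alpha \cap (\Wmc \setminus \alpha) = \emptyset \in \Nmc_i(w)$, contradicting $\mathbf{P}$, so $\mathbf{D}$ holds. (ii) if $\Nmc_i$ satisfies $\mathbf{T}$, suppose $\alpha \in \Nmc_i(w)$ and $\Wmc \setminus \alpha \in \Nmc_i(w)$; then $\mathbf{T}$ forces both $w \in \alpha$ and $w \in \Wmc \setminus \alpha$, impossible, so $\mathbf{D}$ holds.

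For item~4, I would exhibit the obstruction explicitly: there is no nonempty $\Nmc_i(w) \subseteq 2^{\Wmc}$ witnessing $\mathbf{NQ}$, since the $\mathbf{N}$-condition demands $\Wmc \in \Nmc_i(w)$ while the $\mathbf{Q}$-condition demands $\Wmc \notin \Nmc_i(w)$; these are directly contradictory for any $w$, so no frame can satisfy $\mathbf{NQ}$. I expect no real obstacle in any of these arguments; the only thing to be careful about is stating the vacuous-satisfaction reasoning in item~1 precisely (noting that $\Wmc \neq \emptyset$ by definition of a neighbourhood frame, so the failure of $\mathbf{N}$ is genuine and not itself vacuous).
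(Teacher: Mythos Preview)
Your proposal is correct and follows essentially the same approach as the paper's proof: each item is handled by a direct set-theoretic argument from the definitions, with the same contradictions derived in the same order (for instance, $\mathbf{MD}\Rightarrow\mathbf{P}$ via $\emptyset\in\Nmc_i(w)$ forcing $\Wmc$ both in and out of $\Nmc_i(w)$, and $\mathbf{CP}\Rightarrow\mathbf{D}$ via intersection yielding $\emptyset\in\Nmc_i(w)$). Your write-up is slightly more explicit about vacuous satisfaction in item~1, but the substance is identical.
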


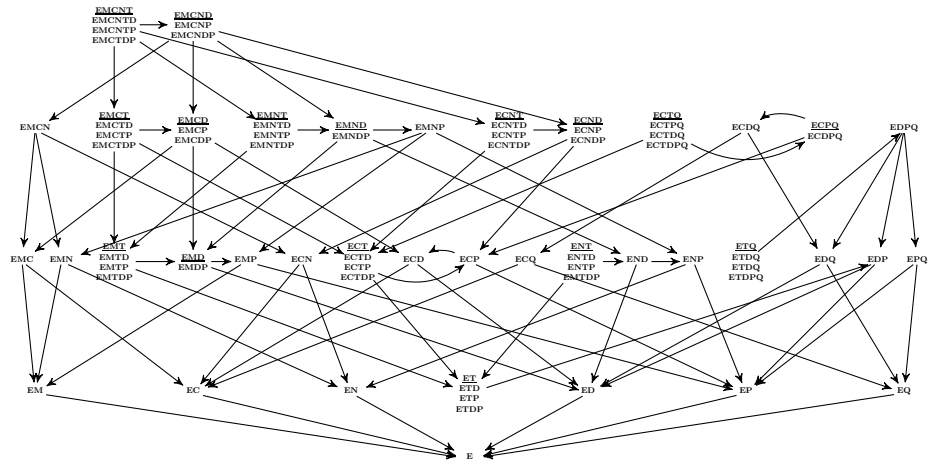
\begin{figure*}[h]
\centering
\begin{tikzpicture}[
scale=0.35, every node/.style={scale=0.35},
->,>=stealth',
box/.style={draw=none, text height=1cm, align=center}
]



%

\node[box, text height=1.4cm] (EMCNT) at (-13.5,16) {\underline{$\EMCNT$} \\ $\EMCNTD$ \\ $\EMCNTP$ \\ $\EMCNTDP$};

\node[box, text height=1cm] (EMCND) at (-10.5,16) {\underline{$\EMCND$} \\ $\EMCNP$ \\ $\EMCNDP$};


\node (EMCN) at (-16.5,12) {$\EMCN$};

\node[box, text height=1.4cm] (EMCT) at (-13.5,12) {\underline{$\EMCT$} \\ $\EMCTD$ \\ $\EMCTP$ \\ $\EMCTDP$};

\node[box, text height=1cm] (EMCD) at (-10.5,12) {\underline{$\EMCD$} \\ $\EMCP$ \\ $\EMCDP$};

\node[box, text height=1.4cm] (EMNT) at (-7.5,12) {\underline{$\EMNT$} \\ $\EMNTD$ \\ $\EMNTP$\\ $\EMNTDP$};

\node[box, text height=0.65cm] (EMND) at (-4.5,12) {\underline{$\EMND$} \\ $\EMNDP$};

\node (EMNP) at (-1.5,12) {$\EMNP$};

\node[box, text height=1.4cm] (ECNT) at (1.5,12) {\underline{$\ECNT$} \\ $\ECNTD$ \\ $\ECNTP$ \\ $\ECNTDP$};

\node[box, text height=1cm] (ECND) at (4.5,12) {\underline{$\ECND$} \\ $\ECNP$ \\ $\ECNDP$};

\node[box, text height=1.4cm] (ECTQ) at (7.5,12) {\underline{$\ECTQ$} \\ $\ECTPQ$ \\ $\ECTDQ$ \\ $\ECTDPQ$};

\node (ECDQ) at (10.5,12) {$\ECDQ$};

\node[box, text height=0.65cm] (ECPQ) at (13.5,12) {\underline{$\ECPQ$} \\ $\ECDPQ$};

\node (EDPQ) at (16.5,12) {$\EDPQ$};


\node (EMC) at (-17,7) {$\EMC$};

\node (EMN) at (-15.5,7) {$\EMN$};

\node[box, text height=1.4cm] (EMT) at
(-13.5,7)
{\underline{$\EMT$} \\ $\EMTD$ \\ $\EMTP$ \\ $\EMTDP$};

\node[box, text height=0.65cm] (EMD) at
(-10.5,7)
{\underline{$\EMD$} \\ $\EMDP$};

\node (EMP) at (-8.5,7) {$\EMP$};

\node (ECN) at (-6.375,7) {$\ECN$};

\node[box, text height=1.4cm] (ECT) at (-4.25,7) {\underline{$\ECT$} \\ $\ECTD$ \\ $\ECTP$ \\ $\ECTDP$};

\node (ECD) at (-2.125,7) {$\ECD$};

\node (ECP) at (0,7) {$\ECP$};

\node (ECQ) at (2.125,7) {$\ECQ$};

\node[box, text height=1.4cm] (ENT) at (4.25,7) {\underline{$\ENT$} \\ $\ENTD$ \\ $\ENTP$ \\ $\ENTDP$};

\node (END) at (6.375,7) {$\END$};

\node (ENP) at (8.5,7) {$\ENP$};

\node[box, text height=1.4cm] (ETQ) at
(10.5,7) 
{\underline{$\ETQ$} \\ $\ETDQ$ \\ $\ETPQ$ \\ $\ETDPQ$};

\node (EDQ) at (13.5,7)  {$\EDQ$};

\node (EDP) at (15.5,7) {$\EDP$};

\node (EPQ) at (17,7) {$\EPQ$};


\node (EM) at (-16.5,2) {$\EM$};

\node (EC) at (-10.5,2) {$\EC$};

\node (EN) at (-4.5,2) {$\EN$};

\node[box, text height=1.4cm] (ET) at (0,2) {\underline{$\ET$} \\ $\ETD$ \\ $\ETP$ \\ $\ETDP$};

\node (ED) at (4.5,2) {$\ED$};

\node (EP) at (10.5,2) {$\EP$};

\node (EQ) at (16.5,2) {$\EQ$};


\node (E) at (0,-0.5) {$\E$};


\draw[->, shorten >=1mm] (EM) -- (E);

\draw[->, shorten >=1mm] (EC) -- (E);

\draw[->, shorten >=1.5mm] (EN) -- (E);


\draw[->, shorten >=1.5mm] (ED) -- (E);

\draw[->, shorten >=1mm] (EP) -- (E);

\draw[->, shorten >=1mm] (EQ) -- (E);

\draw[->] (ET) -- (EDP);

\draw[->, shorten >=1mm] (EMC) -- (EM);

\draw[->, shorten >=1mm] (EMC) -- (EC);

\draw[->, shorten >=1mm] (EMN) -- (EM);

\draw[->, shorten >=1mm] (EMN) -- (EN);

\draw[->] (EMT) -- (ET);

\draw[->] (EMD) -- (ED);

\draw[->, shorten >=0.5mm] (EMP) -- (EP);

\draw[->, shorten >=1mm] (EMP) -- (EM);

\draw[->, shorten >=1mm] (ECN) -- (EC);

\draw[->, shorten >=1mm] (ECN) -- (EN);

\draw[->, shorten <=-0.5mm, shorten >=-1mm] (ECT) -- (ET);

\draw[->, shorten >=1mm] (ECD) -- (EC);

\draw[->, shorten >=1mm] (ECD) -- (ED);

\draw[->, shorten >=1mm] (ECP) -- (EP);

\draw[->, shorten >=1mm] (ECQ) -- (EC);

\draw[->, shorten >=0.5mm] (ECQ) -- (EQ);

\draw[->, shorten >=-1mm] (ENT) -- (ET);

\draw[->, shorten >=1mm] (END) -- (ED);

\draw[->, shorten >=1mm] (ENP) -- (EN);

\draw[->, shorten >=1mm] (ENP) -- (EP);

\draw[->, shorten >=1mm] (EDP) -- (ED);

\draw[->, shorten >=1mm] (EDP) -- (EP);


\draw[->, shorten >=1mm] (EDQ) -- (ED);

\draw[->, shorten >=1mm] (EDQ) -- (EQ);

\draw[->, shorten >=1mm] (EPQ) -- (EP);

\draw[->, shorten >=1mm] (EPQ) -- (EQ);

\draw[->] (EMT) -- (EMD);

\draw[->] (EMD) -- (EMP);

\draw[->] (ECT) edge[bend right] (ECP);

\draw[->, shorten <=1.5mm, shorten >=1.5mm] (ECP) edge[bend right] (ECD);

\draw[->] (ENT) -- (END);

\draw[->] (END) -- (ENP);

\draw[->, shorten <=-1.5mm] (ETQ) -- (EDPQ);


\draw[->, shorten >=1.5mm] (EMCN) -- (EMC);

\draw[->, shorten >=1.5mm] (EMCN) -- (EMN);

\draw[->, shorten >=1.5mm] (EMCN) -- (ECN);


\draw[->, shorten >=-0.5mm] (EMCT) -- (EMT);

\draw[->, shorten >=-1mm] (EMCT) -- (ECT);

\draw[->, shorten >=1.5mm] (EMCD) -- (EMC);

\draw[->] (EMCD) -- (EMD);

\draw[->, shorten >=1.5mm] (EMCD) -- (ECD);

\draw[->] (EMND) -- (EMD);

\draw[->, shorten >=1.5mm] (EMND) -- (END);


\draw[->, shorten >=-1mm] (EMNT) -- (EMT);

\draw[->, shorten >=1.5mm] (EMNP) -- (EMN);

\draw[->, shorten >=1.5mm] (EMNP) -- (EMP);

\draw[->, shorten >=1.5mm] (EMNP) -- (ENP);


\draw[->, shorten >=1.5mm] (ECND) -- (ECN);

\draw[->, shorten >=1.5mm] (ECND) -- (ECP);

\draw[->, shorten >=-1.5mm] (ECNT) -- (ECT);


\draw[->, shorten >=1.5mm] (ECDQ) -- (ECQ);

\draw[->, shorten >=1.5mm] (ECDQ) -- (EDQ);


\draw[->] (ECTQ) -- (ECT);


\draw[->, shorten >=1.5mm] (ECPQ) -- (ECP);


\draw[->, shorten >=1.5mm] (EDPQ) -- (EDP);

\draw[->, shorten >=1.5mm] (EDPQ) -- (EPQ);

\draw[->, shorten >=1.5mm] (EDPQ) -- (EDQ);

\draw[->] (EMCT) -- (EMCD);

\draw[->] (EMNT) -- (EMND);

\draw[->] (EMND) -- (EMNP);

\draw[->] (ECNT) -- (ECND);

\draw[->] (ECTQ) edge[bend right] (ECPQ);

\draw[->, shorten >=1.5mm,] (ECPQ) edge[bend right] (ECDQ);

\draw[->] (EMCNT) -- (EMCND);


\draw[->, shorten >=1.5mm] (EMCND) -- (EMCN);
\draw[->] (EMCND) -- (EMCD);
\draw[->] (EMCND) -- (EMND);
\draw[->] (EMCND) -- (ECND);
\draw[->] (EMCNT) -- (EMCT);
\draw[->, shorten >=-1.5mm] (EMCNT) -- (EMNT);
\draw[->] (EMCNT) -- (ECNT);

\end{tikzpicture}
\caption{Implications among $L$-conditions in $\Log$ (equivalent ones are listed in the same nodes, with underlined
representatives).}
\label{fig:pantheon}
\end{figure*}

Based on these results, Figure~\ref{fig:pantheon} depicts the relations between combinations of frame conditions: nodes are (groups of equivalent) conditions (with the canonical representative underlined), and arrows represent logical implications.
Any combination containing the $\mathbf{NQ}$-condition has been omitted, as it leads to inconsistency (Theorem~\ref{prop:implicationsystem}, Point 4). Moreover, due to Theorem~\ref{prop:implicationsystem}, Point 1, any combination that includes the $\mathbf{MQ}$-condition is not considered, since
for any neighbourhood frame $\Fmc$ satisfying such condition and any
$\MLnALC$ concept $C$, we have $\Fmc \models \Box_{i} C \equiv \bot$,
and similarly for formulas,
hence trivialising the modal operators.
Thus, we consider in the remainder the set $\Log$ of 39 non-equivalent combinations shown (as nodes or canonical representatives) in Figure~\ref{fig:pantheon}.

For $\Lvar \in \Log$,
we say that a neighbourhood frame
$\Fmc = ( \Wmc, \{\Nmc_i \}_{i \in J})$,
with $J = \{1, \ldots, n\}$, is an \emph{$L^{n}$ frame} iff its neighbourhood functions $\Nmc_i$, for $i \in J$, satisfy the \emph{$\Lvar$-condition},
obtained by  combining the conditions associated with letters in $\Lvar$.
For a class of neighbourhood frames $\Cmc$, the \emph{satisfiability in $\MLALC{n}$ on} (\emph{varying} or \emph{constant domain}, resp.) \emph{neighbourhood models based on a frame in $\Cmc$} is the problem of deciding whether an $\MLALC{n}$ formula is satisfied in a (varying or constant domain, resp.) neighbourhood model based on a frame in $\Cmc$.
%
Satisfiability in \emph{$\LnALC$ on} (\emph{varying} or \emph{constant domain}, respectively) \emph{neighbourhood models} is satisfiability in $\MLALC{n}$ on (varying or constant domain, resp.) neighbourhood models based on a frame in the class of $L^{n}$ frames.
Finally,  \emph{satisfiability  in $\KnALC{n}$ on}
(\emph{constant domain})
\emph{relational models} is satisfiability in $\MLALC{n}$  on
relational models based on any relational frame.

\begin{table*}
\begin{center}
\footnotesize
\begin{tabular}{l l l}
\toprule
\multirow{2}{*}{\emph{${\mathbf{E}}$-principle}} & $S \models C \equiv D$ implies $S \models \Box_{i} C \equiv \Box_{i} D$. \\
 & $S \models \varphi\leftrightarrow \psi$ implies $S \models \Box_{i} \varphi\leftrightarrow \Box_{i} \psi$. \\
\midrule
\multirow{2}{*}{\emph{${\mathbf{M}}$-principle}} & $S \models C \sqsubseteq D$ implies $S \models \Box_{i} C \sqsubseteq \Box_{i} D$. \\
 & $S \models \varphi\to \psi$ implies $S \models \Box_{i} \varphi\to \Box_{i} \psi$. \\
\midrule
\multirow{2}{*}{\emph{${\mathbf{C}}$-principle}} & $S \models \Box_{i} C \sqcap \Box_{i} D \sqsubseteq \Box_{i} (C \sqcap D)$. \\
 & $S \models \Box_{i} \varphi\land \Box_{i} \psi \to \Box_{i} (\varphi \land \psi)$. \\
\midrule
\multirow{2}{*}{\emph{${\mathbf{N}}$-principle}} & $S \models \top \sqsubseteq C$ implies $S \models \top \sqsubseteq \Box_{i} C$. \\
 & $S \models \varphi$ implies $S \models \Box_{i} \p$. \\
\bottomrule
\end{tabular}
\quad
\begin{tabular}{l l l}
\toprule
\multirow{2}{*}{\emph{${\mathbf{T}}$-principle}} & $S \models \Box_{i} C \sqsubseteq  C $. \\
& $S \models \Box_{i} \varphi \to \varphi$. \\
\midrule
\multirow{2}{*}{\emph{${\mathbf{D}}$-principle}} & $S \models \Box_{i} C \sqsubseteq \Diamond_{i} C $. \\
& $S \models \Box_{i} \varphi \to \Diamond_{i} \varphi$. \\
\midrule
\multirow{2}{*}{\emph{${\mathbf{P}}$-principle}} & $S \models \top \sqsubseteq  \lnot \Box_{i} \bot$. \\
& $S \models \lnot \Box_{i} \mathsf{false}$. \\
\midrule
\multirow{2}{*}{\emph{${\mathbf{Q}}$-principle}} & $S \models \top \sqsubseteq  \lnot \Box_{i} \top$. \\
& $S \models  \lnot \Box_{i} \mathsf{true}$. \\
\bottomrule
\end{tabular}
\end{center}
\caption{Principles over neighbourhood or relational frames and models $S$.}
\label{tab:principles}
\end{table*}

We now study the correspondence between 
conditions presented in Section~\ref{sec:sem} and
  the principles in
Table~\ref{tab:principles},
where
$S$ is either a
(neighbourhood or relational) frame 
or a (neighbourhood or relational) model
and the $L$-principle is obtained by suitably combining the basic principles.
We say that the $L$-principle holds in $S$ if the corresponding
expressions  in Table~\ref{tab:principles} are satisfied.
On the correspondence between the principles in
Table~\ref{tab:principles} and conditions over frames and models,
we have the following results (see e.g.~\cite{Pac}
for the propositional case).

%


\begin{restatable}{proposition}{PropCorresp}\label{prop:corresp}
	Given 
	a neighbourhood frame $\Fmc$, 
the $\Lvar$-principle holds in $\Fmc$ iff $\Fmc$ satisfies the $\Lvar$-condition.
\end{restatable}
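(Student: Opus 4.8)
The plan is to split the biconditional into its basic constituents and prove each one separately by a direct reading of the semantics. By definition, the $\Lvar$-condition is the conjunction, over every agent $i\in J$, of the basic conditions named by the letters of $\Lvar$, and the $\Lvar$-principle holding in $\Fmc$ amounts to the corresponding basic principles all holding in $\Fmc$. Hence it suffices to show, for a fixed agent $i$ and each basic letter $\mathbf{X}\in\{\mathbf{M},\mathbf{C},\mathbf{N},\mathbf{T},\mathbf{D},\mathbf{P},\mathbf{Q}\}$, that $\Nmc_i$ satisfies the $\mathbf{X}$-condition iff the $\mathbf{X}$-principle (for $i$) holds on $\Fmc$; since each such biconditional is self-contained (it refers to no other condition on $\Nmc_i$), conjoining them over $i$ and over the letters of $\Lvar$ yields the claim. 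The $\mathbf{E}$-letter is vacuous: every neighbourhood frame satisfies the $\mathbf{E}$-condition, and the $\mathbf{E}$-principle holds on every frame directly from the truth-set definition of $\B_i$.

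The engine for both directions is truth-set realisability. Modal satisfaction in a world $w$ depends on a model only through the relevant truth set: $\Mmc,w\models\B_i\psi$ iff $\llbracket\psi\rrbracket^{\Mmc}\in\Nmc_i(w)$, $d\in(\B_iC)^{\Imc_w}$ iff $\llbracket C\rrbracket^{\Mmc}_d\in\Nmc_i(w)$, and dually $\Mmc,w\models\D_i\psi$ iff $\Wmc\setminus\llbracket\psi\rrbracket^{\Mmc}\notin\Nmc_i(w)$. Moreover, \emph{any} $\alpha\subseteq\Wmc$ is realisable as a truth set of a model based on $\Fmc$: pick a fresh concept name $A$ and individual name $a$, set $\Delta_v=\{*\}$ for every $v\in\Wmc$ (a constant singleton domain), $a^{\Imc_v}=*$, and $A^{\Imc_v}=\{*\}$ if $v\in\alpha$, $A^{\Imc_v}=\eset$ otherwise; then $\llbracket A(a)\rrbracket^{\Mmc}=\alpha=\llbracket A\rrbracket^{\Mmc}_{*}$. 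Two sets $\alpha,\beta$ are realised simultaneously with two fresh concept names, and an antecedent of the form $\Fmc\models C\sqsubseteq D$ is produced by taking $C\sqsubseteq D$ to be a \emph{logical} validity (e.g.\ $C=A\sqcap B$ with $\llbracket A\rrbracket_{*}=\beta$, $\llbracket B\rrbracket_{*}=\alpha\subseteq\beta$, and $D=A$, so $\llbracket C\rrbracket_{*}=\alpha$, $\llbracket D\rrbracket_{*}=\beta$), and similarly $C=\top$, $\varphi=\psi\lor\neg\psi$ for the $\mathbf{N}$-principle.

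Given this, both directions are routine. For ``condition $\Rightarrow$ principle'' I would fix an arbitrary model $\Mmc$ based on $\Fmc$, a world $w$ (and element $d$ for the concept variant), and verify the entry of Table~\ref{tab:principles} by unfolding the semantics and applying the frame condition once: for $\mathbf{M}$, $\Mmc\models C\sqsubseteq D$ gives $\llbracket C\rrbracket_d\subseteq\llbracket D\rrbracket_d$, so $\mathbf{M}$ lifts $\llbracket C\rrbracket_d\in\Nmc_i(w)$ to $\llbracket D\rrbracket_d\in\Nmc_i(w)$; for $\mathbf{C}$ use $\llbracket C\sqcap D\rrbracket_d=\llbracket C\rrbracket_d\cap\llbracket D\rrbracket_d$; for $\mathbf{T}$ use that $\llbracket C\rrbracket_d\in\Nmc_i(w)$ forces $w\in\llbracket C\rrbracket_d$; for $\mathbf{D}$ use $\llbracket\neg C\rrbracket_d=\Wmc\setminus\llbracket C\rrbracket_d$; and for $\mathbf{N},\mathbf{P},\mathbf{Q}$ use $\llbracket\top\rrbracket_d=\llbracket\mathsf{true}\rrbracket=\Wmc$ and $\llbracket\bot\rrbracket_d=\llbracket\mathsf{false}\rrbracket=\eset$. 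For ``principle $\Rightarrow$ condition'' I would argue contrapositively: if the $\mathbf{X}$-condition fails at some $w$, witnessed by $\alpha$ (and $\beta$ for $\mathbf{M},\mathbf{C}$), realise these sets as truth sets by the construction above; the resulting model based on $\Fmc$ then refutes the matching Table entry at $w$, so the $\mathbf{X}$-principle fails on $\Fmc$.

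The bulk of the work is the uniform bookkeeping of the eight basic cases and checking that the countermodels are genuinely models based on the \emph{given} $\Fmc$. The one genuinely delicate point is the interplay with varying domains in the $\mathbf{N}$- (and, symmetrically, the $\mathbf{Q}$-) principle, since in a general varying-domain model $\llbracket\top\rrbracket^{\Mmc}_d$ is only the set of worlds whose domain contains $d$; this is exactly why the countermodels above are taken with constant singleton domains, so that $\llbracket\top\rrbracket_d=\Wmc$ and the $\mathbf{N}$/$\mathbf{Q}$-conditions are matched on the nose, and why the formula-level half of each principle (which never involves element existence) already suffices to recover the frame condition in the ``principle $\Rightarrow$ condition'' direction.
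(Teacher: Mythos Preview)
Your proposal is correct and follows essentially the same approach as the paper: decompose into the basic letters, prove ``condition $\Rightarrow$ principle'' by unfolding the semantics, and prove ``principle $\Rightarrow$ condition'' contrapositively by realising arbitrary subsets $\alpha,\beta\subseteq\Wmc$ as truth sets in a constant singleton-domain model based on $\Fmc$ (using fresh concept names, and instances like $A\sqcap B\sqsubseteq B$ to make the rule antecedents frame-valid). You are in fact slightly more careful than the paper on one point: you explicitly flag that in a varying-domain model $\llbracket\top\rrbracket^{\Mmc}_d$ need not be all of $\Wmc$, and that this is why the countermodels must be taken with constant singleton domain; the paper's $\mathbf{N}$-case glosses over this and silently assumes $\llbracket\top\rrbracket_d=\Wmc$.
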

%

\begin{restatable}{proposition}{PropValid}
\label{prop:valid}
%
The following statements hold.
\begin{enumerate}
	\item For a (varying or constant domain) neighbourhood model $\Mmc$, we have that if $\Mmc$ satisfies the $\Lvar$-condition, then the $\Lvar$-principle holds in $\Mmc$.
	However, in general, the converse is not true.
	\item For a relational frame $\Fmf$ and a relational model $\Mmf$ based on $\Fmf$, the $\mathbf{EMCN}$-principle holds in $\Mmf$, hence in $\Fmf$.
	Moreover, in $\Mmf$, hence in $\Fmf$,  the $\mathbf{D}$-principle holds iff the
	$\mathbf{P}$-principle holds, and the $\mathbf{Q}$-principle does not hold.
\end{enumerate}
\end{restatable}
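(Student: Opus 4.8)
The plan is to treat the two parts of the statement independently; the only step needing a genuine idea is the failure of the converse in Part~1.

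\mypar{Part 1, forward direction} I would adapt the corresponding direction of Proposition~\ref{prop:corresp}, whose argument is local to a single model: from the $\Lvar$-condition on the frame underlying $\Mmc$ one reads off the basic principle in $\Mmc$ by unfolding the truth-set clause for $\B_{i}$, and combined $\Lvar$-principles follow by conjoining the arguments. Concretely, one uses that $\Mmc \models C \sqsubseteq D$ gives $\llbracket C \rrbracket^{\Mmc}_{d} \subseteq \llbracket D \rrbracket^{\Mmc}_{d}$ for every $d$ (and $\Mmc \models C \equiv D$ gives equality), with the analogous statements for formulas; e.g.\ under the $\mathbf{M}$-condition, $\llbracket C \rrbracket^{\Mmc}_{d} \in \Nmc_{i}(w)$ together with $\llbracket C \rrbracket^{\Mmc}_{d} \subseteq \llbracket D \rrbracket^{\Mmc}_{d}$ yields $\llbracket D \rrbracket^{\Mmc}_{d} \in \Nmc_{i}(w)$, i.e.\ $(\B_{i} C)^{\Imc_{w}} \subseteq (\B_{i} D)^{\Imc_{w}}$, and the $\mathbf{C}$-, $\mathbf{N}$-, $\mathbf{T}$-, $\mathbf{D}$-, $\mathbf{P}$-, $\mathbf{Q}$-conditions are handled the same way.

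\mypar{Part 1, failure of the converse} It suffices to exhibit one model $\Mmc$ satisfying some $\Lvar$-principle but not the $\Lvar$-condition, and I would take $\Lvar = \mathbf{M}$. The obstacle here is that a naive small model does not work: since there are countably many concept names, in any model the truth sets $\llbracket C \rrbracket^{\Mmc}_{d}$ of a fixed $d$ can realise every subset of $\Wmc$, so the $\mathbf{M}$-principle would in fact force the $\mathbf{M}$-condition. The remedy is to make the neighbourhood sets witnessing non-monotonicity invisible to the object language, by using a model with a non-trivial automorphism under which those sets are not invariant. Concretely I would take $n = 1$, $\Wmc = \{w_{1}, w_{2}\}$, constant domain $\Delta = \{d\}$, all concept and role names empty at both worlds, and $\Nmc_{1}(w_{1}) = \{\{w_{1}\}\}$, $\Nmc_{1}(w_{2}) = \{\{w_{2}\}\}$. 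Swapping $w_{1}$ and $w_{2}$ is an automorphism of $\Mmc$, so an easy induction shows $\llbracket C \rrbracket^{\Mmc}_{d} \in \{\emptyset, \Wmc\}$ and $\llbracket \varphi \rrbracket^{\Mmc} \in \{\emptyset, \Wmc\}$ for all concepts $C$ and formulas $\varphi$; since neither set is $\{w_{1}\}$ or $\{w_{2}\}$, we get $(\B_{1} C)^{\Imc_{w}} = \emptyset$ and $\Mmc, w \not\models \B_{1} \varphi$ everywhere, so the $\mathbf{M}$-principle holds (vacuously) in $\Mmc$, whereas $\{w_{1}\} \in \Nmc_{1}(w_{1})$, $\{w_{1}\} \subseteq \Wmc$ and $\Wmc \notin \Nmc_{1}(w_{1})$ show the $\mathbf{M}$-condition fails. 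This is the step I expect to be the real work.

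\mypar{Part 2} For a relational model $\Mmf$ I would verify the $\mathbf{EMCN}$-principle directly from $(\B_{i} C)^{I_{w}} = \{ d \in \Delta \mid \forall v\ (w R_{i} v \Rightarrow d \in C^{I_{v}}) \}$: preservation of equivalences, monotonicity, and closure under conjunction are immediate, and $C^{I_{v}} = \Delta$ for all $v$ yields $(\B_{i} C)^{I_{w}} = \Delta$ for all $w$; the formula versions are identical, and validity in every model based on a frame $\Fmf$ gives validity on $\Fmf$. For the $\mathbf{D}$/$\mathbf{P}$ equivalence I would note that $(\B_{i} \bot)^{I_{w}} = \Delta$ if $w$ has no $R_{i}$-successor and $\emptyset$ otherwise, and $(\D_{i} \top)^{I_{w}} = \emptyset$ precisely when $w$ has no $R_{i}$-successor: instantiating the $\mathbf{D}$-principle at $C = \bot$ gives the $\mathbf{P}$-principle, while the $\mathbf{P}$-principle (equivalently, seriality of $R_{i}$) gives $\B_{i} C \sqsubseteq \D_{i} C$ for all $C$; thus the two principles are equivalent in $\Mmf$, and since seriality is a frame property the equivalence transfers to $\Fmf$. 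Finally $(\B_{i} \top)^{I_{w}} = \Delta$ for every $w$, so $\top \sqsubseteq \lnot \B_{i} \top$ would force $\Delta = \emptyset$; hence the $\mathbf{Q}$-principle fails in $\Mmf$ and, since $\Mmf$ is based on $\Fmf$, is not valid on $\Fmf$.
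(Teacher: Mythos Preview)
Your proposal is correct. The overall structure matches the paper's proof: Part~1 forward direction is deferred to the model-local direction of Proposition~\ref{prop:corresp}, and Part~2 is handled by direct verification plus the observation that both the $\mathbf{D}$- and the $\mathbf{P}$-principle are equivalent to seriality of $R_i$, with $\mathbf{Q}$ failing because $\B_i\top$ is always full.

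The one genuine difference is the counterexample for the converse in Part~1. The paper chooses $\Lvar=\mathbf{T}$ and the model $\Wmc=\{w,v\}$, $\Delta=\{d\}$, $\Imc_w=\Imc_v$, with $\Nmc_i(w)=\{\{v\},\Wmc\}$ and $\Nmc_i(v)=\{\{w\},\Wmc\}$; it then proves the same lemma you do (all truth sets are $\emptyset$ or $\Wmc$) by direct induction, and concludes that whenever $d\in(\B_i C)^{\Imc_w}$ the truth set must be $\Wmc$, whence $d\in C^{\Imc_w}$. Your version with $\Lvar=\mathbf{M}$ and $\Nmc_1(w_j)=\{\{w_j\}\}$ uses the same ``only $\emptyset$ and $\Wmc$ are definable'' lemma (nicely motivated via the swap automorphism rather than a bare induction), but arranges that neither of these sets lies in any $\Nmc_1(w_j)$, so $\B_1$ is everywhere empty and the $\mathbf{M}$-principle holds vacuously. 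Both arguments rest on the same idea---identical local interpretations forcing symmetric truth sets---and yours has the mild advantage that the verification of the principle is trivial rather than requiring a short case analysis.
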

%



\label{sec:reasonvardom}

\section{Tableaux for Formula Satisfiability}
\label{sec:tableaux}


We provide terminating, sound, and complete tableau algorithms to check satisfiability of formulas in varying domain neighbourhood models. The notation partly adheres to that of~\cite{GabEtAl03},
while the model construction in the soundness proof is based on the strategy of~\cite{DalHyp}.
In this section, we use concepts and formulas in \emph{negation normal form} 
(\emph{NNF}) and, for this reason, we consider all the logical connectives $\sqcup, \lor, \forall, \Diamond$ as primitive, rather than defined. 
For a concept or formula $\gamma$, we denote by $\dot{\lnot}\gamma$ the negation of $\gamma$ put in {NNF},
 defined as follows:
a concept is in \emph{NNF} if negation occurs in it only in front of concept names; a formula is in \emph{NNF} if all concepts in it are in NNF and negation occurs in the formula only in front of CIs or assertions of the form $r(a,b)$ (regarding
assertions of the form $A(a)$, we recall that a formula $\lnot \psi$, with $\psi = C(a)$, is equivalent to the assertion $D(a)$, with $D = \lnot C$).
Given an $\MLALC{n}$ formula $\p$, we assume without loss of generality that $\p$ is in NNF (using De Morgan laws) 
and it contains CIs only of the form $\top \sqsubseteq C$,
since $C \sqsubseteq D$ is equivalent to $\top \sqsubseteq \lnot C \sqcup D$). 
We denote by $\con(\p)$ and $\for(\p)$ the set of subconcepts  and subformulas of $\p$, respectively, and then we set
$\conneg(\p) = \con(\p) \cup \{ \dot{\lnot}C \mid C \in \con(\p) \} \cup  \{ \top \}$ and $\forneg(\p) =  \for(\p) \cup \{ \dot{\lnot}\psi \mid \psi \in \for(\p) \}$.
The sets $\rol(\p)$ and $\ind(\p)$ are, respectively, the sets of role names and of  individual names  occurring in $\p$.
Let 
$\fg(\p) = \forneg(\p) \cup \conneg(\p) \cup \rol(\p) \cup \ind(\p)$ be the \emph{fragment induced by $\varphi$}.


Moreover, let $\NV$ be a countable set of \emph{variables}, denoted by the letters $u, v$.
The \emph{terms for $\p$}, denoted by the letters $x, y$, are either individual names in $\ind(\p)$ or variables in $\NV$.
We assume that the set of terms for $\p$ is
strictly well-ordered by the relation $<$.
In addition, let $\mathsf{N_{L}}$ be a countable set of \emph{labels}.
Given an $\MLnALC$ formula $\p$ and a label $n \in \mathsf{N_{L}}$, an \emph{$n$-labelled constraint for $\p$} takes the form $n: \psi$, or $n: C(x)$, or $n: r(x, y)$, where $\psi \in \forneg(\p)$,
$x, y$ are terms for $\p$,
$C \in \conneg(\p)$, 
and $r \in \rol(\p)$.
For every $n \in \mathsf{N_{L}}$, an \emph{$n$-labelled constraint system for $\p$} is a set $S_{n}$ of $n$-labelled constraints for $\p$.
A \emph{labelled constraint for $\p$} is an $n$-labelled constraint for $\p$, for some $n \in \mathsf{N_{L}}$, and similarly for a \emph{labelled constraint system for $\p$}.
A
\emph{completion set $\T$ for $\p$} is a non-empty
union
of labelled constraint systems for $\p$,
and we set $\mathsf{L}_{\T} = \{ n \in \mathsf{N_{L}} \mid S_{n} \subseteq \T \}$.

About terms, we adopt the following terminology.
A {{term}} $x$ \emph{occurs in $S_{n}$} if $S_{n}$ contains $n$-labelled constraints of the form $n: C(x)$ or $n: r(\tau,\tau')$, where $\tau = x$, or $\tau' = x$, and $n \in \mathsf{N_{L}}$.
In addition, a variable $u$ is said to be \emph{fresh for $S_{n}$} if $u$ does not occur in $S_{n}$.
(These notions can be used with respect to $\T$, whenever $S_{n} \subseteq \T$).
Finally, given variables $u, v$ in an $n$-labelled constraint system $S_{n}$, we say that $u$ is \emph{blocked by $v$ in $S_{n}$} if $u > v$ and $\{ C \mid n : C(u) \in S_{n} \} \subseteq \{ C \mid n : C(v) \in S_{n} \}$.

A completion set $\T$ contains a \emph{clash} if,
for some $m \in \mathsf{N_{L}}$,
concept $C$, role $r$, and terms $x, y$, one of the following holds:
$\{m: (\top \sqsubseteq C), m: \lnot (\top \sqsubseteq C) \} \subseteq \T$;
or
$\{m: A(x), m: \lnot A(x)\} \subseteq \T$;
or
$\{m: r(x,y), m: \lnot r(x,y)\} \subseteq \T$.
A completion set that does not contain a clash is \emph{clash-free}.
%

For every $\mathit{L} \in \Log$, we associate to $\Lvar$ the set of \emph{$\LnALC$-rules} from Figure~\ref{fig:rules} (bottom part) containing
$\mathsf{R}_{\land}$,
$\mathsf{R}_{\sqcap}$,
$\mathsf{R}_{\lor}$,
$\mathsf{R}_{\sqcup}$,
$\mathsf{R}_{\exists}$,
$\mathsf{R}_{\forall}$,
$\mathsf{R}_{\sqsubseteq}$,
$\mathsf{R}_{\not\sqsubseteq}$,
$\mathsf{R}_{\mathit{L}}$,
and $\mathsf{R}_{\mathit{L}\mathbf{X}}$, for every $\mathbf{X}\in\{\mathbf{N,T,P,Q,D}\}$ such that $\mathbf{X}\in\Lvar$. 
Given $\mathit{L} \in \Log$, a completion set $\T$ is $\LnALC$-\emph{complete} if no 
$\LnALC$-rule is applicable to $\T$,
where $\gamma_{j}$ is either $\psi_{j} \in \forneg(\p)$ or $C_{j}(x_{j})$, with $C_{j} \in \conneg(\p)$, for $j = 1, \ldots, k$, and $\delta$ is either $\chi \in \forneg(\p)$ or $D(y)$, with $D \in \conneg(\p)$,
with respect to the \emph{application conditions} associated to each $\LnALC$-rule
from Figure~\ref{fig:rules} (top part).
%
\begin{figure*}
\scriptsize
\centering

Application conditions

\begin{enumerate}[leftmargin=*, align=left]
	\item[$(\mathsf{R}_{\land})$] $\{n :\psi,n : \chi\} \not\subseteq \T$;
	\item[$(\mathsf{R}_{\sqcap})$] $\{n : C(x),n : D(x)\} \not\subseteq \T$;
\item[$(\mathsf{R}_{\lor})$] $\{n :\psi, n : \chi \} \cap \T = \emptyset$;
	\item[$(\mathsf{R}_{\sqcup})$] $\{n : C(x),n : D(x)\} \cap \T = \emptyset$;
	\item[$(\mathsf{R}_{\exists})$]
	$x$ is not blocked by any variable in $S_{n}$, there is no $y$ such that  $\{n : r(x,y), n : C(y)\} \subseteq \T$, and $v$ is the $<$-minimal
	variable  fresh for $S_{n}$;
	\item[$(\mathsf{R}_{\forall})$] $n : C(y) \notin\T$;
	\item[$(\mathsf{R}_{\sqsubseteq})$]
	either
	$x$ occurs in
	$S_{n}$
	and $n : C(x) \notin\T$;
or no term occurs in $S_{n}$ and $x$ is the $<$-minimal variable fresh for $S_{n}$;
\item[$(\mathsf{R}_{\not\sqsubseteq})$]
there is no $x$ such that $n : \dnot C(x) \in \T$ and $v$ is
the $<$-minimal
variable 
fresh
for $S_{n}$;
\item[$(\mathsf{R}_{\mathit{L}})$]
$m$ is fresh
for $\T$, and there is no $o\in \mathsf{N_{L}}$ such that
$\{ o: \gamma_1, \ldots, o: \gamma_k, o: \delta\}\subseteq \T$, or $\{ o: \dot{\lnot}\gamma_j, o: \dot{\lnot}\delta\}\subseteq \T$, for some $j\leq k$;
\item[$(\mathsf{R}_{\mathit{L}\mathbf{N}})$]
$m$ is fresh for $\T$, and there is no $o\in \mathsf{N_{L}}$ such that
$o: \gamma\in \T$;
\item[$(\mathsf{R}_{\mathit{L}\mathbf{T}})$]
$n: \gamma\not\in \T$;
\item[$(\mathsf{R}_{\mathit{L}\mathbf{P}})$]
$m$ is fresh for $\T$, and there is no $o\in \mathsf{N_{L}}$ such that
$\{ o: \gamma_1, \ldots, o: \gamma_k\}\subseteq \T$;
\item[$(\mathsf{R}_{\mathit{L}\mathbf{Q}})$]
$m$ is fresh for $\T$, and there is no $o\in \mathsf{N_{L}}$  such that
$o: \dot{\lnot}\gamma_j\in \T$, for some $j\leq k$;
\item[$(\mathsf{R}_{\mathit{L}\mathbf{D}})$]
$m$ is fresh for $\T$, and there is no $o\in \mathsf{N_{L}}$ such that
$\{ o: \gamma_1, \ldots, o: \gamma_k, o: \delta_1, \ldots, o: \delta_k\}\subseteq \T$, or $\{ o: \dot{\lnot}\gamma_j, o: \dot{\lnot}\delta_\ell\}\subseteq \T$, for some $j\leq k$, $\ell\leq h$.
\end{enumerate}
%

Rules

\bigskip

\ax{$n: \psi \land \chi$}
\solidLine
\llab{($\mathsf{R}_{\land}$)}
\uinf{$n: \psi$ {,} $n:\chi$}
\disp
\hfill
\ax{$n: \psi \lor \chi$}
\solidLine
\llab{($\mathsf{R}_{\lor}$)}
\uinf{\begin{tabular}{c?c}
$n:\psi$ & $n:\chi$ \\
\end{tabular}}
\disp
\hfill
\ax{$n: C \sqcap D (x)$}
\solidLine
\llab{($\mathsf{R}_{\sqcap}$)}
\uinf{$n: C(x)${,}  $n: D(x)$}
\disp
\hfill
\ax{$n: C \sqcup D (x)$}
\solidLine
\llab{($\mathsf{R}_{\sqcup}$)}
\uinf{\begin{tabular}{c?c}
$n: C(x)$ & $n: D(x)$ \\
\end{tabular}}
\disp

\vspace{0.3cm}
\ax{$n: \exists r.C(x)$}
\solidLine
\llab{($\mathsf{R}_{\exists}$)}
\uinf{ $n: r (x , v)${,} $n: C(v)$ }
\disp
\hspace{1cm}
\ax{$n: \forall r.C(x)${,} $n: r (x , y)$}
\solidLine
\llab{($\mathsf{R}_{\forall}$)}
\uinf{$n: C(y)$}
\disp
\\
\vspace{0.3cm}
\ax{$n: \top \sqsubseteq C$}
\solidLine
\llab{($\mathsf{R}_{\sqsubseteq}$)}
\uinf{$n: C(x)$}
\disp
\hspace{2.5cm}
\ax{$n: \lnot (\top \sqsubseteq C)$}
\solidLine
\llab{($\mathsf{R}_{\not\sqsubseteq}$)}
\uinf{ $n:\dot{\lnot}C(v)$}
\disp

\vspace{0.3cm}
\ax{$n: \Box_i\gamma_1 ${,} $ \ldots ${,} $ n: \Box_i\gamma_k ${,} $ n: \Diamond_{i}\delta$}
\solidLine
\llab{($\mathsf{R}_{\mathit{L}}$)}
\uinf{\begin{tabular}{c?c?c?c}
	\vspace{-0.2cm}
	$m: \gamma_1 ${,} $ \ldots ${,} $ m: \gamma_k $  {,} $ m: \delta$ & 
	$ m: \dot{\lnot}\gamma_1 ${,} $ m: \dot{\lnot}\delta$ &
	$\ \dots$ &
	$ m: \dot{\lnot}\gamma_k ${,} $ m: \dot{\lnot}\delta$ \\
	\multicolumn{4}{c}{\ \quad\qquad\qquad\qquad\qquad\qquad\  $\underbrace{{\color{white}{\qquad\qquad\qquad\qquad - - \qquad\qquad\qquad\quad\ \  }}}_{\text{if } \mathbf{M} \not \in L}$} 
\end{tabular}}
\disp
\hfill
\ax{$n: \Diamond_i\gamma$}
\solidLine
\llab{($\mathsf{R}_{\mathit{L}\mathbf{N}}$)}
\uinf{$m: \gamma$}
\disp

\vspace{0.3cm}
\ax{$n: \Box_i\gamma$}
\solidLine
\llab{($\mathsf{R}_{\mathit{L}\mathbf{T}}$)}
\uinf{$n: \gamma$}
\disp
\hfill
\ax{$n: \Box_i\gamma_1${,} $\ldots${,} $ n: \Box_i\gamma_k$}
\solidLine
\llab{($\mathsf{R}_{\mathit{L}\mathbf{P}}$)}
\uinf{$ m: \gamma_1${,} $ \ldots${,} $ m: \gamma_k$}
\disp
\hfill
\ax{$n: \Box_i\gamma_1${,} $\ldots${,} $ n: \Box_i\gamma_k$}
\solidLine
\llab{($\mathsf{R}_{\mathit{L}\mathbf{Q}}$)}
\uinf{\begin{tabular}{c?c?c}
	$m: \dot{\lnot} \gamma_1$ &
	$\ \dots$ &
	$ m: \dot{\lnot}\gamma_k$ \\
\end{tabular}}
\disp

\vspace{0.3cm}
\ax{$n: \Box_i\gamma_1$  {,} $\ldots$  {,} $ n: \Box_i\gamma_k$  {,} $ n: \Box_i\delta_1$  {,} $\ldots$  {,} $ n: \Box_i\delta_h$}
\solidLine
\llab{($\mathsf{R}_{\mathit{L}\mathbf{D}}$)}
\uinf{\begin{tabular}{c?c?c?c?c?c?c?c}
	$ m: \gamma_1${,} $ \ldots${,} $ m: \gamma_k${,} & 
	$m: \dot{\lnot}\gamma_1${,} &
	$\ \dots$ &
	$ m: \dot{\lnot}\gamma_k ${,} &
	$\ \dots$ &
	$ m: \dot{\lnot}\gamma_1 ${,} & 
	$\ \dots$ &
	$ m: \dot{\lnot}\gamma_k ${,}
	\\
	\vspace{-0.2cm}
	$ m: \delta_1${,} $\ldots${,} $ m: \delta_h$ & 
	$ m: \dot{\lnot}\delta_{1}$ &
	&
	$ m: \dot{\lnot}\delta_1$ &
	&
	$ m: \dot{\lnot}\delta_h$ & 
	&
	$ m: \dot{\lnot}\delta_h$\\
	\multicolumn{8}{c}{\qquad\qquad\qquad\qquad\ \ \ $\underbrace{{\color{white}{\qquad\qquad\qquad\qquad - - \qquad\qquad\qquad\qquad\qquad\qquad\qquad }}}_{\text{if } \mathbf{M} \not \in L}$} 
\end{tabular}}
\disp

\caption{\label{fig:rules} $\LnALC$-rules,
where $k, h \geq 1$ if $\mathbf{C}\in\mathit{L}$ and $k = h = 1$ if $\mathbf{C}\not\in\mathit{L}$.
In the rules $\mathsf{R}_{\mathit{L}}$ and $\mathsf{R}_{\mathit{L}\mathbf{D}}$,
the number of possible 
expansions depend on whether $\mathbf{M}\in\mathit{L}$:
if $\mathbf{M}\in\mathit{L}$ only the first expansion is possible, 
if $\mathbf{M}\notin\mathit{L}$ all other expansions are also possible.
}
\end{figure*}
%
%
The $\LnALC$-rules essentially state how to extend a completion set on the basis of the information contained in it.
Branching rules entail a \emph{non-deterministic choice} in the  completion set expansion.


For each $\Lvar\in \Log$,
we now present a
tableau-based non-deterministic decision procedure 
for the $\LnALC$ formula satisfiability problem on varying domain neighbourhood models,
based on
Algorithm~\ref{alg:tableau}
(simply referred to as
\emph{$\LnALC$ tableau algorithm}).
We have that a formula $\p$ is $\LnALC$ satisfiable if and only if there exists at least one execution of the
$\LnALC$ tableau algorithm
that constructs an $\LnALC$-complete and clash-free completion set for $\p$.
This non-deterministic algorithm
gives priority to non-generating $\LnALC$-rules,
i.e., those that do not introduce new variables or labels,
with respect to generating ones,
so to minimise the size of the completion set constructed by its application,
and
terminates in exponential time 
for every formula $\p$.
Thus, we obtain the following.


\begin{theorem}
	\label{thm:upperbound}
	Satisfiability in $\LnALC$  on varying domain neighbourhood models is decidable in $\NExpTime$.
\end{theorem}

\begin{algorithm}[t]
  \KwIn{the initial completion set $\T := \{0 : \p\}$ of an $\MLALC{n}$ formula $\p$ in NNF.}
  \KwOut{a completion set for $\p$, extending the initial one, that either contains a clash, or is complete and clash-free.}
    \BlankLine
  \While{$\T$ is clash-free and not $\LnALC$-complete}{
  	\uIf{a rule $\mathsf{R} \in\{\mathsf{R}_{\land}, \mathsf{R}_{\lor}, \mathsf{R}_{\sqcap}, \mathsf{R}_{\sqcup}, \mathsf{R}_{\forall}, \mathsf{R}_{\sqsubseteq}, \mathsf{R}_{\mathit{L}\mathbf{T}} \}$ is applicable to $\T$}{apply $\mathsf{R}$ to $\T$\;}
      \uElseIf{a rule $\mathsf{R} \in\{\mathsf{R}_{\exists},  \mathsf{R}_{\mathit{L}}, \mathsf{R}_{\mathit{L}\mathbf{N}}, \mathsf{R}_{\mathit{L}\mathbf{P}}, \mathsf{R}_{\mathit{L}\mathbf{Q}}, \mathsf{R}_{\mathit{L}\mathbf{D}} \}$ is applicable to $\T$}{apply $\mathsf{R}$ to $\T$\;}
    }
\caption{
$\LnALC$ tableau algorithm
on varying domain neighbourhood models
for $\p$
 }
\label{alg:tableau}
\end{algorithm}

%
%
%
%

%

As an immediate consequence of the 
correctness of the tableau we also obtain 
a (constructive) proof of the following kind of \emph{exponential model property}.
\begin{restatable}{corollary}{Fmp}
Every $\LnALC$ satisfiable formula $\p$ has a model 
with at most $p(|\fg(\p)|)$ worlds,
if $\mathbf{C}\notin\Lvar$, 
and at most $2^{q(|\fg(\p)|)}$) worlds, 
if $\mathbf{C}\in\Lvar$,
each of them having a domain with at most 
$2^{r(|\fg(\p)|)}$ elements, with $p$, $q$, $r$ polynomial functions.
\end{restatable}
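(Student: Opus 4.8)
The plan is to read the model property off the proof of correctness of the $\LnALC$ tableau algorithm, rather than re-prove anything from scratch. By the soundness direction of correctness, whenever $\varphi$ is $\LnALC$ satisfiable there is a run of the algorithm producing an $\LnALC$-complete, clash-free completion set $\T$, and the canonical model construction in that soundness proof turns $\T$ into a model $\Mmc_{\T}$ of $\varphi$: the worlds are the labels in $\mathsf{L}_{\T}$, the domain of world $n$ consists of the terms occurring in $S_n$, and the neighbourhood functions, concept, role and individual interpretations are defined from the labelled constraints in the usual way, with the $L$-conditions enforced by closing under the relevant set-theoretic operations (monotone closure for $\mathbf{M}$, finite-intersection closure for $\mathbf{C}$, etc.). So the whole task reduces to bounding $|\mathsf{L}_{\T}|$ and, for each label $n$, the number of terms occurring in $S_n$, as functions of $|\fg(\varphi)|$.

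First I would bound the number of distinct terms per label. Every constraint $n:C(x)$ has $C \in \conneg(\varphi)$, so a term $x$ occurring in $S_n$ is characterised (up to what matters for blocking) by the subset $\{C \mid n:C(x)\in S_n\} \subseteq \conneg(\varphi)$. Variables are introduced only by $\mathsf{R}_{\exists}$ and $\mathsf{R}_{\not\sqsubseteq}$, and the blocking condition on $\mathsf{R}_{\exists}$ (a variable $u$ is blocked by an earlier $v$ when its concept set is included in $v$'s) together with the priority given to non-generating rules guarantees that the algorithm never creates two unblocked variables with the same concept set; hence the number of terms occurring in any $S_n$ is at most $|\ind(\varphi)| + 2^{|\conneg(\varphi)|}$, i.e. $2^{r(|\fg(\varphi)|)}$ for a suitable polynomial $r$. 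This gives the bound on domain size, uniformly for all logics in $\Log$.

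Next I would bound the number of labels. New labels are produced only by the modal rules $\mathsf{R}_{L},\mathsf{R}_{L\mathbf{N}},\mathsf{R}_{L\mathbf{P}},\mathsf{R}_{L\mathbf{Q}},\mathsf{R}_{L\mathbf{D}}$, each firing from a tuple of $\Box_i$- (and $\Diamond_i$-) constraints at a label $n$. Their application conditions are exactly ``do not create a label whose relevant contents duplicate an existing label'': for $\mathsf{R}_{L}$, no $o$ already realises the pattern $\{o:\gamma_1,\dots,o:\gamma_k,o:\delta\}$ or $\{o:\dot\lnot\gamma_j,o:\dot\lnot\delta\}$, and similarly for the others. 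When $\mathbf{C}\notin\Lvar$ we have $k=h=1$, so each such pattern is determined by a pair $(\gamma,\delta)$ (or a single $\gamma$) from $\fg(\varphi)$ together with the agent $i\in J$; there are only polynomially many such patterns, and each modal rule is applied at most once per pattern per originating label — but since every originating label itself is reached only by realising one such pattern, a short induction (or a direct counting argument: labels are in bijection with the distinct ``reasons'' for their creation, and there are polynomially many reasons) shows $|\mathsf{L}_{\T}| \le p(|\fg(\varphi)|)$. When $\mathbf{C}\in\Lvar$, the patterns range over \emph{subsets} $\{\gamma_1,\dots,\gamma_k\}$ of $\fg(\varphi)$, so there are exponentially many of them and the same argument yields $|\mathsf{L}_{\T}| \le 2^{q(|\fg(\varphi)|)}$. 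Combining the two bounds gives the stated world counts, and since $\Mmc_{\T}\models\varphi$ this is the claimed exponential model property.

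The main obstacle, and the step deserving the most care, is the label-counting argument — specifically making precise why the ``freshness plus no-duplicate-pattern'' side conditions force the number of labels to be (poly or single-exponential) bounded rather than merely finite. The subtlety is that a modal rule at label $n$ may be blocked by a label $o$ that was itself created much earlier for a different originating label, so one cannot argue label-by-label in isolation; one needs a global invariant, e.g. that the map sending each label (other than $0$) to the pattern that was used (and could not be reused) when it was created is injective. Granting that invariant — which follows from the application conditions since re-firing a rule that would recreate an already-realised pattern is forbidden — the count of labels is at most $1$ plus the number of distinct patterns, and the case split on $\mathbf{C}\in\Lvar$ versus $\mathbf{C}\notin\Lvar$ delivers exactly the polynomial versus single-exponential dichotomy in the statement. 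The termination-in-exponential-time claim already asserted for the algorithm provides an a priori guarantee that $\T$ is finite, so the remaining work is purely to sharpen ``finite'' to the explicit bounds above.
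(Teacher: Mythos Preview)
Your approach matches the paper's: obtain a complete clash-free completion set $\T$ (this is the \emph{completeness} theorem, not soundness---your opening sentence swaps the two directions), extract a model from $\T$ via the soundness construction with worlds $\mathsf{L}_{\T}$ and domains the terms occurring in each $S_n$, then bound both quantities by the termination analysis. The paper's proof simply cites the two claims established inside the termination theorem (the local bound on the number of $n$-labelled constraints and the global bound on $|\mathsf{L}_{\T}|$, with the $\mathbf{C}\in L$ versus $\mathbf{C}\notin L$ case split) rather than re-deriving them as you do, but the content and the counting argument are the same.
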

\section{Fragments without Modalised Concepts}
\label{sec:fragvardom}
Here we study fragments of $\MLnALC$ without modalised concepts.
An \emph{$\MLnALCg$ formula} is defined similarly to the $\MLnALC$ case,
by disallowing modalised concepts.
Given
$\mathit{L} \in \Log$,
 \emph{satisfiability in $\LnALCg$ on varying} 
  \emph{domain neighbourhood models} is $\MLnALCg$ satisfiability 
  on varying 
  domain neighbourhood models based on neighbourhood frames in the respective class for $\mathit{L}$.
%
An \emph{$\MLn$ formula}, instead, is defined analogously to $\MLnALCg$, except that we build it from the standard propositional (rather than $\ALC$) language over a countable set of \emph{propositional letters} $\mathsf{N_{P}}$, disjoint from \NC, \NR, and \NI.
The semantics of
$\MLn$ formulas
is given in terms of \emph{propositional 
neighbourhood models} (or simply \emph{models}) $\Mmc^{\sf P} = (\Wmc, \{ \Nmc_{i} \}_{i \in J}, \Vmc)$,
where $(\Wmc, \{ \Nmc_{i} \}_{i \in J})$ is a neighbourhood frame,
with $J = \{ 1, \ldots, n \}$ in the following,
and $\Vmc: \NPr \rightarrow 2^{\Wmc}$ is a function 
mapping propositional letters
to
sets of worlds
(see~\cite{Che,Var2}).
\emph{Satisfiability in $\ensuremath{\smash{\mathit{L}^{n}}}$} is satisfiability in $\MLn$ on propositional neighbourhood models based on neighbourhood frames in the respective class for $\mathit{L}$.
A propositional neighbourhood model based on a neighbourhood frame in the respective class for $\mathit{L}$ is called \emph{$\mathit{L}^{n}$ model}.

\newcommand{\setsymbols}{\ensuremath{\Sigma}\xspace}

We
prove
tight complexity results 
for  
$\LnALCg$ satisfiability on varying domain neighbourhood models, where $L\in \Log$,
using the notion of 
a propositional abstraction of a formula 
(as in, e.g.,~\cite{Baader:2012:LOD:2287718.2287721}).
Here, one can separate the satisfiability test into two parts, 
one for the description logic dimension and 
one for the \neighborhood frame dimension.
%
For an $\MLnALCg$ formula $\varphi$, the 
\emph{propositional abstraction} $\prop{\varphi}$ is  
the result of replacing each \ALC atom $\pi$ in $\varphi$ by 
a propositional variable $p_{\pi} \in \NPr$.
Define the set $\setsymbols_\varphi = \{p_{\elaxiom}\in\NPr\mid \elaxiom \text{ is an \ALC atom in }
\varphi \}$.
%
A (propositional neighbourhood) $L^{n}$ model 
$\propmodel = (\Wmc, \{ \Nmc_{i} \}_{i \in J}, \Vmc)$
is \emph{$\setsymbols_\varphi$-consistent}
if,
for all $w\in \Wmc$,
the following \ALC formula is satisfiable:
$
\alcform = \bigwedge_{p_{\elaxiom}\in \formtp{\varphi}} {\elaxiom} \ \wedge \bigwedge_{p_{\elaxiom} \in
	\setsymbols_\varphi \setminus \formtp{\varphi}}
\neg {\elaxiom},
$
where
$\formtp{\varphi} = \{p_{\elaxiom} \in \setsymbols_\varphi \mid w\in \Vmc(p_{\elaxiom})\}$.
%
We  formalise the connection between 
$\LnALCg$ satisfiable formulas and their propositional abstractions 
with the following lemma.


\begin{restatable}{lemma}{LemmapropL}\label{lem:propL}
A
formula $\varphi$ is
$\LnALCg$
satisfiable
on varying domain neighbourhood models
iff
$\prop{\varphi}$ is satisfied in a $\setsymbols_\varphi$-consistent $L^{n}$ model.  
\end{restatable}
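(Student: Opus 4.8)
The plan is to exploit the defining feature of the fragment $\MLnALCg$: since no modal operator is applied to a concept, the truth value of any $\ALC$ atom $\pi$ at a world $v$ of a neighbourhood model depends only on the local interpretation $\Imc_v$, and not at all on the neighbourhood functions or on the interpretations at other worlds. This means that the truth sets $\llbracket \pi \rrbracket^{\Mmc}$ of $\ALC$ atoms behave exactly like a propositional valuation, so the propositional abstraction $\prop{\varphi}$ faithfully records the modal/Boolean structure of $\varphi$, while the description-logic content at each world is isolated in the single $\ALC$ formula $\alcform$. The two directions of the lemma are then obtained by transferring models back and forth over the same underlying neighbourhood frame.

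For the left-to-right direction, I would take a varying domain $L^{n}$ neighbourhood model $\Mmc = (\Wmc, \{\Nmc_i\}_{i \in J}, \Imc)$ and a world $w_0$ with $\Mmc, w_0 \models \varphi$, and define $\propmodel = (\Wmc, \{\Nmc_i\}_{i \in J}, \Vmc)$ over the very same frame by setting $\Vmc(p_\pi) = \{ v \in \Wmc \mid \Mmc, v \models \pi \}$ for each $\ALC$ atom $\pi$ occurring in $\varphi$ (and arbitrarily on the remaining propositional letters). A routine induction on the structure of the subformulas $\psi$ of $\varphi$, using that $\prop{\cdot}$ commutes with the Boolean connectives and with each $\Box_i$ and that the frame is shared, shows $\propmodel, v \models \prop{\psi}$ iff $\Mmc, v \models \psi$ for every $v \in \Wmc$; in particular $\propmodel, w_0 \models \prop{\varphi}$. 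For $\setsymbols_\varphi$-consistency, fix $w \in \Wmc$: by construction $\formtp{\varphi}$ is precisely the set of $p_\pi$ with $\Imc_w \models \pi$, so the single interpretation $\Imc_w$ satisfies $\alcform$, which is therefore satisfiable. Since $\Mmc$ is based on an $L^{n}$ frame, $\propmodel$ is a $\setsymbols_\varphi$-consistent $L^{n}$ model.

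For the right-to-left direction, I would start from a $\setsymbols_\varphi$-consistent $L^{n}$ model $\propmodel = (\Wmc, \{\Nmc_i\}_{i \in J}, \Vmc)$ with $\propmodel, w_0 \models \prop{\varphi}$. For each $w \in \Wmc$, $\setsymbols_\varphi$-consistency gives an $\ALC$ interpretation $\Imc_w$ with $\Imc_w \models \alcform$, and from the shape of $\alcform$ we read off $\Imc_w \models \pi$ iff $p_\pi \in \formtp{\varphi}$ iff $w \in \Vmc(p_\pi)$, for every $\ALC$ atom $\pi$ in $\varphi$. Collecting these interpretations into $\Imc \colon w \mapsto \Imc_w$ yields a varying domain neighbourhood model $\Mmc = (\Wmc, \{\Nmc_i\}_{i \in J}, \Imc)$ over the same ($L^{n}$) frame; working with varying domains is what makes this step unproblematic, as the interpretations $\Imc_w$ need not share a domain or agree on individual names. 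The same induction as before, now read in the other direction, gives $\Mmc, v \models \psi$ iff $\propmodel, v \models \prop{\psi}$ for all subformulas $\psi$, hence $\Mmc, w_0 \models \varphi$, so $\varphi$ is $\LnALCg$ satisfiable on varying domain neighbourhood models.

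The argument is largely bookkeeping; the only point that needs to be stated with care — and the one on which the whole separation rests — is the observation that, in the absence of modalised concepts, the satisfaction of an $\ALC$ atom at a world is a function of the local interpretation alone. This is what makes the propositional abstraction sound and complete and what pins down $\setsymbols_\varphi$-consistency as exactly the right interface condition linking the description-logic satisfiability test (one $\alcform$ per world) with the purely modal satisfiability test (for $\prop{\varphi}$ on $L^{n}$ models). In writing up the induction I would make explicit the identities $\prop{\neg\psi} = \neg\prop{\psi}$, $\prop{\psi_1 \land \psi_2} = \prop{\psi_1} \land \prop{\psi_2}$, and $\prop{\Box_i \psi} = \Box_i \prop{\psi}$, so that the modal case reduces verbatim to the shared-frame computation of neighbourhood membership.
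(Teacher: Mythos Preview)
Your proposal is correct and follows essentially the same approach as the paper: build the propositional model (resp.\ the varying domain neighbourhood model) over the \emph{same} $L^{n}$ frame, and prove the equivalence $\propmodel, v \models \prop{\psi}$ iff $\Mmc, v \models \psi$ by a straightforward structural induction, using that in $\MLnALCg$ the truth of an $\ALC$ atom at $v$ depends only on $\Imc_v$. You are in fact more explicit than the paper about the forward direction (the paper dismisses it as ``clearly''), while your backward direction matches the paper's construction and inductive argument almost verbatim.
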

%


We now introduce definitions and notation used to prove our complexity result on fragments without modalised concepts.
Let $\setsymbols = \{p_{\elaxiom}\in\NPr\mid \elaxiom \text{ is an \ALC }$ $\text{atom in }\varphi 
 \}$,  for a fixed but arbitrary  $\MLnALCg$ formula $\varphi$,
and let $\phi$ be an $\MLn$ formula built from symbols in $\setsymbols$.
We
denote by  ${\sf sub}(\phi)$ the set 
of subformulas of $\phi$ closed under single negation.  
A \emph{valuation} for $\phi$ 
is a function $\nu: {\sf sub} (\phi)\rightarrow \{0,1\}$ that 
satisfies the   conditions:
(1) for all $\neg\psi\in {\sf sub} (\phi)$,
$\nu(\psi)=1$ iff $\nu(\neg\psi)=0$;
(2) for all $\psi_1\wedge \psi_2\in {\sf sub} (\phi)$,
$\nu(\psi_1\wedge \psi_2) = 1$ iff $\nu(\psi_1) = 1$
and $\nu(\psi_2) = 1$; 
and (3) $\nu(\phi)=1$. 
A valuation $\nu$ for $\phi$
is \emph{$\setsymbols$-consistent}
if
the following \ALC formula is satisfiable:
$
\textstyle\bigwedge_{\nu(p_\elaxiom)=1} \ {\elaxiom} \ \wedge \
\bigwedge_{\nu(p_\elaxiom)=0}\ \neg {\elaxiom},
$
where $p_\elaxiom\in\setsymbols$.  
Lemma~\ref{lem:proplemmaL} establishes that satisfiability of 
$\phi$ in a $\setsymbols$-consistent model is characterised  by the existence of a $\setsymbols$-consistent valuation satisfying suitable properties.
In the following, we use $\falseprop$ as an abbreviation for $p \land \neg p$, for a fixed but arbitrary   $p \in \mathsf{N_{P}}$. 


\begin{restatable}{lemma}{Lemmapropvardi}\label{lem:proplemmaL}
	Given $\Lvar$ and  an $\MLn$ formula $\phi$ built from symbols in $\setsymbols$ (defined as above), 
	let:
	\[
	\boldsymbol{\kappa} =
	\begin{cases}
		| {\sf sub}({\phi}) | , & \text{if $\mathbf{C} \in \Lvar$} \\
		1, & \text{if $\mathbf{C} \not \in \Lvar$}
	\end{cases}.
	\]
	A formula $\phi$ is satisfied in a $\setsymbols$-consistent $\Lvar^{n}$ 
	model
	iff
	there is   a $\setsymbols$-consistent valuation \valuation 
	for $\phi$ 
	such that,
	for every $1 \leq k \leq \boldsymbol{\kappa}$,
	if $\B_i\psi_1, \dots, \B_i\psi_k, \B_i\chi\in{\sf sub}(\phi)$,
	$\valuation(\B_i\psi_j)=1$ for all $1\leq j \leq k$,
	and $\valuation(\B_i\chi)=0$, then
\begin{enumerate}
\item
$
	(\bigwedge^{k}_{j=1}\psi_j\wedge\neg\chi) \vee \boldsymbol\vartheta
$
	is satisfied in a $\setsymbols$-consistent $\Lvar^{n}$ 
	model, 
	where:
$\boldsymbol\vartheta = \falseprop$, if $\mathbf{M}\in\Lvar$;
$\boldsymbol\vartheta = \bigvee^{k}_{j=1} (\neg\psi_j\wedge\chi)$, if $\mathbf{M}\not\in\Lvar$;
	and 
\item
	for $\mathbf{X}\in\{\mathbf{N,T,P,Q,D}\}$,
	if $\mathbf{X}\in\Lvar$, then $\nu$ satisfies the condition $(\mathbf{X})$ below, for every $1 \leq k, h \leq \boldsymbol{\kappa}$:
	\begin{itemize}
		%
		\item[($\mathbf{N}$)] if $\B_i\psi\in{\sf sub}(\phi)$ and
		$\valuation(\B_i\psi)=0$, then $\neg \psi$ 
		is satisfied in a $\setsymbols$-consistent $\Lvar^{n}$ model; 
		
		\item[($\mathbf{T}$)] if $\B_i\psi\in{\sf sub}(\phi)$ and 
		$\valuation(\B_i\psi)=1$   then
		$\valuation(\psi)=1$;
		
		\item[($\mathbf{P}$)] if $\B_i\psi_1, \dots, \B_i\psi_k\in{\sf sub}(\phi)$ and
		$\valuation(\B_i\psi_j)=1$ for all $1\leq j \leq k$, 
		then $\bigwedge^{k}_{j=1}\psi_j$
		is satisfied in a $\setsymbols$-consistent $\Lvar^{n}$ model; 
		
		\item[($\mathbf{Q}$)] if $\B_i\psi_1, \dots, \B_i\psi_k\in{\sf sub}(\phi)$ and
		$\valuation(\B_i\psi_j)=1$ for all $1\leq j \leq k$, 
		then $\bigvee^{k}_{j=1}\neg\psi_j$
		is satisfied in a $\setsymbols$-consistent $\Lvar^{n}$ model; 
		
		\item[($\mathbf{D}$)] if $\B_i\psi_1, \dots, \B_i\psi_k, \B_i\chi_1, \dots, \B_i\chi_h\in{\sf sub}(\phi)$,
		$\valuation(\B_i\psi_j)=1$ for all $1\leq j \leq k$, and
		$\valuation(\B_i\chi_\ell)=1$ for all $1\leq \ell \leq h$, 
		then $(\bigwedge^{k}_{j=1}\psi_j \land \bigwedge^{h}_{\ell=1}\chi_\ell) \vee \boldsymbol\eta$
		is satisfied in a $\setsymbols$-consistent $\Lvar^{n}$ model,
		where:
		\[
		\boldsymbol\eta =
		\begin{cases}
			\falseprop, & \text{if $\mathbf{M}\in\Lvar$} \\
			\neg(\bigwedge^{k}_{j=1}\psi_j) \land \neg(\bigwedge^{h}_{\ell=1}\chi_\ell), & \text{if $\mathbf{M}\not\in\Lvar$}
		\end{cases}.
		\]
	\end{itemize}
\end{enumerate}
	%
\end{restatable}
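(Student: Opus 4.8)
The plan is to prove the two implications separately. The \emph{only if} direction is essentially reading a valuation off a satisfying model, and the \emph{if} direction is a ``small model'' construction in the style of canonical models for non-normal modal logics.

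For \emph{only if}, suppose $\Mmc = (\Wmc, \{\Nmc_i\}_{i\in J}, \Vmc)$ is a $\setsymbols$-consistent $\Lvar^{n}$ model with $\Mmc, w_0 \models \phi$, and set $\valuation(\psi) = 1$ iff $\Mmc, w_0 \models \psi$, for $\psi \in {\sf sub}(\phi)$. Then $\valuation$ is a valuation, $\valuation(\phi) = 1$, and $\valuation$ is $\setsymbols$-consistent because $\setsymbols$-consistency of $\Mmc$ at $w_0$ is precisely satisfiability of the $\ALC$ formula attached to $\valuation$. Properties~1 and~2 are then read off from the frame conditions on the $\Nmc_i$ at $w_0$. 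For property~1: if $\valuation(\B_i\psi_j) = 1$ for $j \leq k \leq \boldsymbol{\kappa}$ and $\valuation(\B_i\chi) = 0$, then $\llbracket\psi_j\rrbracket^{\Mmc} \in \Nmc_i(w_0)$ for all $j$ and $\llbracket\chi\rrbracket^{\Mmc} \notin \Nmc_i(w_0)$, so, using the $\mathbf{C}$-condition if $\mathbf{C} \in \Lvar$ (and $k = 1$ otherwise), $\bigcap_j \llbracket\psi_j\rrbracket^{\Mmc} = \llbracket\bigwedge_j \psi_j\rrbracket^{\Mmc} \in \Nmc_i(w_0)$; if $\mathbf{M} \in \Lvar$ the $\mathbf{M}$-condition rules out $\llbracket\bigwedge_j \psi_j\rrbracket^{\Mmc} \subseteq \llbracket\chi\rrbracket^{\Mmc}$, so some world of $\Mmc$ satisfies $\bigwedge_j \psi_j \wedge \neg\chi$, and if $\mathbf{M} \notin \Lvar$ the two truth sets are at least distinct, so some world of $\Mmc$ satisfies $(\bigwedge_j \psi_j \wedge \neg\chi) \vee \bigvee_j (\neg\psi_j \wedge \chi)$; since $\Mmc$ is itself a $\setsymbols$-consistent $\Lvar^{n}$ model, this suffices. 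The clauses $(\mathbf{N})$, $(\mathbf{T})$, $(\mathbf{P})$, $(\mathbf{Q})$, $(\mathbf{D})$ are obtained in the same way from the conditions of the same names (for $(\mathbf{D})$, combining the $\mathbf{C}$-, $\mathbf{D}$-, and possibly $\mathbf{M}$-conditions).

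For \emph{if}, let $\valuation$ be a $\setsymbols$-consistent valuation with properties~1 and~2. For each agent $i$ and each tuple of $\B_i$-subformulas of $\phi$ witnessing property~1 or one of $(\mathbf{N})$, $(\mathbf{P})$, $(\mathbf{Q})$, $(\mathbf{D})$, the hypothesis gives a $\setsymbols$-consistent $\Lvar^{n}$ model satisfying the associated formula; fixing one per tuple yields finitely many models $\Mmc_1, \dots, \Mmc_N$, which we may take with pairwise disjoint world sets $\Wmc_1, \dots, \Wmc_N$. We build $\Mmc$ on $\Wmc = \{r\} \uplus \Wmc_1 \uplus \cdots \uplus \Wmc_N$ with $r$ fresh: we put $r \in \Vmc(p_\elaxiom)$ iff $\valuation(p_\elaxiom) = 1$ (extending $\valuation$ on the variables of $\setsymbols$ not in $\phi$ along a consistent $\ALC$ type, which exists since $\valuation$ is $\setsymbols$-consistent), keep the valuation of each $\Mmc_t$ on $\Wmc_t$, let $\alpha \in \Nmc_i(v)$ iff $\alpha \cap \Wmc_t$ is an $i$-neighbourhood of $v$ in $\Mmc_t$ for $v \in \Wmc_t$, and let $\Nmc_i(r)$ be the closure, under the operations attached to the conditions in $\Lvar$ (supersets for $\mathbf{M}$, finite intersections for $\mathbf{C}$, adjunction of $\Wmc$ for $\mathbf{N}$), of the family $\{\, [\psi'] \mid \B_i\psi' \in {\sf sub}(\phi),\ \valuation(\B_i\psi') = 1 \,\}$, where $[\psi] := \{ r \mid \valuation(\psi) = 1 \} \cup \bigcup_t \llbracket\psi\rrbracket^{\Mmc_t}$. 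This ``restricted disjoint union'' inherits every $\Lvar$-condition at worlds in the $\Wmc_t$, and truth sets restrict correctly, $\llbracket\psi\rrbracket^{\Mmc} \cap \Wmc_t = \llbracket\psi\rrbracket^{\Mmc_t}$; one then proves the truth lemma $\llbracket\psi\rrbracket^{\Mmc} = [\psi]$ (equivalently $\Mmc, r \models \psi$ iff $\valuation(\psi) = 1$) by induction on $\psi \in {\sf sub}(\phi)$, which gives $\Mmc, r \models \phi$, and $\setsymbols$-consistency of $\Mmc$ follows from that of $\valuation$ and of the $\Mmc_t$.

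The main obstacle is the modal step of this induction together with the verification that $\Nmc_i(r)$ really satisfies the $\Lvar$-condition. For the modal step, the one delicate point is that closing the base family must not produce $[\chi]$ for a subformula $\B_i\chi$ with $\valuation(\B_i\chi) = 0$: every member of $\Nmc_i(r)$ is $\Wmc$ (only if $\mathbf{N} \in \Lvar$), or a superset of some $\bigcap_{j \leq k} [\psi_j]$ with $\B_i\psi_j \in {\sf sub}(\phi)$, $\valuation(\B_i\psi_j) = 1$, $k \leq \boldsymbol{\kappa}$ (if $\mathbf{M} \in \Lvar$), or else exactly such an intersection; property~1 supplies, inside one of the $\Mmc_t$, a world separating $[\chi]$ from each such candidate (a world in $\bigcap_j [\psi_j] \setminus [\chi]$ if $\mathbf{M} \in \Lvar$, so $[\chi] \not\supseteq \bigcap_j [\psi_j]$; a world witnessing $\bigcap_j [\psi_j] \neq [\chi]$ otherwise), and clause $(\mathbf{N})$ excludes $[\chi] = \Wmc$. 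That $\Nmc_i(r)$ meets the $\mathbf{T}$-, $\mathbf{P}$-, $\mathbf{Q}$-, $\mathbf{D}$-conditions in $\Lvar$ is then a finite case analysis over the members of $\Log$ using the clauses $(\mathbf{N})$--$(\mathbf{D})$ --- appealing, for conditions entailed by the $\Lvar$-condition though not appearing literally in $\Lvar$, to Theorem~\ref{prop:implicationsystem} --- and relying on the facts that the size of ${\sf sub}(\phi)$ bounds the arity of the intersections $\mathbf{C}$-closure produces (which is why the hypotheses range over $k, h \leq \boldsymbol{\kappa}$) and that $\Log$ omits the trivialising combinations $\mathbf{MQ}$ and $\mathbf{NQ}$.
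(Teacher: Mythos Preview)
Your proposal follows the same two-direction strategy as the paper, and the $(\Leftarrow)$ construction---disjoint union of the witness models together with a fresh root whose neighbourhoods are generated from $\{[\psi'] : \valuation(\Box_i\psi') = 1\}$ and then closed under the operations for $\mathbf{M}$, $\mathbf{C}$, $\mathbf{N}$, with the truth lemma $\llbracket\psi\rrbracket^{\Mmc} = [\psi]$ and the separation argument at the root---matches the paper's. The one technical difference is at non-root worlds: the paper defines $\Nmc_i(u)$ for $u \in \Wmc_j$ by putting in $\ext{\psi}$ whenever $\Mmc_j, u \models \Box_i\psi$ for a subformula $\Box_i\psi$ (and then closing under $\mathbf{M}/\mathbf{C}/\mathbf{N}$ globally), whereas you take the cylindrification $\alpha \in \Nmc_i(v) \Leftrightarrow \alpha \cap \Wmc_t \in \Nmc_{t,i}(v)$; both choices work, yours making the inheritance of the $\Lvar$-conditions at non-root worlds immediate rather than requiring the paper's separate verification via the submodels.
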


By using Lemmas~\ref{lem:propL}-\ref{lem:proplemmaL},
the following theorem provides a procedure that runs in exponential time to check $\LnALCg$ satisfiability on varying domains.
Since $\ALC$ formula satisfiability is already $\ExpTime$-hard, our upper bound is tight.

\begin{restatable}{theorem}{Satfragvardomexp}
\label{thm:satfragvardomexp}
	Satisfiability in $\LnALCg$   on varying domain neighbourhood models is \ExpTime-complete.
\end{restatable}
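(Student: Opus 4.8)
The plan is to prove the two bounds separately. The $\ExpTime$ lower bound is immediate: $\ALC$ concept satisfiability with respect to general TBoxes is $\ExpTime$-hard, and a concept $C$ is satisfiable w.r.t.\ a TBox $\Tmc$ iff the modality-free $\MLnALCg$ formula $\bigwedge_{(D\sqsubseteq E)\in\Tmc}(D\sqsubseteq E)\land\lnot(\top\sqsubseteq\lnot C)$ is $\LnALCg$ satisfiable on varying domain neighbourhood models --- any $\ALC$ model of $\Tmc$ realising $C$ yields a single-world such model, whose neighbourhood functions can be chosen to meet the $\Lvar$-condition since no $\Lvar\in\Log$ contains both $\mathbf{N}$ and $\mathbf{Q}$, and conversely the unique world of such a model is an $\ALC$ model of $\Tmc$ in which $C$ is non-empty. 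So the work is the $\ExpTime$ upper bound, which I would obtain by converting Lemmas~\ref{lem:propL}--\ref{lem:proplemmaL} into a recursive decision procedure on modal depth.

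For the upper bound, given $\varphi$ set $\phi:=\prop{\varphi}$ and $\setsymbols:=\setsymbols_\varphi$; by Lemma~\ref{lem:propL} it suffices to decide whether $\phi$ is satisfied in a $\setsymbols$-consistent $\Lvar^n$ model. I would define a predicate $\mathrm{SAT}(\theta)$, for $\MLn$ formulas $\theta$ over $\setsymbols$, meaning ``$\theta$ is satisfied in a $\setsymbols$-consistent $\Lvar^n$ model'', and evaluate it recursively following Lemma~\ref{lem:proplemmaL}: enumerate the valuations $\nu$ for ${\sf sub}(\theta)$; discard each $\nu$ for which the induced conjunction of $\ALC$ literals over the atoms of $\varphi$ is $\ALC$-unsatisfiable (one $\ALC$ satisfiability test); and accept $\theta$ iff some surviving $\nu$ also passes, for every tuple $\B_i\psi_1,\dots,\B_i\psi_k,\B_i\chi\in{\sf sub}(\theta)$ singled out in Lemma~\ref{lem:proplemmaL} (i.e.\ $\nu$ sets all $\B_i\psi_j$ to $1$ and $\B_i\chi$ to $0$), the recursive test $\mathrm{SAT}\bigl((\bigwedge_{j=1}^k\psi_j\land\lnot\chi)\lor\boldsymbol{\vartheta}\bigr)$ together with the analogous tests prescribed by condition $(\mathbf{X})$ for each $\mathbf{X}\in\Lvar$. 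Every recursive call is on a formula of strictly smaller modal depth, so the recursion terminates, bottoming out at modal depth $0$ in a single $\ALC$ satisfiability test; and $\varphi$ is $\LnALCg$ satisfiable iff $\mathrm{SAT}(\phi)$ returns true.

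It then remains to verify that $\mathrm{SAT}(\phi)$ is computable in singly exponential time. Since $|\phi|\le|\varphi|$, the set ${\sf sub}(\phi)$ --- in particular its set of $\B$-subformulas --- is of polynomial size; every formula $\theta$ arising in the recursion is a Boolean combination, of one of finitely many fixed shapes, of formulas drawn from ${\sf sub}(\phi)\cup\{\falseprop\}$, hence of polynomial size, and there are at most $2^{p(|\varphi|)}$ of them for some polynomial $p$. For a fixed $\theta$, evaluating $\mathrm{SAT}(\theta)$ enumerates the $\le 2^{p(|\varphi|)}$ valuations for ${\sf sub}(\theta)$, performs one $\ExpTime$ $\ALC$-consistency test per valuation, and --- in the case $\mathbf{C}\in\Lvar$, where $\boldsymbol{\kappa}=|{\sf sub}(\phi)|$ --- ranges over the $\le 2^{p(|\varphi|)}$ admissible tuples of $\B$-subformulas, each spawning one recursive call on a strictly shorter formula; processing the formulas by dynamic programming in order of increasing modal depth (at most $|\varphi|$ layers) therefore takes time $2^{q(|\varphi|)}$ for some polynomial $q$, i.e.\ $\ExpTime$, which matches the lower bound. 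The step I expect to be the main obstacle is exactly this analysis in the presence of the $\mathbf{C}$-condition: there the conjunctions $\bigwedge_{j=1}^k\psi_j$ and the collection of tuples quantified over in Lemma~\ref{lem:proplemmaL} both become exponentially large, and one must argue carefully that the recursion keeps everything within a single exponential --- the crucial facts being that the family of formulas reachable by the recursion is closed and contains only polynomial-size formulas (so the $\ALC$ subroutine is always invoked on polynomial-size inputs), and that modal depth strictly decreases, keeping the number of dynamic-programming layers polynomial.
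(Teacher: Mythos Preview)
Your proposal is correct and follows essentially the same approach as the paper. Both the lower bound (reduction from $\ALC$ reasoning) and the upper bound (a recursive decision procedure driven by Lemmas~\ref{lem:propL}--\ref{lem:proplemmaL}, with termination guaranteed by strictly decreasing modal depth and total work bounded by a single exponential) coincide with the paper's argument; the only cosmetic difference is that you phrase the complexity analysis as dynamic programming over the $2^{p(|\varphi|)}$ reachable formulas, whereas the paper counts nodes of the recursion tree directly (polynomial depth, exponential branching), arriving at the same bound.
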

%



\section{Reasoning on Constant Domain}
\label{sec:reasoncondom}

%
\label{sec:relation}

	
We now study the complexity of the formula satisfiability problem in $\EnALC{n}$ and $\MnALC{n}$
on constant domain neighbourhood models.
We provide a $\NExpTime$ upper bound for satisfiability in $\EnALC{n}$ and $\MnALC{n}$ by using a reduction, lifted from the propositional case, to multi-modal $\KnALC{m}$.
The translation $\cdot\tr$ from \MLALC{n} to \MLALC{3n} is defined as~\cite{KraWol,GasHer}: 
	$A\tr = A$,
	$(\lnot C)\tr = \lnot C\tr$,
	$(C \sqcap D)\tr = C\tr \sqcap D\tr$,
	$(\exists \role.C)\tr = \exists \role.C\tr$;
$(C(a))\tr = C\tr(a)$,
$(r(a,b))\tr = r(a,b)$,
$(C \sqsubseteq D)\tr = C\tr \sqsubseteq D\tr$,
$(\lnot \psi)\tr  = \lnot \psi\tr$,
$(\psi \land \chi)\tr  = \psi\tr \land \chi\tr$;
$(\B_{i} \gamma)\tr = \D_{i_{1}} (\B_{i_{2}} \gamma\tr \circ \B_{i_{3}} \lnot \gamma\tr)$;
where $A \in \NC$, $\role\in\NR$,
$\gamma$ is either an $\MLALC{n}$ concept or formula,
and $\circ \in \{ \sqcap, \land \}$ accordingly.
Using this translation, one can show that 
 satisfiability on neighbourhood models
is reducible to 
satisfiability on the
relational
models~\cite{KraWol,GasHer}.
Since satisfiability in $\KnALC{3n}$ constant domain relational models is
$\NExpTime$-complete~\cite[Theorem 15.15]{GabEtAl03},
 we obtain the following complexity result.

\begin{restatable}{theorem}{Theoremcomplealc}\label{theor:complealc}
Satisfiability in $\EnALC{n}$  on constant domain neighbourhood models is decidable in $\NExpTime$.
\end{restatable}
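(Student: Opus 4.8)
The plan is to reduce satisfiability in $\EnALC{n}$ on constant domain neighbourhood models to satisfiability in $\KnALC{3n}$ on constant domain relational models via the translation $\cdot\tr$ recalled above, and then invoke the $\NExpTime$ membership of the latter~\cite[Theorem~15.15]{GabEtAl03}. Concretely, I would prove that an $\MLALC{n}$ formula $\varphi$ is satisfied in some constant domain neighbourhood model over the class of all frames (i.e.\ with only the $\mathbf{E}$-condition) if and only if $\varphi\tr$ is satisfied in some constant domain relational model; the upper bound then follows provided $\varphi\tr$ can be computed from $\varphi$ in polynomial time, since $\NExpTime$ is closed under polynomial-time reductions and $3n$ is linear in $n$. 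First I would record that $\cdot\tr$ is a homomorphism fixing the \ALC constructors and the Booleans and sending $\B_i\gamma$ to $\D_{i_1}(\B_{i_2}\gamma\tr\circ\B_{i_3}\lnot\gamma\tr)$; in particular it does not touch the object domain or the interpretation of individual names, so both directions of the equivalence will turn constant domain, rigid-designator models into models of the same kind.

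For the ``only if'' direction I would unfold neighbourhoods into intermediate worlds, following the Kracht--Wolter/Gasquet--Herzig simulation. Given a constant domain neighbourhood model $\Mmc=(\Wmc,\{\Nmc_i\}_{i\in J},\Delta,\Imc)$ with $\Mmc,w_0\models\varphi$, build a relational model $\Mmf$ whose worlds are $W=\Wmc\cup\{(w,i,\alpha)\mid w\in\Wmc,\ i\in J,\ \alpha\in\Nmc_i(w)\}$, with the same domain $\Delta$, the same rigid interpretation of individuals, and $\cdot^{I_w}=\cdot^{\Imc_w}$ for $w\in\Wmc$ while concept names and roles are interpreted arbitrarily (say, emptily) at the new worlds; the relations are $w\,R_{i_1}\,(w,i,\alpha)$, $(w,i,\alpha)\,R_{i_2}\,v$ for $v\in\alpha$, and $(w,i,\alpha)\,R_{i_3}\,v$ for $v\in\Wmc\setminus\alpha$, with no other edges. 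A routine induction on the structure of concepts and formulas of $\varphi$ then gives, for every $w\in\Wmc$ and $d\in\Delta$, that $d\in C^{\Imc_w}$ iff $d\in(C\tr)^{I_w}$ and $\Mmc,w\models\psi$ iff $\Mmf,w\models\psi\tr$; only the $\B_i$ case is interesting, where from $\llbracket C\rrbracket^{\Mmc}_{d}\in\Nmc_i(w)$ one takes $\alpha=\llbracket C\rrbracket^{\Mmc}_{d}$ and checks, using the inductive hypothesis, that $d\in(\B_{i_2}C\tr\sqcap\B_{i_3}\lnot C\tr)^{I_{(w,i,\alpha)}}$, while conversely every $i_1$-successor of $w$ is of the form $(w,i,\alpha)$ with $R_{i_2}$-image $\alpha$ and $R_{i_3}$-image $\Wmc\setminus\alpha$, which forces $\llbracket C\tr\rrbracket^{\Mmf}_{d}=\alpha\in\Nmc_i(w)$. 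In particular $\Mmf,w_0\models\varphi\tr$, and $\Mmf$ is a constant domain relational model.

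For the ``if'' direction I would read a neighbourhood function off the three relations. Given a constant domain relational model $\Mmf=(W,\{R_j\}_{j\in J'},\Delta,I)$ of $\varphi\tr$ (with $J'$ the $3n$-element index set) such that $\Mmf,w_0\models\varphi\tr$, keep $\Wmc=W$, the same domain, the same interpretation, the same rigid individuals, and set $\Nmc_i(w)=\{X\subseteq\Wmc\mid \exists u\ (w\,R_{i_1}\,u,\ R_{i_2}(u)\subseteq X,\ R_{i_3}(u)\cap X=\emptyset)\}$, where $R_{i_\ell}(u)=\{v\mid u\,R_{i_\ell}\,v\}$. Each $\Nmc_i$ is a well-defined function $\Wmc\to 2^{2^{\Wmc}}$, so this is a legitimate constant domain neighbourhood model over the class of all frames. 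The same induction, now run over all worlds of $\Mmf$, yields $d\in C^{\Imc_w}$ iff $d\in(C\tr)^{I_w}$ and $\Mmc,w\models\psi$ iff $\Mmf,w\models\psi\tr$, with the $\B_i$ case exactly the dual of the computation above; hence $\Mmc,w_0\models\varphi$.

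What remains, and what I expect to be the delicate point, is controlling the size of $\varphi\tr$: written out as a term, $(\B_i\gamma)\tr$ contains two copies of $\gamma\tr$, so the naive translation blows up exponentially in the modal nesting depth. The plan is to represent concepts and formulas as DAGs with full structure sharing, under which each modalised subterm $\B_i\gamma$ of $\varphi$ adds only a constant number of new nodes ($\D_{i_1}(\cdot)$, $\B_{i_2}(\cdot)$, $\B_{i_3}(\cdot)$, $\lnot(\cdot)$, $(\cdot\circ\cdot)$) on top of the shared DAGs for $\gamma\tr$ and its negation; thus $\varphi\tr$ has $O(|\fg(\varphi)|)$ distinct subterms, is computable in polynomial time, and is fed as such a succinct input to the $\NExpTime$ procedure for $\KnALC{3n}$, whose complexity is governed by the set of subformulas rather than by the string length. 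Combining this with the satisfiability equivalence and \cite[Theorem~15.15]{GabEtAl03} gives the $\NExpTime$ upper bound for $\EnALC{n}$. The two model constructions are standard once set up; the only genuinely new bookkeeping is verifying that truth sets $\llbracket\cdot\rrbracket^{\Mmc}_{d}$ relativised to domain elements transform correctly and that constant domains and rigid designators are preserved, both of which are immediate from the definitions above — so the size/representation issue is really the crux.
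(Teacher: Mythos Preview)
Your proposal is correct and tracks the paper's proof closely: the same translation $\cdot\tr$, the same two model constructions (your intermediate worlds $(w,i,\alpha)$ are a harmless variant of the paper's $(\alpha,1)$, and your ``if''-direction neighbourhood function coincides with the paper's), and the same appeal to~\cite[Theorem~15.15]{GabEtAl03}. The one place where you go beyond the paper is the size analysis: the paper simply asserts a polynomial-time reduction without addressing the duplication of $\gamma\tr$ inside $(\B_i\gamma)\tr$, whereas you correctly flag the potential exponential string blowup and resolve it by working with a DAG/shared-subterm representation and noting that the $\NExpTime$ upper bound for $\KnALC{3n}$ is governed by the subformula count rather than string length---this is a legitimate refinement that the paper leaves implicit.
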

The translation $\cdot\ttr$ from $\MLALC{n}$ to $\MLALC{2n}$ is defined as $\cdot\tr$ on all concepts and formulas, except for the modalised concepts or formulas $\gamma$:
	$(\B_{i} \gamma)\ttr = \D_{i_1} \B_{i_2} \gamma\ttr$.
We obtain an upper bound analogous to the one for $\EnALC{n}$ by a reduction 
of the formula satisfiability problem for $\MnALC{n}$
to the $\KnALC{2n}$ one~\cite{KraWol,GasHer,GabEtAl03}.

\begin{restatable}{theorem}{Theoremcomplmalc}\label{theor:complmalc}
Satisfiability in $\MnALC{n}$   on constant domain neighbourhood models is decidable in $\NExpTime$.
\end{restatable}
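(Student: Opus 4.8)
The plan is to reduce satisfiability in $\MnALC{n}$ on constant domain neighbourhood models to satisfiability in $\KnALC{2n}$ on constant domain relational models, by showing that the translation $\cdot\ttr$ is satisfiability preserving, and then to invoke the known $\NExpTime$ upper bound for the latter problem~\cite[Theorem 15.15]{GabEtAl03}. This follows the same pattern as the proof of Theorem~\ref{theor:complealc} for $\EnALC{n}$, but now the presence of the $\mathbf{M}$-condition makes available the simpler encoding $(\B_{i}\gamma)\ttr = \D_{i_{1}}\B_{i_{2}}\gamma\ttr$, in which $\gamma\ttr$ occurs only once; consequently $\cdot\ttr$ is computable in polynomial time and $|\p\ttr|$ is linear in $|\p|$, so the composition of the translation with the decision procedure for $\KnALC{2n}$ stays in $\NExpTime$.

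The heart of the argument is the correctness of $\cdot\ttr$. For the direction from neighbourhood to relational models, suppose $\Mmc$ is a constant domain neighbourhood model satisfying the $\mathbf{M}$-condition with $\Mmc, w_{0} \models \p$, and let $\Delta$ be its domain. I would build a relational model $\Mmf$ over the relations $\{R_{i_{1}}, R_{i_{2}}\}_{i \in J}$ whose worlds are those of $\Mmc$ together with one fresh auxiliary world $v^{i}_{w, X}$ for every world $w$ of $\Mmc$, every $i \in J$, and every $X \in \Nmc_{i}(w)$; set $w\, R_{i_{1}}\, v^{i}_{w, X}$ and $v^{i}_{w, X}\, R_{i_{2}}\, u$ for all $u \in X$, and add no other edges. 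The $\ALC$-interpretations of the original worlds are kept unchanged, and each auxiliary world is given an arbitrary $\ALC$-interpretation over $\Delta$ (say, a copy of $\Imc_{w_{0}}$), so that both the constant domain requirement and the rigidity of individual names are preserved. One then shows, by simultaneous induction on concepts and formulas, that for every original world $w$ and every $d \in \Delta$ one has $d \in C^{\Imc_{w}}$ in $\Mmc$ iff $d \in (C\ttr)^{\Imc_{w}}$ in $\Mmf$, and $\Mmc, w \models \q$ iff $\Mmf, w \models \q\ttr$. The only non-routine case is the modal one: for the left-to-right direction, if $\llbracket C \rrbracket^{\Mmc}_{d} \in \Nmc_{i}(w)$ then the auxiliary world $v^{i}_{w, \llbracket C \rrbracket^{\Mmc}_{d}}$ witnesses $d \in (\D_{i_{1}}\B_{i_{2}}C\ttr)^{\Imc_{w}}$ by the induction hypothesis; for the converse direction, every $R_{i_{1}}$-successor of an original world $w$ is some $v^{i}_{w, X}$ with $X \in \Nmc_{i}(w)$ whose $R_{i_{2}}$-successors are exactly the (original) elements of $X$, so the induction hypothesis gives $X \sbs \llbracket C \rrbracket^{\Mmc}_{d}$, and here the $\mathbf{M}$-condition (upward closure) is exactly what yields $\llbracket C \rrbracket^{\Mmc}_{d} \in \Nmc_{i}(w)$, i.e.\ $d \in (\B_{i}C)^{\Imc_{w}}$. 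For the other direction of the translation, given a constant domain relational model $\Mmf$ over $\{R_{i_{1}}, R_{i_{2}}\}_{i \in J}$ with $\Mmf, w_{0} \models \p\ttr$, I would keep its worlds, domain and $\ALC$-interpretations and put $\Nmc_{i}(w) = \{\, X \sbs \Wmc \mid \exists v\ (w\, R_{i_{1}}\, v \text{ and } \{u \mid v\, R_{i_{2}}\, u\} \sbs X)\,\}$; this family is upward closed by construction, so the resulting neighbourhood model satisfies the $\mathbf{M}$-condition, and an analogous induction, whose modal step is now essentially the definition of $\Nmc_{i}$, shows that it satisfies $\p$.

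Putting these together, $\p$ is satisfiable in $\MnALC{n}$ on constant domain neighbourhood models iff $\p\ttr$ is satisfiable in $\KnALC{2n}$ on constant domain relational models; since $\cdot\ttr$ is polynomial and the target problem is in $\NExpTime$~\cite[Theorem 15.15]{GabEtAl03}, satisfiability in $\MnALC{n}$ on constant domain neighbourhood models is in $\NExpTime$. I expect the main obstacle to be the correctness proof of $\cdot\ttr$: one has to route the $\mathbf{M}$-condition through exactly the one place in the induction where it is needed, treat modalised concepts and modalised formulas uniformly, and check that the auxiliary worlds of the first construction neither violate the constant domain and rigidity constraints nor are ever ``read'' when evaluating a translated subconcept or subformula at an original world (they only ever occur as the witnessing world of a $\D_{i_{1}}\B_{i_{2}}$ pattern, whose $R_{i_{2}}$-successors are all original). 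This is routine but delicate, and it mirrors the propositional simulation of non-normal modalities in normal multimodal logic~\cite{KraWol,GasHer}.
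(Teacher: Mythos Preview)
Your proposal is correct and follows essentially the same approach as the paper: a polynomial reduction via $\cdot\ttr$ to $\KnALC{2n}$ on constant domain relational models, with the $\mathbf{M}$-condition used precisely to close the gap between ``some neighbourhood is contained in $\llbracket C\rrbracket$'' and ``$\llbracket C\rrbracket$ is a neighbourhood'' in the modal induction step, and with the converse direction defining $\Nmc_{i}(w)$ as the upward closure of the $R_{i_{2}}$-successor sets of $R_{i_{1}}$-successors. The only cosmetic difference is that the paper shares a single auxiliary world $(\alpha,1)$ per set $\alpha$ across all $i$ and $w$, whereas you create a fresh $v^{i}_{w,X}$ per triple; both choices work and the induction goes through identically.
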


\section{Discussion}
\label{sec:discuss}

We investigated reasoning in non-normal modal description logics,
focussing on:
$(i)$
tableaux algorithms to check satisfiability of multi-modal description logics formulas in varying domain neighbourhood models based on classes of frames
for
39 different
non-normal
systems;
$(ii)$ complexity of satisfiability restricted to fragments with modal operators applied only over formulas,
and interpreted on varying domain models;
$(iii)$ preliminary reduction of formula satisfiability for two non-normal modal description logics to satisfiability in the standard relational semantics on a constant domain.
We now discuss possible future work.
%
%

First, we intend to devise tableaux for formula satisfiability on neighbourhood models with constant domain, by solving the problem of newly introduced variables that do not occur in other previously expanded labelled constraints systems.
For instance, by applying the $\mathbf{M}^{n}_{\ALC}$-rules to the $n$-labelled constraint system $S_{n} = \{ n : \Diamond_{i} \exists r. A(x), \Box_{i} \lnot A(x) \}$, we get the $m$-labelled constraint system $S_{m} = \{ m : \exists r. A(x),  m : \lnot A(x), m : r(x,y), m : A(y) \}$.
The fresh variable $y$ in $S_{m}$ does not allow for the direct extraction of a constant domain model,
as no object in the domain of the world associated with $S_{n}$ would be capable of representing $y$ correctly.
An alternative approach involves \emph{quasimodels}~\cite{GabEtAl03}, to characterise satisfiability on constant domain
models in terms of structures representing ``abstractions'' of the actual models of a formula.
Objects across worlds can be represented by means of \emph{runs}, i.e., functions to guarantee
their modal properties and the constant domain assumption.
A similar strategy is presented in~\cite{SeyErd09,SeyJam09,SeyJam10},
where the definition of runs (which is not carried out in detail) involves the introduction of suitable world ``copies''.
We conjecture that a quasimodel-based approach with \emph{marked variables}, as
illustrated in~\cite{GabEtAl03}, can also be adopted to solve the constant domain model extraction issue.

Moreover, we aim at tight complexities for $\LnALC$ satisfiability, both in varying and in constant domain 
models.
While $\ALC$ formula satisfiability is $\ExpTime$-complete, it is unclear whether the upper bound for $\LnALC$ on varying or constant domain
neighbourhood
models can be improved to $\ExpTime$-membership, for any $\Lvar \in \Log$.
At the propositional level, the formula satisfiability problem for the systems based on the $L$-condition, with $\mathbf{C} \not \in L$, is $\NP$-complete, rising to $\PSpace$ if the $\mathbf{C}$-condition is included~\cite{Var2}.
For normal modal description logics, instead, the (tight) $\NExpTime$-hardness results are based on complexity proofs of \emph{product logics} over relational product frames~\cite{GabEtAl03}, and cannot be immediately adapted to neighbourhood semantics, where an analogous notion of product is not yet well understood.
Nonetheless, we conjecture that the $\NExpTime$-hardness known for,
e.g., $\mathbf{K}_\mathcal{ALC}$ on constant domain relational models, also holds in the neighbourhood case, at least in presence of the $\mathbf{C}$-condition.


Finally, we plan to study: non-normal modal description logics in \emph{coalitional} and \emph{strategic} settings~\cite{Pau,Tro,SeyJam09}, with an interplay between abilities and powers of \emph{groups} of agents, rather than single ones;
additional description logics constructs (e.g. \emph{nominals}, \emph{inverse roles}, or \emph{number restrictions}~\cite{BaaEtAl17}); and \emph{interactions between modalities}, with axioms expressing e.g. that an agent \emph{can do} anything they \emph{actually do}, by means of formulas of the form $\mathbb{D}_{i}C \sqsubseteq \mathbb{C}_{i}C$ or $\mathbb{D}_{i}\varphi \to \mathbb{C}_{i}\varphi$.

\section*{Acknowledgements}

This research has been partially supported
by the Province of Bolzano and DFG through the project D2G2 (DFG grant n.~500249124).
Andrea Mazzullo acknowledges the support of the MUR PNRR project FAIR - Future AI Research (PE00000013) funded by the NextGenerationEU.
Ana Ozaki is supported by the Research Council of Norway, project number 316022.

\bibliographystyle{splncs04}
\bibliography{bib_nnmdl}

\providecommand{\noopsort}[1]{}
\begin{thebibliography}{10}
\providecommand{\url}[1]{\texttt{#1}}
\providecommand{\urlprefix}{URL }
\providecommand{\doi}[1]{https://doi.org/#1}

\bibitem{Ago}
{\AA}gotnes, T., W{\'{a}}ng, Y.N.: Somebody knows. In: {KR}. pp. 2--11 (2021)

\bibitem{AngEtAl}
Anglberger, A.J., Gratzl, N., Roy, O.: Obligation, free choice, and the logic
  of weakest permissions. Rev. Symb. Logic  \textbf{8}(4),  807--827 (2015)

\bibitem{Aqv}
{\AA}qvist, L.: Good samaritans, contrary-to-duty imperatives, and epistemic
  obligations. No{\^u}s  \textbf{1}(4),  361--379 (1967)

\bibitem{Cos}
Arl{\'{o}}{-}Costa, H.L.: {First Order Extensions of Classical Systems of Modal
  Logic; The role of the Barcan schemas}. Stud. Logica  \textbf{71}(1),
  87--118 (2002)

\bibitem{CosPac}
Arl{\'{o}}{-}Costa, H.L., Pacuit, E.: First-order classical modal logic. Stud.
  Logica  \textbf{84}(2),  171--210 (2006)

\bibitem{Baader:2012:LOD:2287718.2287721}
Baader, F., Ghilardi, S., Lutz, C.: {LTL} over description logic axioms. ACM T.
  Comput. Logic  \textbf{13}(3),  21:1--21:32 (2012)

\bibitem{BaaEtAl17}
Baader, F., Horrocks, I., Lutz, C., Sattler, U.: An Introduction to Description
  Logic. Cambridge University Press (2017)

\bibitem{Bal}
Balbiani, P., Fern{\'a}ndez-Duque, D., Lorini, E.: The dynamics of epistemic
  attitudes in resource-bounded agents. Studia Logica  \textbf{107}(3),
  457--488 (2019)

\bibitem{BozEtAl07}
Bozzato, L., Ferrari, M., Fiorentini, C., Fiorino, G.: A constructive semantics
  for {ALC}. In: Proceedings of the 2007 International Workshop on Description
  Logics (DL2007), Brixen-Bressanone, near Bozen-Bolzano, Italy, 8-10 June,
  2007. {CEUR} Workshop Proceedings, vol.~250. CEUR-WS.org (2007)

\bibitem{Brown}
Brown, M.A.: On the logic of ability. Journal of philosophical logic pp. 1--26
  (1988)

\bibitem{Che}
Chellas, B.F.: {Modal logic: an introduction}. Cambridge University Press
  (1980)

\bibitem{DalHyp}
Dalmonte, T., Lellmann, B., Olivetti, N., Pimentel, E.: Hypersequent calculi
  for non-normal modal and deontic logics: countermodels and optimal
  complexity. Journal of Logic and Computation  \textbf{31}(1),  67--111 (2021)

\bibitem{DalEtAl19}
Dalmonte, T., Mazzullo, A., Ozaki, A.: On non-normal modal description logics.
  In: Simkus, M., Weddell, G.E. (eds.) {DL}. vol.~2373. CEUR-WS.org (2019)

\bibitem{DalEtAl22}
Dalmonte, T., Mazzullo, A., Ozaki, A.: Reasoning in non-normal modal
  description logics. In: Benzm{\"{u}}ller, C., Otten, J. (eds.)
  {ARQNL}@{IJCAR}. vol.~2095, pp. 28--45 (2022)

\bibitem{Elg}
Elgesem, D.: The modal logic of agency. Nordic Journal of Philosophical Logic
  (1997)

\bibitem{For}
Forrester, J.W.: Gentle murder, or the adverbial samaritan. The Journal of
  Philosophy  \textbf{81}(4),  193--197 (1984)

\bibitem{GabEtAl03}
Gabbay, D.M., Kurucz, A., Wolter, F., Zakharyaschev, M.: Many-dimensional Modal
  Logics: Theory and Applications. Elsevier (2003)

\bibitem{GasHer}
Gasquet, O., Herzig, A.: From classical to normal modal logics. In: Proof
  theory of modal logic, pp. 293--311. Springer (1996)

\bibitem{Gob}
Goble, L.: Prima facie norms, normative conflicts, and dilemmas. Handbook of
  deontic logic and normative systems  \textbf{1},  241--351 (2013)

\bibitem{GovernatoriRotolo}
Governatori, G., Rotolo, A.: On the axiomatisation of elgesem's logic of agency
  and ability. Journal of Philosophical Logic  \textbf{34}(4),  403--431
  (2005). \doi{10.1007/s10992-004-6368-1}

\bibitem{KraWol}
Kracht, M., Wolter, F.: Normal monomodal logics can simulate all others. J.
  Symbolic Logic  \textbf{64}(1),  99--138 (1999)

\bibitem{Kripke}
Kripke, S.A.: Semantical analysis of modal logic {II}: Non-normal modal
  propositional calculi. In: The theory of models, pp. 206--220. Elsevier
  (2014)

\bibitem{Lem}
Lemmon, E.J., Scott, D.: {An Introduction to Modal logic}. Blackwell (1977)

\bibitem{CIL}
Lewis, C.I., Langford, C.H., Lamprecht, P.: Symbolic logic, vol.~170. Dover
  Publications New York (1959)

\bibitem{LismontMongin}
Lismont, L., Mongin, P.: A non-minimal but very weak axiomatization of common
  belief. Artificial Intelligence  \textbf{70}(1-2),  363--374 (1994).
  \doi{10.1016/0004-3702(94)90111-2}

\bibitem{Mon}
Montague, R.: Universal grammar. Theoria  \textbf{36}(3),  373--398 (1970)

\bibitem{MonEtAl10}
Montali, M., Pesic, M., van~der Aalst, W.M.P., Chesani, F., Mello, P., Storari,
  S.: Declarative specification and verification of service choreographiess.
  {ACM} Trans. Web  \textbf{4}(1),  3:1--3:62 (2010)

\bibitem{Pac}
Pacuit, E.: {Neighborhood Semantics for Modal Logic}. Springer (2017)

\bibitem{Dep06}
de~Paiva, V.: Constructive description logics: what, why and how. In: Context
  Representation and Reasoning, Riva del Garda (2006)

\bibitem{Pau}
Pauly, M.: A modal logic for coalitional power in games. Journal of logic and
  computation  \textbf{12}(1),  149--166 (2002)

\bibitem{Ross}
Ross, A.: Imperatives and logic. Philosophy of Science  \textbf{11}(1),  30--46
  (1944)

\bibitem{Sch15}
Scheele, S.: Model and proof theory of constructive {ALC:} constructive
  description logics. Ph.D. thesis, University of Bamberg (2015)

\bibitem{Sco}
Scott, D.: Advice on modal logic. In: Philosophical problems in logic, pp.
  143--173. Springer (1970)

\bibitem{Seg}
Segerberg, K.: An essay in classical modal logic. Ph.D. thesis, Stanford
  University (1971)

\bibitem{SeyErd09}
Seylan, I., Erdur, R.C.: A tableau decision procedure for {$\mathcal{ALC}$}
  with monotonic modal operators and constant domains. ENTCS  \textbf{231},
  113--130 (2009)

\bibitem{SeyJam10}
Seylan, I., Jamroga, W.: Coalition description logic with individuals. ENTCS
  \textbf{262},  231--248 (2010)

\bibitem{SeyJam09}
Seylan, I., Jamroga, W.: { Description logic for coalitions}. In: {AAMAS}. pp.
  425--432 (2009)

\bibitem{Tro}
Troquard, N.: Reasoning about coalitional agency and ability in the logics of
  “bringing-it-about”. Autonomous Agents and Multi-Agent Systems
  \textbf{28}(3),  381--407 (2014)

\bibitem{Var1}
Vardi, M.Y.: On epistemic logic and logical omniscience. In: {TARK}. pp.
  293--305 (1986)

\bibitem{Var2}
Vardi, M.Y.: On the complexity of epistemic reasoning. In: {LICS}. pp. 243--252
  (1989)

\bibitem{Wright}
Von~Wright, G.H.: Deontic logic. Mind  \textbf{60}(237),  1--15 (1951)

\bibitem{WolZak98}
Wolter, F., Zakharyaschev, M.: On the decidability of description logics with
  modal operators. In: {KR}. pp. 512--523 (1998)

\end{thebibliography}

\newpage

\appendix

\section{Modelling Scenario
}
\label{sec:model}

In the following,
we present the modelling of an example scenario in the classic domain of multi-agent purchase choreography~\cite{MonEtAl10}.
The aim is twofold.
First, it displays some of the limitations of modalities defined on relational frames, motivating the adoption of neighbourhood semantics in modal extensions of description logics.
Second, it illustrates the expressivity of (non-normal) modal description logic languages, showing interactions between modalities and the constructs of the standard description logic $\ALC$.

Our multi-agent setting involves
a customer $\mathit{c}$,
a marketplace
$\mathit{m}$,
a seller $\mathit{s}$,
and
a warehouse $\mathit{w}$,
with
agency operators $\mathbb{D}_i$ and $\mathbb{C}_i$, for $i \in \{ c, m, s, w \}$, read as `agent $i$ does/makes' and `agent $i$ can do/make', respectively~\cite{Elg,GovernatoriRotolo}. 
Concept names $\mathsf{Ord}$, $\mathsf{Prod}$, and $\mathsf{InCatal}$ are used to represent, respectively, orders, products, and the class of objects displayed as in-catalogue, while $\mathsf{req}$ is a role name for the request relation.
The formula
$
\label{eq:1eq}
\mathsf{Ord} \equiv \mathbb{D}_{c}\exists \mathsf{req}.(\mathsf{Prod} \sqcap \mathsf{InCatal} )
$
defines an order $\mathsf{Ord}$ as a request made by customer $c$ of an in-catalogue
product.
Using the concept name $\mathsf{Confirm}$ to represent the class of objects that are confirmed, we also have that
$
\label{eq:2eq}
\exists \mathsf{req}.(\mathsf{Prod} \sqcap \mathsf{InCatal} ) \sqsubseteq \mathsf{Confirm} \sqcup \lnot \mathsf{Confirm},
$
meaning that any request of an in-catalogue product is either confirmed or not confirmed.
However, relational semantics validates the so-called \emph{$\mathbf{M}$-principle} (often called \emph{monotonicity}) as well, according to which $C \sqsubseteq D$ always entails $\mathbb{D}_{c} C \sqsubseteq \mathbb{D}_{c} D$, for any concepts $C, D$.
Thus,
from the $\mathbf{M}$-principle 
we would
obtain
$
\label{eq:monconc}
\mathsf{Ord} \sqsubseteq \mathbb{D}_{c} ( \mathsf{Confirm} \sqcup \lnot \mathsf{Confirm} ),
$
meaning that any order
is made confirmed or not confirmed by $c$. This is an unwanted conclusion in our agency-based scenario, since customers' actions should be unrelated to any aspect of order confirmation.

Moreover, assume that the concept name $\mathsf{SubmitOrd}$ stands for the class of submitted orders, and that $\mathsf{PartConf}$ and $\mathsf{Reject}$ are used in our knowledge base to represent, respectively, the partially confirmed and the rejected
entities.
Now consider the formula
$
\label{eq:agglprem}
\mathsf{SubmitOrd} \sqsubseteq \mathbb{C}_{s}\mathsf{Confirm} \sqcap \mathbb{C}_{s}\mathsf{PartConf} \sqcap \mathbb{C}_{s}\mathsf{Reject},
$
stating that a submitted order can be confirmed, can be partially confirmed, and can be rejected by the seller $s$.
On relational frames, however, we have that
$ \mathbb{C}_{s} C \sqcap \mathbb{C}_{s} D \sqsubseteq \mathbb{C}_{s}( C \sqcap D ) $
 is a valid formula, for any concepts $C, D$, known as the \emph{$\mathbf{C}$-principle} (or \emph{agglomeration}).
 Therefore, by 
 the $\mathbf{C}$-principle, under relational semantics we would be forced to conclude that
$
\label{eq:agglconc}
\mathsf{SubmitOrd} \sqsubseteq \mathbb{C}_{s}(\mathsf{Confirm} \sqcap \mathsf{PartConf} \sqcap \mathsf{Reject}),
$
meaning that any submitted order is such that the seller $s$ has the ability to make it confirmed, partially confirmed, and rejected, all \emph{at once}, which is unreasonable.


Consider now the formula
$
\label{eq:necprem}
\top \sqsubseteq \mathsf{Confirm} \sqcup \lnot \mathsf{Confirm},
$
i.e., the truism stating that anything is either confirmed or not confirmed.
By the so called \emph{$\mathbf{N}$-principle} (or \emph{necessitation}) of relational semantics, we have that if $\top \sqsubseteq C$ is valid on relational frames, then $\top \sqsubseteq \mathbb{D}_{c} C$ holds as well, for any concept $C$.
Thus, from
the $\mathbf{N}$-principle of relational semantics
it would follow
that
$
\label{eq:necprem}
\top \sqsubseteq \mathbb{D}_{c} (\mathsf{Confirm} \sqcup \lnot \mathsf{Confirm}),
$
thereby forcing us to the consequence that every object is made by customer $c$ to be either confirmed or not confirmed, and hence leading again to an unreasonable connection between customer's actions and confirmation of orders.
In fact, since customer $c$ plays no role in confirmation actions, it is sensible to assume that, for any object of the domain, it is not the case that $c$ makes it confirmed or not confirmed. This can be achieved by the formula
$
\label{eq:qprinc}
\top \sqsubseteq \lnot \mathbb{D}_{c} (\mathsf{Confirm} \sqcup \lnot \mathsf{Confirm}),
$
an instance of a principle sometimes known as the \emph{$\mathbf{Q}$-principle}, which is \emph{unsatisfiable} in relational frames, while admissible over neighbourhood ones.

Finally, we consider additional principles that can be adopted both in relational and neighbourhood semantics.
The formula
$
\label{eq:tprinc}
\mathbb{D}_{w} \mathsf{Avail}  \sqsubseteq \mathsf{Avail}
$
states that
anything that is \emph{made} available by the warehouse $w$ is \emph{actually} available.
This is an instance of the so-called \emph{$\mathbf{T}$-principle} (also \emph{factivity principle}), well-known in modal logic, particularly for its epistemic applications (if an agent \emph{knows} something, it has to be true).
Both in relational and neighbourhood semantics, the $\mathbf{T}$-principle entails the so-called \emph{$\mathbf{D}$-principle}.
This is instantiated by
$
\label{eq:dprinc}
\mathbb{D}_{w} \mathsf{Avail}  \sqsubseteq \lnot \mathbb{D}_{w} \lnot \mathsf{Avail},
$
a formula asserting that
anything that is \emph{made available} by the warehouse $w$ is \emph{not made unavailable} by $w$.
This principle, also well-known for its epistemic implications (anything that is \emph{known} by an agent is \emph{compatible} with their knowledge),
in relational semantics is \emph{equivalent} to a much lesser known principle, sometimes called \emph{$\mathbf{P}$-principle}.
An example of it is given by the formula
$
 \top \sqsubseteq \lnot \mathbb{D}_{w} (\mathsf{Avail} \sqcap \lnot \mathsf{Avail}),
$
which states that, for any object, it is not the case that the warehouse $w$ makes it both available \emph{and} unavailable.
Neighbourhood semantics, under which the $\mathbf{D}$- and the $\mathbf{P}$-principle are \emph{not equivalent}, allows for distinctions between modal constraints that are hence more fine-grained than in relational semantics.

\begin{figure*}
\scriptsize
  \[
      \begin{array}{@{}r@{~~}c@{~~}l@{\qquad}r@{~~}c@{~~}l@{}}
        \top & \sqsubseteq & \forall \mathsf{req}.\mathsf{Prod}
        &
		\mathsf{InvalidOrd} & \sqsubseteq &\mathbb{C}_{s} \mathsf{Cancel}
        \\
         \mathsf{Confirm} & \sqsubseteq & \lnot \mathsf{PartConf}
        &
		\mathsf{SubmitOrd} & \sqsubseteq & \mathbb{C}_{s} \forall \mathsf{req}.
(\mathbb{D}_{w} \mathsf{Avail} \sqcup \mathbb{D}_{w} \lnot \mathsf{Avail})
        \\
         \mathsf{Confirm} & \sqsubseteq & \lnot \mathsf{Reject}
        &
       \mathsf{SubmitOrd} & \sqsubseteq & \mathbb{C}_{s}\mathsf{Confirm} \sqcap \mathbb{C}_{s}\mathsf{PartConf} \sqcap \mathbb{C}_{s}\mathsf{Reject}
        \\
         \mathsf{PartConf} & \sqsubseteq & \lnot \mathsf{Reject}
        &
         \mathbb{D}_{s} \mathsf{Confirm} & \sqsubseteq & \forall \mathsf{req}.
\mathbb{D}_{w}\mathsf{Avail}
        \\
        \mathsf{Ord} & \equiv & \mathbb{D}_{c}\exists \mathsf{req}.(\mathsf{Prod} \sqcap \mathsf{InCatal} )
        &
        \mathbb{C}_{s} \mathsf{PartConf} & \sqsupseteq & \exists \mathsf{req}.
\mathbb{D}_{w}\mathsf{Avail} \ \sqcap
\exists \mathsf{req}.
\mathbb{D}_{w} \lnot \mathsf{Avail}
        \\
       \mathsf{SubmitOrd} & \equiv & \mathsf{Ord} \sqcap \mathbb{D}_{c} \mathsf{Submit}
        &
        \mathbb{C}_{s} \mathsf{Reject} & \sqsupseteq & \forall \mathsf{req}.
\mathbb{D}_{w}\lnot\mathsf{Avail}
        \\
        \mathsf{IncomplOrd} & \equiv & \mathsf{Ord} \sqcap \lnot \mathbb{D}_{c} \mathsf{Submit}
       &
       \mathbb{D}_{m} (\mathsf{ConfirmOrd} & \sqsubseteq & \lnot \mathbb{C}_{s} \mathsf{Reject})
       \\               
       \mathsf{InvalidOrd} & \equiv & \mathsf{Ord} \sqcap \lnot \mathbb{C}_{c} \mathsf{Submit} 
       &
        \mathsf{DispatchOrd} & \equiv & \mathsf{Ord} \sqcap \mathbb{D}_{s} \mathbb{D}_{w} \mathsf{Dispatch}
         \\
\mathsf{ConfirmOrd} & \equiv & \mathsf{Ord} \sqcap \mathbb{D}_{s} \mathsf{Confirm}
		&
		\mathbb{D}_{m} (\mathsf{ConfirmOrd} & \sqsubseteq & \mathsf{DispatchOrd})
      \end{array}
  \]
  \caption{Purchase choreography knowledge base.
  }
  \label{fig:purchasekb}
\end{figure*}

The full knowledge base describing the purchase choreography scenario
is reported in Figure~\ref{fig:purchasekb}.
Notably, it displays a great range of interactions between modal and description logics constructs.
The modal operators $\mathbb{D}_{i}$ and $\mathbb{C}_{i}$ are axiomatized similarly to~\protect\cite{Elg}:
$\mathbb{D}_{i}$ obeys the $\mathbf{C}$- and $\mathbf{T}$-principles;
  and both satisfy the $\mathbf{Q}$-principle, and the $\mathbf{E}$-principle:
  $C \equiv D$ entails $\mathbb{D}_{i} C \equiv \mathbb{D}_{i} D$ and $\mathbb{C}_{i} C \equiv \mathbb{C}_{i} D$.

On the left column, the first four axioms impose that the range of the request relation is a product, and that 
the classes of confirmed, partially confirmed, and rejected objects are all disjoint.
We have already discussed the fifth axiom, which is the definition of an order as a request, made by a customer, for some in-catalogue product.
The last four axioms on the left column define, respectively:
a submitted order as an order that is actually submitted by a customer;
an incomplete order as an order that is not submitted by the customer;
an invalid order as an order that cannot be submitted by the customer;
and a confirmed order as an order that is confirmed by the seller.

On the right column, the first formula states that an invalid order can be cancelled by the seller.
The second axiom asserts that, given a submitted order, the seller can check at the warehouse for availability of all of its requested products.
The third formula, as already discussed, requires that a submitted order can be confirmed, can be partially confirmed, and can be rejected by the seller.
The subsequent three axioms impose, respectively, the following constraints:
anything that is confirmed by the seller has to request only products made available by the warehouse;
anything that requests products that are made available, as well as products that are brought about to be unavailable by the warehouse, can be partially confirmed by the seller;
finally, any request of products that are all unavailable at the warehouse can be rejected by the seller.
The third formula from the bottom of the right column states that the marketplace enforces that a confirmed order cannot be rejected by the seller.
The second last axiom defines a dispatched order as an order that the seller makes the warehouse dispatch.
Finally, the last formula imposes that the marketplace brings it about that any confirmed order is also dispatched.

\section{Proofs for Section~\ref{sec:prelim}}

\PropImplicationSystem*
\begin{proof}
Point 1. Suppose that, for every $w\in \Wmc$, $\alpha,\beta\subseteq \Wmc$, we have: ($\mathbf{M}$-condition) $\alpha\in \Nmc_{i}(w)$ and $\alpha\subseteq\beta$ implies $\beta\in \Nmc_{i}(w)$; and ($\mathbf{Q}$-condition) $\Wmc \not \in \Nmc_{i}(w)$. This means that $\alpha \not \in \Nmc_{i}(w)$, for every $\alpha \subseteq \Wmc$, i.e., $\Nmc_{i}(w) = \emptyset$.
Thus, every condition, except for the $\mathbf{N}$-condition, is satisfied by $\Nmc_{i}$.

Point 2.
%
Suppose~$(i)$ that, for every $w\in \Wmc$, $\alpha,\beta\subseteq \Wmc$, we have: ($\mathbf{M}$-condition) $\alpha\in \Nmc_{i}(w)$ and $\alpha\subseteq\beta$ implies $\beta\in \Nmc_{i}(w)$; and ($\mathbf{D}$-condition) $\alpha \in \Nmc_{i}(w)$ implies $\Wmc \setminus \alpha \not \in \Nmc_{i}(w)$.
Towards a contradiction, suppose that $\emptyset \in \Nmc_{i}(w)$. Then, we have $\Wmc = \Wmc \setminus \emptyset \not \in \Nmc_{i}(w)$ as well. By contraposition, this implies in particular that $\emptyset \not \in \Nmc_{i}(w)$, a contradiction. Thus, ($\mathbf{P}$-condition) $\emptyset \not \in \Nmc_{i}(w)$

Moreover, suppose~$(ii)$ that, for every $\alpha\subseteq \Wmc$, we have: ($\mathbf{N}$-condition) $\Wmc \in \Nmc_{i}(w)$, i.e., $\Wmc \setminus \emptyset \in \Nmc_{i}(w)$; and ($\mathbf{D}$-condition) $\alpha \in \Nmc_{i}(w)$ implies $\Wmc \setminus \alpha \not \in \Nmc_{i}(w)$. By contraposition, we obtain that ($\mathbf{P}$-condition) $\emptyset \not \in  \Nmc_{i}(w)$.

Finally, suppose~$(iii)$ that, for every $w\in \Wmc$, $\alpha,\beta\subseteq \Wmc$, we have: ($\mathbf{T}$-condition) $\alpha\in \Nmc_{i}(w)$ implies $w \in \alpha$. Then, we have ($\mathbf{P}$-condition) $\emptyset \not \in \Nmc_{i}(w)$, for otherwise we would get a contradiction.

Point 3.
Suppose~$(i)$ that, for every $w\in \Wmc$, $\alpha,\beta\subseteq \Wmc$, we have: ($\mathbf{C}$-condition)
$\alpha\in \Nmc_{i}(w)$ and $\beta\in \Nmc_{i}(w)$ implies $\alpha\cap\beta\in \Nmc_{i}(w)$; and ($\mathbf{P}$-condition) $\emptyset \not \in  \Nmc_{i}(w)$. Given $\alpha \in \Nmc_{i}(w)$, suppose towards a contradiction that $\Wmc \setminus \alpha \in \Nmc_{i}(w)$. From this, we obtain that $\emptyset = \alpha \cap (\Wmc \setminus \alpha) \in \Nmc_{i}(w)$, a contradiction. Thus, we have that ($\mathbf{D}$-condition) $\alpha \in \Nmc_{i}(w)$ implies $\Wmc \setminus \alpha \not \in \Nmc_{i}(w)$.

Moreover, suppose~$(ii)$ that, for every $w\in \Wmc$, $\alpha,\beta\subseteq \Wmc$, we have: ($\mathbf{T}$-condition) $\alpha\in \Nmc_{i}(w)$ implies $w \in \alpha$. Consider $\alpha\in \Nmc_{i}(w)$ and suppose, towards a contradiction, that also $\Wmc \setminus \alpha \in \Nmc_{i}(w)$. We obtain that $w \in \alpha$ 
and $w \in \Wmc \setminus \alpha$ as well, a contradiction. Hence, we have that ($\mathbf{D}$-condition) $\alpha \in \Nmc_{i}(w)$ implies $\Wmc \setminus \alpha \not \in \Nmc_{i}(w)$.

Point 4. Straightforward, because otherwise we immediately have a contradiction.
\end{proof}


\PropCorresp*
\begin{proof}
Here we present a proof  only for $\MLALC{n}$ concept inclusions 
 and only for the basic principles $L \in \{ \mathbf{E, M, C, N, P, Q, D, T} \}$.
For $\MLALC{n}$ formulas, the proof is similar to the case of propositional non-normal modal logics (see e.g.~\cite{Pac}).
More complex principles  (e.g. $\mathbf{EMCN}$) 
can be obtained by suitably combining the basic principles. 
%
%
%
Let $\Fmc = (\Wmc, \{ \Nmc_{i} \}_{i \in J})$ be a neighbourhood frame and let 
$L$ be as above.
\begin{description}
	\item[\textnormal{$\mathit{L = \mathbf{E}}$.}]
	If $C \equiv D$ is valid on $\Fmc$, then for all
	$\Mmc = (\Fmc, \Imc)$
	based on $\Fmc$, and all $w$ in $\Mmc$, we have 
$C^{\Int_w}  = D^{\Int_w}$.
Thus, for all $d \in \Delta_{w}$, $\llbracket C \rrbracket^{\Mmc}_{d} = \llbracket D \rrbracket^{\Mmc}_{d}$.
So for all $v\in\W$, $\llbracket C \rrbracket^{\Mmc}_{d} \in \Nmc_{i}(v)$ iff $\llbracket D \rrbracket^{\Mmc}_{d} \in \Nmc_{i}(v)$,
which implies $d\in (\B_{i} C)^{\Int_v}$ iff $d\in(\B_{i} D)^{\Int_v}$,
that is $(\B_{i} C)^{\Int_v} = (\B_{i} D)^{\Int_v}$.
Then, $\B_{i} C \equiv \B_{i} D$ is valid on $\Fmc$, for all $i\in J$.
	\item[\textnormal{$\mathit{L = \mathbf{M}}$.}]
From right to left,
assume that $\Fmc$ is supplemented and $C \sqsubseteq D$ is valid on $\Fmc$.
Then, for all
$\Mmc = (\Fmc, \Imc)$
based on $\Fmc$, and all $w$ in $\Mmc$, we have 
$C^{\Int_w} \subseteq D^{\Int_w}$.
Thus, for all $d\in\Delta_{w}$, $\llbracket C \rrbracket^{\Mmc}_{d} \subseteq \llbracket D \rrbracket^{\Mmc}_{d}$.
By supplementation we have that, for all $v\in\W$, $\llbracket C \rrbracket^{\Mmc}_{d} \in \Nmc_{i}(v)$ implies $\llbracket D \rrbracket^{\Mmc}_{d} \in \Nmc_{i}(v)$.
So $d\in (\B_{i} C)^{\Int_v}$ implies $d\in(\B_{i} D)^{\Int_v}$,
that is $(\B_{i} C)^{\Int_v} \subseteq (\B_{i} D)^{\Int_v}$.
Then $\B_{i} C \sqsubseteq \B_{i} D$ is valid on $\Fmc$.
For the left-to-right direction,
assume that $\Fmc$ is not supplemented. Then 
there are $w\in\Wmc$, $\alpha,\beta\subseteq\Wmc$ such that $\alpha\subseteq\beta$, $\alpha\in\Nmc_{i}(w)$ and $\beta\notin\Nmc_{i}(w)$.
We define over $\Fmc$ the model
$\Mmc= ( \Fmc, \Int )$,
where
$\Delta_{w} = \{d\}$, for every $w \in \Wmc$,
and the interpretation of two concept names $A, B \in \NC$ is defined as follows:
$d\in A^{\Int_v}$ iff $v\in\alpha$, and
$d\in B^{\Int_v}$ iff $v\in\beta$
(and defined arbitrarily on all other symbols in $( \NC \cup \NR ) \setminus \{ A, B \}$).
As a consequence, we have that $\llbracket A \rrbracket_d^{\Mmc} = \alpha$ and $\llbracket B \rrbracket_d^{\Mmc} = \beta$,
which implies $\llbracket A \rrbracket_d^{\Mmc} = \llbracket A \rrbracket_d^{\Mmc} \cap \llbracket B \rrbracket_d^{\Mmc} = \llbracket A \sqcap B \rrbracket_d^{\Mmc}$.
Thus $\llbracket A \sqcap B \rrbracket_d^{\Mmc}\in\Nmc_{i}(w)$ and $\llbracket B \rrbracket_d^{\Mmc}\notin \Nmc_{i}(w)$. 
By definition we have $d\in(\B_{i}(A\sqcap B))^{\Int_w}$ and $d\notin(\B_{i} B)^{\Int_w}$.
	\item[\textnormal{$\mathit{L = \mathbf{C}}$.}]
From right-to-left,
assume that $\Fmc$ is closed under intersection. 
Moreover, let
$\Mmc = (\Fmc, \Imc)$
be a model based on $\Fmc$, with $w$ world of $\Mmc$, and $d \in \Delta$
such that $d\in(\B_{i} C \sqcap \B_{i} D)^{\Int_w}$.
Thus $d\in(\B_{i} C)^{\Int_w}$ and $d\in(\B_{i} D)^{\Int_w}$,
that is $[C]_d^{\Mmc}, [D]_d^{\Mmc} \in\Nmc_{i}(w)$.
By closure under intersection, $[C]_d^{\Mmc} \cap [D]_d^{\Mmc} =  [C\sqcap D]_d^{\Mmc}\in\Nmc_{i}(w)$.
Then $d\in(\B_{i} ( C \sqcap D))^{\Int_w}$.
For the left-to-right direction,
assume that $\Fmc$ is not closed under intersection.
Then, there are $w\in\Wmc$, $\alpha,\beta\subseteq\Wmc$ such that $\alpha, \beta\in\Nmc_{i}(w)$ and $\alpha\cap\beta\notin\Nmc_{i}(w)$.
We define over $\Fmc$ the model
$\Mmc = ( \Fmc, \Int )$,
where
$\Delta_{w} = \{d\}$, for all $w \in \Wmc$,
and the interpretation of two concept names $A, B \in \NC$ is defined as follows:
$d\in A^{\Int_v}$ iff $v\in\alpha$, and
$d\in B^{\Int_v}$ iff $v\in\beta$
(and defined arbitrarily on all other symbols in $( \NC \cup \NR ) \setminus \{ A, B \}$).
We have $\llbracket A \rrbracket_d^{\Mmc} = \alpha$ and $\llbracket B \rrbracket_d^{\Mmc} = \beta$, which implies
$d\in(\B_{i} A)^{\Int_w}$ and $d\in(\B_{i} B)^{\Int_w}$.
Moreover, $\llbracket A \sqcap B \rrbracket_d^{\Mmc} = \llbracket A \rrbracket_d^{\Mmc} \cap \llbracket B \rrbracket_d^{\Mmc} = \alpha\cap\beta \notin\Nmc_{i}(w)$.
Thus $d\notin(\B_{i}(A\sqcap B))^{\Int_w}$.
	\item[\textnormal{$\mathit{L = \mathbf{N}}$.}]
From right-to-left,
assume that $\Fmc$ contains the unit and $\top \sqsubseteq C$ is valid on $\Fmc$.
Then for all
$\Mmc = (\Fmc, \Imc)$
based on $\Fmc$, and all $w$ in $\Mmc$, we have 
$C^{\Int_w} = \Delta_{w}$. 
As a consequence, for all
$d\in\Delta_{w}$,
$\llbracket C \rrbracket^{\Mmc}_{d} = \W$.
By the property of containing the unit we have that, for all $v\in\W$, $\llbracket C \rrbracket^{\Mmc}_{d} \in \Nmc_{i}(v)$.
So $d\in (\B_{i} C)^{\Int_v}$ for all
$d\in\Delta_{v}$, 
that is, $\top \sqs \B_{i} C$ is valid on $\Fmc$.
For the left-to-right direction,
assume that $\Fmc$ does not contain the unit, i.e.,
there is $w\in\Wmc$ such that $\W\notin\Nmc_{i}(w)$.
Then, for all models
$\Mmc= ( \Fmc, \Int )$
based on $\Fmc$,
all $w \in \Wmc$,
and all
$d\in\Delta_{w}$,
we have $d\in\top^{\Int_w}$.
Moreover, 
since $d\in (\B_{i}\top)^{\Int_w}$ iff $\llbracket \top \rrbracket_d^{\Mmc}\in\Nmc_{i}(w)$ iff $\W\in\Nmc_{i}(w)$,
we also have $d\notin (\B_{i}\top)^{\Int_w}$.
	\item[\textnormal{$\mathit{L = \mathbf{P}}$.}]
          From right-to-left, assume that $\Fmc = ( \Wmc, \{\Nmc_i \}_{i \in J})$ satisfies the $\mathbf{P}$-condition. I.e., for all $w \in \Wmc$ and $i \in J$, $\emptyset \not\in \Nmc_i(w)$.
          Then, for all models
          $\Mmc= ( \Fmc, \Int )$
          based on $\Fmc$,
          all $w \in \Wmc$,
          and all
          $d\in\Delta_{w}$,
          we have $\emptyset = \llbracket \bot \rrbracket^\Mmc_d \not\in \Nmc_i(w)$. So $d \not\in (\Box_i\bot)^{\Int_w}$, or equivalently $d \in (\lnot \Box_i\bot)^{\Int_w}$. Also,
          $d \in \Delta_{w} = \top^{\Int_w}$.
          So $\top^{\Int_w} \subseteq (\lnot \Box_i\bot)^{\Int_w}$. Then $\Mmc, w \models \top \sqsubseteq \lnot \Box_i \bot$. Hence $\top \sqsubseteq \lnot \Box_i \bot$ is valid on $\Fmc$.
          For the
          left-to-right direction,
          assume that $\Fmc$ does not satisfy the $P$-condition for some $i \in J$. This means that there is $w \in \Wmc$ such that $\emptyset \in \Nmc_i(w)$. So there exists a model
             $\Mmc= ( \Fmc, \Int )$
          based on $\Fmc$,
          a $w \in \Wmc$, and a
          $d\in\Delta_{w}$,
          such that $\emptyset = \llbracket \bot \rrbracket^\Mmc_d \in \Nmc_i(w)$. So $d \in (\Box_i \bot)^{\Int_w}$, or equivalently $d \not\in (\lnot\Box_i \bot)^{\Int_w}$. But $d \in \Delta_{w} = \top^{\Int_w}$. So $\top^{\Int_w} \not\subseteq (\lnot\Box_i \bot)^{\Int_w}$. Then $\Mmc, w \not\models \top \sqsubseteq \lnot\Box_i\bot$. Hence $\top \sqsubseteq \lnot \Box_i \bot$ is not valid on $\Fmc$.

	\item[\textnormal{$\mathit{L = \mathbf{Q}}$.}]
          From right to left, assume that $\Fmc = ( \Wmc, \{\Nmc_i \}_{i \in J})$ satisfies the $\mathbf{Q}$-condition. I.e., for all $w \in \Wmc$ and $i \in J$, $\Wmc \not\in \Nmc_i(w)$.
          Then, for all models
          $\Mmc= ( \Fmc, \Int )$
          based on $\Fmc$,
          all $w \in \Wmc$,
          and all
          $d\in\Delta_{w}$,
          we have $\Wmc = \llbracket \top \rrbracket^\Mmc_d \not\in \Nmc_i(w)$. So $d \not\in (\Box_i\top)^{\Int_w}$, or equivalently $d \in (\lnot \Box_i\top)^{\Int_w}$. Also $d \in \Delta = \top^{\Int_w}$. So $\top^{\Int_w} \subseteq (\lnot \Box_i\top)^{\Int_w}$. Then $\Mmc, w \models \top \sqsubseteq \lnot \Box_i \top$. Hence $\top \sqsubseteq \lnot \Box_i \top$ is valid on $\Fmc$.
          For the other direction, assume that $\Fmc$ does not satisfy the $\mathbf{Q}$-condition for some $i \in J$. This means that there is $w \in \Wmc$ such that $\Wmc \in \Nmc_i(w)$.
          So there exists a model
          $\Mmc= ( \Fmc, \Int )$
          based on $\Fmc$,
          a $w \in \Wmc$,
          and a
               $d\in\Delta_{w}$,
          such that $\Wmc = \llbracket \top \rrbracket^\Mmc_d \in \Nmc_i(w)$. So $d \in (\Box_i \top)^{\Int_w}$, or equivalently $d \not\in (\lnot\Box_i \top)^{\Int_w}$. But $d \in \Delta = \top^{\Int_w}$. So $\top^{\Int_w} \not\subseteq (\lnot\Box_i \top)^{\Int_w}$. Then $\Mmc, w \not\models \top \sqsubseteq \lnot\Box_i\top$. Hence $\top \sqsubseteq \lnot \Box_i \top$ is not valid on $\Fmc$.
          
	\item[\textnormal{$\mathit{L = \mathbf{D}}$.}]
          From right-to-left, assume that $\Fmc  = ( \Wmc, \{\Nmc_i \}_{i \in J})$ satisfies the $\mathbf{D}$-condition. Moreover, let
          $\Mmc= ( \Fmc, \Int )$
          based on $\Fmc$, with a world $w \in \Wmc$, and
          $d \in \Delta_{w}$,
          a concept $C$, and $i \in J$, all arbitrarily chosen. Suppose that $d \in (\Box_i C)^{\Int_w}$. It means that $[C]^\Mmc_d \in \Nmc_i(w)$. By the $\mathbf{D}$-condition, $\Wmc \setminus [C]^\Mmc_d \not\in \Nmc_i(w)$. Equivalently, $[\lnot C]^\Mmc_d \not\in \Nmc_i(w)$. This means that $d \not\in (\Box_i\lnot C)^{\Int_w}$, or equivalently, that $d \in (\lnot\Box_i\lnot C)^{\Int_w}$. So $(\Box_iC)^{\Int_w} \subseteq (\lnot \Box_i\lnot C)^{\Int_w}$. Then $\Mmc,w \models \Box_i C \sqsubseteq \Diamond_i C$. Hence, $\Box_i C \sqsubseteq \Diamond_i C$ is valid on $\Fmc$.
          For the left-to-right direction, assume that $\Fmc$ does not satisfy the $\mathbf{D}$-condition for $i \in J$. So there is a $w \in \Wmc$, such that, for some $\alpha \subseteq \Wmc$, we have $\alpha \in \Nmc_i(w)$ and $\Wmc \setminus \alpha \in \Nmc_i(w)$.
          We define
            $\Mmc= (\Fmc, \Int )$
          based on $\Fmc$,
          where,
          for any $v \in \Wmc$,
          $\Delta_{v} = \{d\}$,
          and $A^{\Int_v} = \{d\}$ iff $v \in \alpha$ (and defined arbitrarily on all other symbols in $( \NC \cup \NR ) \setminus \{A\}$). We have $\llbracket A \rrbracket^\Mmc_d = \{v \in \Wmc \mid d \in A^{\Int_w}\} = \alpha$, and $\llbracket \lnot A \rrbracket^\Mmc_d = \Wmc \setminus \alpha$. So $d \in (\Box_i A)^{\Int_w}$ and $d \in (\Box_i \lnot A)^{\Int_w}$, and then $d \not \in (\lnot \Box_i \lnot A)^{\Int_w}$. So $(\Box_iA)^{\Int_w} \not \subseteq (\lnot \Box_i \lnot A)^{\Int_w}$. This means that, $\Mmc, w \not\models \Box_i A \sqsubseteq \Diamond_i A$. Hence, $\Box_i C \sqsubseteq \Diamond_i C$ is not valid on $\Fmc$.

	\item[\textnormal{$\mathit{L = \mathbf{T}}$.}]
          From right-to-left, assume that $\Fmc = ( \Wmc, \{\Nmc_i \}_{i \in J})$ satisfies the $\mathbf{T}$-condition. Moreover, let
          $\Mmc= ( \Fmc, \Int )$
          based on $\Fmc$, with a world $w \in \Wmc$, and
          $d \in \Delta_{w}$,
          a concept $C$, and $i \in J$, all arbitrarily chosen. Suppose that $d \in (\Box_i C)^{\Int_w}$. It means that $\llbracket C \rrbracket^\Mmc_d \in \Nmc_i(w)$. By the $\mathbf{T}$-condition, $w \in \llbracket C \rrbracket^\Mmc_d$. So $d \in C^{\Int_w}$. So $(\Box_i C)^{\Int_w} \subseteq C^{\Int_w}$. Then $\Mmc, w \models \Box_i C \sqsubseteq C$. Hence, $\Box_i C \sqsubseteq C$ is valid on $\Fmc$.
          For the left-to-right direction, assume that $\Fmc$ does not satisfy the $\mathbf{T}$-condition for $i \in J$. So there is a $w \in \Wmc$, such that for some $\alpha \subseteq \Wmc$ we have $\alpha \in \Nmc_i(w)$ and $w \not\in \alpha$. 
          We define
          $\Mmc= (\Fmc, \Int )$
          based on $\Fmc$, where
          $\Delta_{v} = \{d\}$, for any $v \in \Wmc$,
          and $A^{\Int_v} = \{d\}$ iff $v \in \alpha$ (and defined arbitrarily on all other symbols in $( \NC \cup \NR ) \setminus \{A\}$). We have $\llbracket A \rrbracket^\Mmc_d = \{v \in \Wmc \mid d \in A^{\Int_w}\} = \alpha$. So $d \in (\Box_i A)^{\Int_w}$. Since $w \not \in \alpha$, $w \in \Wmc \setminus \alpha$. That is, $w \in \Wmc \setminus \llbracket A \rrbracket^\Mmc_d$, or equivalently $w \in \llbracket \lnot A \rrbracket^\Mmc_d$. Then, $d \in (\lnot A)^{\Int_w}$, or equivalently $d \not \in (A)^{\Int_w}$. So, 
$(\Box_i A)^{\Int_w} \not\subseteq A^{\Int_w}$,
          meaning that  $\Mmc, w \not\models \Box_i A \sqsubseteq A$. Hence, $\Box_i C \sqsubseteq C$ is not valid on $\Fmc$, as required.
\qedhere
\end{description}
\end{proof}

\PropValid*
\begin{proof}
\emph{Point~1.}
{{The $(\Rightarrow)$ direction follows from the proof of Proposition~\ref{prop:corresp}.
To see that the $(\Leftarrow)$ direction does not hold in general,
we provide the following counterexample
showing that the $\mathbf{T}$-principle holds in a model that does not satisfy the $\mathbf{T}$-condition.
Consider 
$\Mmc = ( \Wmc, \{\Nmc_i \}_{i \in J}, \Int)$,
where 
\begin{itemize}
\item $\Wmc = \{w, v\}$;
\item
$\Nmc_{i}(w) = \{\{v\},\Wmc\}$ and
$\Nmc_{i}(v) = \{\{w\},\Wmc\}$,
 for $i \in J$;
\item $\Imc_{w} = \Imc_{v}$, with $\Delta_{w} = \Delta_{v} = \{d\}$.
\end{itemize}
$\Mmc$ does not satisfy the $\mathbf{T}$-condition,
since $\{v\} \in \Nmc_{i}(w)$ but $w \notin \{v\}$.
We show that the $\mathbf{T}$-principle holds in $\Mmc$.

\begin{claim}
For all concepts $C$, 
$\llbracket C \rrbracket^{\Mmc}_{d} = \emptyset$ or
$\llbracket C \rrbracket^{\Mmc}_{d} = \Wmc$. 
\end{claim}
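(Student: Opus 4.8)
The plan is to prove the claim by structural induction on the concept $C$, exploiting two special features of the model $\Mmc$: first, that $\Delta_w = \Delta_v = \{d\}$, so that for every concept $D$ and world $u$ one has $D^{\Imc_u} = \{d\}$ if $d \in \llbracket D \rrbracket^{\Mmc}_d \cap \{u\}$-ish, i.e. the extension of $D$ at $u$ is completely determined by whether $u$ lies in the truth set $\llbracket D \rrbracket^{\Mmc}_d$; and second, that $\Imc_w = \Imc_v$, so the two worlds agree on the extension of every concept name and role name.

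For the base case $C = A \in \NC$, since $\Imc_w = \Imc_v$ we have $A^{\Imc_w} = A^{\Imc_v}$, so $d$ lies in both or in neither, whence $\llbracket A \rrbracket^{\Mmc}_d \in \{\eset, \Wmc\}$. For the inductive step I would dispatch the Boolean cases first: $\llbracket \lnot D \rrbracket^{\Mmc}_d = \Wmc \setminus \llbracket D \rrbracket^{\Mmc}_d$ and $\llbracket D \sqcap E \rrbracket^{\Mmc}_d = \llbracket D \rrbracket^{\Mmc}_d \cap \llbracket E \rrbracket^{\Mmc}_d$, and since the two-element family $\{\eset, \Wmc\}$ is closed under complement and under intersection, the induction hypothesis carries these through at once. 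For $C = \exists \role.D$: by the induction hypothesis $\llbracket D \rrbracket^{\Mmc}_d \in \{\eset, \Wmc\}$, which since $\Delta = \{d\}$ forces $D^{\Imc_w} = D^{\Imc_v}$ (either both $\{d\}$ or both $\eset$); combined with $\role^{\Imc_w} = \role^{\Imc_v}$ this yields $(\exists \role.D)^{\Imc_w} = (\exists \role.D)^{\Imc_v}$, so again $d$ lies in both or in neither and $\llbracket \exists \role.D \rrbracket^{\Mmc}_d \in \{\eset, \Wmc\}$.

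The crux is the case $C = \B_i D$, which is the only genuinely model-specific one. Here $\llbracket \B_i D \rrbracket^{\Mmc}_d = \{ u \in \Wmc \mid \llbracket D \rrbracket^{\Mmc}_d \in \Nmc_i(u) \}$, and by the induction hypothesis $\llbracket D \rrbracket^{\Mmc}_d$ is either $\Wmc$ or $\eset$. If it is $\Wmc$, then since $\Wmc \in \Nmc_i(w)$ and $\Wmc \in \Nmc_i(v)$ we get $\llbracket \B_i D \rrbracket^{\Mmc}_d = \{w,v\} = \Wmc$; if it is $\eset$, then since $\eset \notin \Nmc_i(w) = \{ \{v\}, \Wmc \}$ and $\eset \notin \Nmc_i(v) = \{ \{w\}, \Wmc \}$ we get $\llbracket \B_i D \rrbracket^{\Mmc}_d = \eset$. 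Either way the truth set is $\eset$ or $\Wmc$, closing the induction. I expect this last step to be where the work really sits: the induction hypothesis guarantees that the "non-trivial" neighbourhoods $\{v\}$ and $\{w\}$ never arise as the truth set of any concept, so the membership test defining $\B_i$ always sees $\eset$ or $\Wmc$ and therefore returns the same verdict at $w$ and at $v$ — which is exactly what then makes the $\mathbf{T}$-principle (but not the $\mathbf{T}$-condition) hold in $\Mmc$.
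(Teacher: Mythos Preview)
Your proof is correct and follows essentially the same approach as the paper's own proof: structural induction on $C$, with the base case using $\Imc_w = \Imc_v$, the Boolean and $\exists r$ cases handled via the induction hypothesis together with $r^{\Imc_w} = r^{\Imc_v}$, and the $\Box_i$ case dispatched by observing that $\Wmc$ lies in both neighbourhoods while $\emptyset$ lies in neither. Your write-up is slightly more explicit (noting closure of $\{\emptyset,\Wmc\}$ under complement and intersection, and spelling out why the non-trivial neighbourhoods $\{v\},\{w\}$ are never hit), but the argument is the same.
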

\begin{proof}[Proof of Claim]
By induction on the construction of $C$.

For the base case $C = A \in \NC$,
it follows from the definition that either
$d \in A^{\Imc_{w}}$ and $d \in A^{\Imc_{v}}$,
hence 
$\llbracket A \rrbracket^{\Mmc}_{d} = \Wmc$, 
or
$d \notin A^{\Imc_{w}}$ and $d \notin A^{\Imc_{v}}$,
hence 
$\llbracket A \rrbracket^{\Mmc}_{d} = \emptyset$.

We now show the inductive cases.
For $C = \lnot D, D \sqcap E$, the proof is immediate by applying the induction hypothesis.

For $C = \exists \role.D$, the proof follows by the application of the induction hypothesis
and the fact that $r^{\Imc_{w}} = r^{\Imc_{v}}$.

For $C = \B_{i} D$, by induction hypothesis
$\llbracket D \rrbracket^{\Mmc}_{d} = \emptyset$ or
$\llbracket D \rrbracket^{\Mmc}_{d} = \Wmc$. 
In the first case,
$\llbracket D \rrbracket^{\Mmc}_{d} \notin \Nmc_{i}(w)$
and
$\llbracket D \rrbracket^{\Mmc}_{d} \notin \Nmc_{i}(v)$,
hence $w \notin \llbracket \B_{i} D \rrbracket^{\Mmc}_{d}$
and $v \notin \llbracket \B_{i} D \rrbracket^{\Mmc}_{d}$,
thus $ \llbracket \B_{i} D \rrbracket^{\Mmc}_{d} = \emptyset$.
In the second case,
$\llbracket D \rrbracket^{\Mmc}_{d} \in \Nmc_{i}(w)$
and
$\llbracket D \rrbracket^{\Mmc}_{d} \in \Nmc_{i}(v)$,
hence $w \in \llbracket \B_{i} D \rrbracket^{\Mmc}_{d}$
and $v \in \llbracket \B_{i} D \rrbracket^{\Mmc}_{d}$,
thus $ \llbracket \B_{i} D \rrbracket^{\Mmc}_{d} = \Wmc$. 
\end{proof}

Now, given a concept $C$, suppose that $d \in (\B_{i} C)^{\Imc_{w}}$,
that is, $\llbracket C \rrbracket^{\Mmc}_{d} \in \Nmc_{i}(w)$.
By the claim, $\llbracket C \rrbracket^{\Mmc}_{d} = \emptyset$
or $\llbracket C \rrbracket^{\Mmc}_{d} = \Wmc$.
Since $\emptyset \notin \Nmc_{i}(w)$,
we have that $\llbracket C \rrbracket^{\Mmc}_{d} = \Wmc$,
and thus $w \in \llbracket C \rrbracket^{\Mmc}_{d}$.
This means that $d \in C^{\Imc_{w}}$.
By the same argument, we can show that 
$d \in (\B_{i} C)^{\Imc_{v}}$ implies
$d \in C^{\Imc_{v}}$.
Therefore $\Mmc \models \B_{i} C \sqsubseteq C$.
Similarly we can prove that $\Mmc \models \Box_{i} \varphi \to \varphi$,
for any formula $\varphi$.
}}

\emph{Point~2.}
Let $\Fmf = (W, \{ R_{i} \}_{i \in J})$ be a relational frame, and let $\Mmf = (F, \Delta, I)$ be a relational model based on $\Fmf$.


($\mathbf{E}$-principle) Follows directly from the ($\mathbf{M}$-principle) case below.

($\mathbf{M}$-principle) Assume $C \sqsubseteq D$ valid in $\Mmf$. Then $C^{I_w} \subseteq D^{I_w}$, for all $w \in W$. Now, suppose that $d\in (\B_{i} C)^{I_{w}}$, for $d \in \Delta$ and $w \in W$.
For all $v \in W$, $w \relations_{i} v$ implies $d\in C^{I_{v}}$,
hence $d\in D^{I_{v}}$.
Therefore, $d\in (\B_{i} D)^{I_{w}}$.
%

($\mathbf{C}$-principle) Assume $d\in(\B_{i} C \sqcap \B_{i} D)^{I_{w}}$, that is, $d\in(\B_{i} C)^{I_{w}}$ and $d\in(\B_{i} D)^{I_{w}}$.
Then, for all $v \in W$, $w\relations_{i} v$ implies $d\in C^{I_{v}}$ and $d\in D^{I_{v}}$,
that is $d\in (C\sqcap D)^{I_{v}}$.
Therefore, $d\in(\B_{i} ( C \sqcap D))^{I_{w}}$.

($\mathbf{N}$-principle) Assume $\top \sqsubseteq C$ valid in $\Mmf$.
Then, for all $w \in W$,  $C^{I_w} =\Delta$.
Thus, for all $d\in\Delta$ and all $v \in W$, $w \relations_{i} v$ implies $d\in C^{I_{v}}$.
In conclusion, for all $d \in \Delta$, $d\in (\B_{i} C)^{I_{w}}$. 

In conclusion, the
$\mathbf{E}$-,
$\mathbf{M}$-, $\mathbf{C}$-, and $\mathbf{N}$-principle
hold in $M$, and hence in $F$.

{{
($\mathbf{D}$-principle and $\mathbf{P}$-principle are equivalent)
It is easy to see that both the $\mathbf{D}$- and the $\mathbf{P}$-principle
hold if and only if 
$R_{i}$ is serial, for all $i \in J$
(that is, for all $w \in W$, there is $v\in W$ such that $w R_i v$).

($\mathbf{Q}$-principle does not hold)
The $\mathbf{Q}$-principle is incompatible with the $\mathbf{N}$-principle, whcih holds in relational models.
Indeed, if both principles hold in the relational model $\Mmf$,
then $\Mmf\models\top\sqsubseteq \Box_i\top\sqcap\lnot\Box_i\top$,
against the fact that $\ALC$ domains are non-empty.
}}
\end{proof}

\section{Proofs for Section~\ref{sec:tableaux}}

Here is an $\LnALC$ tableau algorithm example application.

\begin{example}
Consider the formula
$$\varphi = \lnot(\mathbb{D}_c\exists\mathsf{req}.(\mathsf{Prod} \sqcap \mathsf{InCatal}) \sqsubseteq
\mathbb{D}_c(\mathsf{Conf} \sqcup\lnot\mathsf{Conf})),$$
related to the discussion in Section~\ref{sec:model}.
We recall that the formula is unsatisfiable in models 
validating the $\mathbf{M}$-condition,
and it is satisfiable otherwise.
Here we show that the algorithm provides different answers depending whether $\mathbf{M}\in\Lvar$.
First, we rewrite $\p$ in NNF, using $\widehat{\mathbb{D}}_c$ as the dual operator of $\mathbb{D}_c$,
thus obtaining
$$\lnot(\top \sqsubseteq \widehat{\mathbb{D}}_c\forall \mathsf{req}.
(\lnot \mathsf{Prod} \sqcup \lnot \mathsf{InCatal}) \sqcup \mathbb{D}_c(\mathsf{Conf} \sqcup \lnot \mathsf{Conf})).$$
We then consider the following applications of the tableau algorithm.
In the first case we assume $\mathbf{M}\in\Lvar$:

%
%
%
%
%

\medskip
\noindent
$0 : \p$

\noindent
$0 : \mathbb{D}_c \exists\mathsf{req}.(\mathsf{Prod} \sqcap \mathsf{InCatal}) \sqcap
\widehat{\mathbb{D}}_c(\lnot\mathsf{Conf} \sqcap\mathsf{Conf})(v)$ 
\hfill ($\mathsf{R}_{\not\sqsubseteq}$)

\noindent
$1 : \exists\mathsf{req}.(\mathsf{Prod} \sqcap \mathsf{InCatal})(v)$
\hfill ($\mathsf{R}_{\mathit{L}}$)

\noindent
$1 : \lnot\mathsf{Conf} \sqcap\mathsf{Conf}(v)$
\hfill ($\mathsf{R}_{\mathit{L}}$)

\noindent
$1 : \lnot\mathsf{Conf}(v)$
\hfill ($\mathsf{R}_{\sqcap}$)

\noindent
$1 : \mathsf{Conf}(v)$
\hfill ($\mathsf{R}_{\sqcap}$)

\medskip

\noindent
The completion set constructed by the application of the $\LnALC$ tableau algorithm contains a clash,
hence the algorithm returns $\mathsf{unsatisfiable}$ on input $\p$. 
Now, assume $\mathbf{M}\notin\Lvar$:

\medskip

\noindent
$0 : \p$

\noindent
$0 : \mathbb{D}_c \exists\mathsf{req}.(\mathsf{Prod} \sqcap \mathsf{InCatal}) \sqcap
\widehat{\mathbb{D}}_c(\lnot\mathsf{Conf} \sqcap\mathsf{Conf})(v)$ 
\hfill ($\mathsf{R}_{\not\sqsubseteq}$)

\noindent
$1 : \forall \mathsf{req}.(\lnot \mathsf{Prod} \sqcup \lnot \mathsf{InCatal})(v)$
\hfill ($\mathsf{R}_{\mathit{L}}$)

\noindent
$1 : \mathsf{Conf} \sqcup \lnot \mathsf{Conf}(v)$
\hfill ($\mathsf{R}_{\mathit{L}}$)

\noindent
$1 : \mathsf{Conf}(v)$
\hfill ($\mathsf{R}_{\sqcup}$)

\medskip

\noindent
The completion set 
is clash-free and $\LnALC$-complete,
hence the algorithm returns $\mathsf{satisfiable}$ on input $\p$.
Note that the
latter
applications of $\mathsf{R}_{\mathit{L}}$
are only possible if $\mathbf{M}\notin\Lvar$.

\end{example}

In this appendix we prove termination, soundness, and completeness of the $\LnALC$ tableau algorithm.
We define the \emph{weight} $|C|$ of a concept $C$ in NNF as follows: $|A| = |\lnot A| = 0$; $|\exists r.D| = |\forall r.D| = |\Box_{i}D| = |\Diamond_{i}D| = |D| + 1$; $|D \sqcap E| = |D \sqcup E| = |D| + |E| + 1$. The \emph{weight $|\p|$} of a formula $\p$ in NNF is defined as:
$|C(a)| = |r(a,b)| = |\lnot r(a,b)| =
| (C \sqsubseteq D) | = | \lnot (C \sqsubseteq D) | = 0$; 
$|\Box_{i} \psi| = |\Diamond_i\psi| = | \psi | + 1$;
$| \psi \land \chi | = | \psi \lor \chi | = | \psi | + | \chi | + 1$. 
Observe that, for a concept or formula $\gamma$, we have that 
$| \gamma | = | \dnot \gamma |$.


\begin{restatable}[Termination]{theorem}{Termination}
	\label{thm:termination}
		The $\LnALC$ tableau algorithm for $\p$ terminates after at most $2^{p(|\fg(\p)|)}$ steps, where $p$ is a polynomial function.
\end{restatable}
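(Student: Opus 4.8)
The plan is to prove termination by bounding the size of any completion set $\T$ that can be produced by the algorithm, and by showing that each rule application strictly increases $\T$ (so it is applied only finitely often). First I would observe that every labelled constraint occurring in $\T$ has the form $n : \gamma$ where $\gamma$ is drawn from the \emph{fragment induced by} $\p$, namely $\gamma \in \fg(\p) = \forneg(\p) \cup \conneg(\p) \cup \rol(\p) \cup \ind(\p)$, possibly together with a bounded number of fresh variables as arguments; crucially no rule introduces subconcepts or subformulas outside $\fg(\p)$, since the $\mathsf{R}_{\mathit{L}}$-type rules only ever pass around the $\gamma_j, \delta$ that already appeared as $\Box_i\gamma_j, \Diamond_i\delta$ in the premise, and the description-logic rules only decompose existing concepts. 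So the only unboundedness to control is (i) the number of distinct variables occurring in a single labelled constraint system $S_n$, and (ii) the number of distinct labels $n \in \mathsf{L}_\T$.

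For (i), the key ingredient is the blocking condition: $\mathsf{R}_\exists$ is not applicable to a term $x$ that is \emph{blocked by} some $<$-smaller variable $v$, i.e. when $\{C \mid n:C(x) \in S_n\} \subseteq \{C \mid n:C(v) \in S_n\}$. Since all these concepts lie in $\conneg(\p)$, there are at most $2^{|\conneg(\p)|}$ possible concept sets, so within any $S_n$ there are at most exponentially many unblocked variables, and each generates at most $|\con(\p)|$ successors via $\mathsf{R}_\exists$; hence $|S_n|$ is bounded by $2^{p_1(|\fg(\p)|)}$ for a suitable polynomial $p_1$. For (ii), I would argue that the application conditions of $\mathsf{R}_{\mathit{L}}, \mathsf{R}_{\mathit{L}\mathbf{N}}, \mathsf{R}_{\mathit{L}\mathbf{P}}, \mathsf{R}_{\mathit{L}\mathbf{Q}}, \mathsf{R}_{\mathit{L}\mathbf{D}}$ each forbid creating a fresh label $m$ whenever a label $o$ with the ``same content'' already exists: e.g. $\mathsf{R}_{\mathit{L}}$ does not fire on $\Box_i\gamma_1,\dots,\Box_i\gamma_k,\Diamond_i\delta$ if some $o$ already satisfies $\{o:\gamma_1,\dots,o:\gamma_k,o:\delta\}\subseteq\T$ (or the dual). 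Since the triggering tuples are tuples of subformulas/subconcepts from $\fg(\p)$, and since $\mathbf{C}\notin\Lvar$ forces $k=h=1$ while $\mathbf{C}\in\Lvar$ allows $k,h$ up to $|\fg(\p)|$, the number of distinct ``demands'' that can spawn a label is at most exponential in $|\fg(\p)|$; each spawns at most a polynomial (or, in the $\mathbf{C}$/non-$\mathbf{M}$ case, exponential) number of labels, so $|\mathsf{L}_\T| \le 2^{p_2(|\fg(\p)|)}$. Combining, $|\T| \le |\mathsf{L}_\T| \cdot \max_n |S_n| \le 2^{p(|\fg(\p)|)}$.

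Finally I would close the argument by noting that every rule application strictly enlarges $\T$: the non-branching rules add at least one new constraint (their application conditions explicitly require that the conclusion is not already present), and the branching rules pick one disjunct none of whose constraints is already in $\T$. Since $\T$ only grows and is bounded in size by $2^{p(|\fg(\p)|)}$, the \textbf{while} loop of Algorithm~\ref{alg:tableau} executes at most $2^{p(|\fg(\p)|)}$ times, giving the stated bound (adjusting the polynomial $p$ to absorb the per-step bookkeeping cost). The main obstacle I expect is bookkeeping-heavy rather than conceptual: one must be careful that the $\mathsf{R}_{\mathit{L}}$ family, when $\mathbf{M}\notin\Lvar$, produces $k+1$ (resp.\ $kh+1$) branches whose fresh label $m$ really is controlled by the ``there is no $o$ such that \dots'' side condition across \emph{all} branches of the non-deterministic computation, so that no infinite regress of label creation is possible; making the counting of distinct label-triggering tuples precise in the presence of the $\mathbf{C}$-condition (where $k,h$ range up to $|\fg(\p)|$, so there are up to $2^{|\fg(\p)|}$ such tuples) is the delicate point, but it still lands within the exponential bound.
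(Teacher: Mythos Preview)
Your proposal is correct and follows essentially the same two-part decomposition as the paper: bound each $S_n$ via the blocking condition (the paper's Claim on local size), and bound $|\mathsf{L}_\T|$ via the application conditions of the label-generating modal rules (the paper's Claim on the global number of labels), then combine. The only refinement you do not make explicit is that when $\mathbf{C}\notin\Lvar$ the paper obtains a \emph{polynomial} bound on $|\mathsf{L}_\T|$ (since $k=h=1$ forces the triggering combinations to be at most pairs from $\fg(\p)$), which is not needed for the termination statement itself but is what drives the sharper finite-model corollary.
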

\begin{proof}
We first require the following claims.
\begin{claim}
\label{cla:termlocal}
Let $\T$ be a completion set obtained by applying the $\LnALC$ tableau algorithm for $\p$.
For each $n \in \mathsf{L}_{\T}$,
the number of 
$n$-labelled constraints
for $\p$
in $\T$ does not exceed $2^{q(|\fg(\p)|)}$, where $q$ is a polynomial function. 
\end{claim}
\begin{proof}[Proof of Claim]
We remark that, for each $S_n\subseteq\T$, the $\LnALC$ tableaux algorithm behaves exactly like a standard (non-modal) $\ALC$ tableaux algorithm (cf. e.g.~\cite[Theorem 15.4]{GabEtAl03}), 
{{except possibly for the additional rule $\mathsf{R}_{\mathit{L}\mathbf{T}}$ which introduces at most
$| \fg(\p) |$ $n$-labelled contraints}}.
\end{proof}

\begin{claim}
\label{cla:termglobal}
{{Let $\T$ be a completion set obtained by applying the $\LnALC$ tableau algorithm for $\p$.
Then $|\mathsf{L}_{\T}| \leq r( |\fg(\p)| )$
if $\mathbf{C} \notin \mathit{L}$, and 
$|\mathsf{L}_{\T}| \leq 2^{r'(|\fg(\p)|)}$
if $\mathbf{C} \in \mathit{L}$,
for some for some polynomial functions $r$ and $r'$.
}}
\end{claim}
\begin{proof}[Proof of Claim]
{{
Labels $n$ are generated in $\T$ by means
of the application of the rules $\mathsf{R}_{\mathit{L}}$,
$\mathsf{R}_{\mathit{L}\mathbf{N}}$,
$\mathsf{R}_{\mathit{L}\mathbf{P}}$,
$\mathsf{R}_{\mathit{L}\mathbf{Q}}$,
$\mathsf{R}_{\mathit{L}\mathbf{D}}$.
If $\mathbf{C}\notin\mathit{L}$,
these rules are applied to either one or two $n$-labelled contraints,
while if $\mathbf{C}\in\mathit{L}$,
they are applied to $k$, $k+1$ or $k + h$ $n$-labelled contraints.
By the application conditions of the rules,
each such combination
of constraints
generates at most one label $m$.
Therefore, the number of
labels that can be generated in $\T$ is bounded by the number of possible
such
combinations,
which is at most $2 \cdot |\fg(\p)|^2 + 3 \cdot |\fg(\p)|$,  if $\mathbf{C}\not\in\mathit{L}$,
and at most 
$2^{|\fg(\p)|} \cdot |\fg(\p)| +
|\fg(\p)| +
2^{|\fg(\p)| + 1} + 
2^{2|\fg(\p)|}$, 
if $\mathbf{C}\in\mathit{L}$.
}}
\end{proof}

{{
The theorem is then a consequence of the following observations.
Given a completion set $\T$ constructed by the $\LnALC$ tableau algorithm for $\p$,
we have by Claim~\ref{cla:termglobal} that
the number of applications of 
the rules
$\mathsf{R}_{\mathit{L}}$, $\mathsf{R}_{\mathit{L}\mathbf{N}}$, $\mathsf{R}_{\mathit{L}\mathbf{P}}$, $\mathsf{R}_{\mathit{L}\mathbf{Q}}$, and $\mathsf{R}_{\mathit{L}\mathbf{D}}$ 
is bounded by $| \mathsf{L}_{\T} |$, which is at most 
$r( |\fg(\p)| )$,
for
$\mathit{L}$ such that $\mathbf{C}\notin\Lvar$,
and at most
$2^{r'(|\fg(\p)|)}$, for $\Lvar$ such that
 $\mathbf{C} \in \mathit{L}$,
where $r$ and $r'$ are polynomial functions.
Moreover, for a given label $n$, 
the number of possible applications of the rules
$\mathsf{R}_{\land}$, $\mathsf{R}_{\lor}$ and $\mathsf{R}_{\mathit{L}\mathbf{T}}$ to constraints of the form $n: \psi$
is linearly bounded by $\fg(\p)$,
hence there are at most $| \mathsf{L}_{\T} | \cdot q'(| \fg(\p) |)$ such rule applications, where $q'$ is a polynominal function.
Finally, by Claim~\ref{cla:termlocal}, 
for each label $n$, the number of applications of the rules
$\mathsf{R}_{\sqcap}, \mathsf{R}_{\sqcup}, \mathsf{R}_{\forall}, \mathsf{R}_{\exists}, \mathsf{R}_{\sqsubseteq}, \mathsf{R}_{\not\sqsubseteq}$  and $\mathsf{R}_{\mathit{L}\mathbf{T}}$ 
 to constraints of the form $n: C(x)$ or $n: r(x, y)$ is bounded by $2^{q(|\fg(\p)|)}$, where $q$ is a polynomial function,
hence there are at most $| \mathsf{L}_{\T} | \cdot 2^{q(|\fg(\p)|)}$ such rule applications.
It follows that the overall number of rule applications is bounded by 
$2^{p(|\fg(\p)|)}$
for some polynomial function $p$.
}}
\end{proof}

We now proceed to prove that the $\LnALC$ tableau algorithm is sound. 

\begin{restatable}[Soundness]{theorem}{Soundness}
	\label{thm:soundness}
	If there exists an execution of the $\LnALC$ tableau algorithm for $\p$ that constructs a complete and clash-free completion set, then $\p$ is $\LnALC$ satisfiable.
\end{restatable}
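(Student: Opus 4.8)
The plan is to extract, from a complete and clash-free completion set $\T$ produced by the algorithm, a varying domain neighbourhood model $\Mmc$ based on an $L^n$ frame, together with a world $w_0$ such that $\Mmc, w_0 \models \p$. First I would take the set of worlds to be $\Wmc = \mathsf{L}_\T$, i.e.\ one world per label occurring in $\T$, with $w_0$ the initial label $0$. For each label $n$ I would build the local interpretation $\Imc_n$ out of the $n$-labelled constraint system $S_n$ in the standard $\ALC$ way: the domain $\Delta_n$ consists of the terms (individual names and variables) occurring in $S_n$, modulo the blocking relation — so a blocked variable $u$ is identified with the variable $v$ blocking it (since their concept-label sets are contained one in the other, this is the usual unravelling/reuse trick). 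Concept names, role names, and individual names are interpreted by reading off the membership and role constraints of $S_n$; clash-freeness guarantees this is well defined (no $A(x)$ together with $\lnot A(x)$, etc.). Following the cited strategy of~\cite{DalHyp}, I would then define each neighbourhood function $\Nmc_i(n)$ explicitly so that it both (a) makes the $\Box_i/\Diamond_i$ constraints at $n$ true and (b) satisfies exactly the frame conditions in $\Lvar$: the generating rules $\mathsf{R}_{\mathit{L}}, \mathsf{R}_{\mathit{L}\mathbf{N}}, \dots, \mathsf{R}_{\mathit{L}\mathbf{D}}$ were designed precisely so that, whenever $n:\Box_i\gamma \in \T$, there is a witnessing label $m$ whose truth set can serve as (or be expanded to) a neighbourhood of $n$, and whenever $n:\Diamond_i\delta \in \T$, no such witness contradicts $\delta$. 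Concretely, for $\mathbf{M}\notin\Lvar$ one puts the single truth set $\llbracket\gamma\rrbracket^{\Mmc}$ (for the relevant $m$) into $\Nmc_i(n)$; for $\mathbf{M}\in\Lvar$ one closes upward; for $\mathbf{C}$ one closes under finite intersection over the $\Box_i$-constraints; and the $\mathbf{N},\mathbf{T},\mathbf{P},\mathbf{Q},\mathbf{D}$ clauses force $\Wmc\in\Nmc_i(n)$, $n\in\alpha$, $\emptyset\notin\Nmc_i(n)$, $\Wmc\notin\Nmc_i(n)$, and the seriality-style condition respectively — which is exactly why the corresponding rule was included.

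The core of the argument is then a \textbf{truth lemma}, proved by simultaneous induction on the weight $|\gamma|$ (as defined in the excerpt) of concepts and formulas: for every label $n$, every term/domain element $x$, and every subconcept $C \in \conneg(\p)$, if $n:C(x)\in\T$ then $x^{\Imc_n}\in C^{\Imc_n}$; and for every subformula $\psi\in\forneg(\p)$, if $n:\psi\in\T$ then $\Mmc, n \models \psi$. The Boolean and quantifier cases are the standard $\ALC$ tableau cases, using completeness of $\T$ under $\mathsf{R}_\sqcap,\mathsf{R}_\sqcup,\mathsf{R}_\exists,\mathsf{R}_\forall$ and the blocking condition to handle $\exists r.C$ at blocked variables; the $\top\sqsubseteq C$ case uses $\mathsf{R}_\sqsubseteq$ to propagate $C$ to every term, and $\lnot(\top\sqsubseteq C)$ uses $\mathsf{R}_{\not\sqsubseteq}$ to produce a witness of $\dnot C$. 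The interesting cases are $\Box_i\gamma$ and $\Diamond_i\gamma$: here I would unfold the definition of the neighbourhood functions just constructed, apply the induction hypothesis to the witnessing label $m$ to identify the truth set $\llbracket\gamma\rrbracket^{\Mmc}$ (resp.\ $\llbracket\gamma\rrbracket^{\Mmc}_x$ for modalised concepts), and conclude membership in (resp.\ non-membership of the complement in) $\Nmc_i(n)$. A subtlety worth isolating as a sub-claim is that the truth sets of the relevant $\gamma$'s are \emph{correctly} computed, i.e.\ $v \in \llbracket\gamma\rrbracket^{\Mmc}$ iff $v:\gamma \in \T$ (not merely the $\Leftarrow$ direction): for formulas this follows because $\T$ is complete, so for every subformula and every label either it or its $\dnot$-form is present; for modalised concepts one needs that a domain element $d = x^{\Imc_v}$ lies in $\gamma^{\Imc_v}$ iff $v:\gamma(x)\in\T$, which again uses completeness together with the identification of elements across worlds (here the varying-domain setting helps, since each world only needs to account for its own terms).

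Finally, I would verify that the frame $(\Wmc,\{\Nmc_i\}_{i\in J})$ really is an $L^n$ frame — i.e.\ the constructed $\Nmc_i$ satisfy the $\Lvar$-condition — which is immediate by construction for each letter in $\Lvar$, and then $\Mmc, 0 \models \p$ follows from the truth lemma applied to the initial constraint $0:\p$. The main obstacle I expect is the bookkeeping in the $\Box_i$/$\Diamond_i$ cases: one must ensure the neighbourhood assigned to a label $n$ simultaneously (i) contains $\llbracket\gamma\rrbracket^{\Mmc}$ for \emph{every} $n:\Box_i\gamma\in\T$, (ii) excludes the complement $\llbracket\dnot\delta\rrbracket^{\Mmc}$ for \emph{every} $n:\Diamond_i\delta\in\T$, and (iii) still obeys whichever of $\mathbf{M},\mathbf{C},\mathbf{N},\mathbf{T},\mathbf{P},\mathbf{Q},\mathbf{D}$ are demanded — and showing (i)–(iii) are jointly satisfiable is exactly where the application conditions of the generating rules (the "there is no $o\in\mathsf{N_L}$ such that \dots" clauses ruling out redundant or contradictory witnesses) do the work. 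Checking that the $\mathbf{M}\notin L$ branching alternatives in $\mathsf{R}_{\mathit{L}}$ and $\mathsf{R}_{\mathit{L}\mathbf{D}}$ — the ones producing $m:\dnot\gamma_j, m:\dnot\delta$ — are correctly reflected by \emph{not} placing a truth set into the neighbourhood, while the first (non-branching under $\mathbf{M}$) alternative \emph{does}, is the delicate point, and I would handle it by a careful case split on whether $\mathbf{M}\in\Lvar$ exactly as the rules themselves are split.
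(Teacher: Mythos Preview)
Your overall architecture matches the paper's: take $\Wmc=\mathsf{L}_\T$, build each $\Imc_n$ from $S_n$, define $\Nmc_i$ so that the frame conditions in $\Lvar$ hold, and prove a truth lemma by induction on weight. Two points, however, diverge in ways that matter.

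\textbf{The neighbourhood construction and the ``biconditional'' sub-claim.} You propose to put the \emph{semantic} truth sets $\llbracket\gamma\rrbracket^{\Mmc}$ into $\Nmc_i(n)$ and then argue that these coincide with the labels carrying $\gamma$, via a biconditional ``$v\in\llbracket\gamma\rrbracket^{\Mmc}$ iff $v{:}\gamma\in\T$''. This biconditional does \emph{not} hold: ``$\LnALC$-complete'' only means that no rule is applicable, not that every subformula is decided at every label. A label $m$ freshly created by $\mathsf{R}_{\mathit{L}}$ receives only the formulas the rule writes there; for most $\psi\in\forneg(\p)$ neither $m{:}\psi$ nor $m{:}\dnot\psi$ will be present. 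So you cannot identify $\llbracket\gamma\rrbracket^{\Mmc}$ with $\{m\mid m{:}\gamma\in\T\}$, and without that identification the $\Diamond_i$ case breaks: to show $\Wmc\setminus\llbracket\delta\rrbracket^{\Mmc}\notin\Nmc_i(n)$ you would need to separate it from \emph{every} set you placed in $\Nmc_i(n)$, which requires exact knowledge of those sets. (There is also a circularity issue: for nested modalities, $\llbracket\gamma\rrbracket^{\Mmc}$ depends on $\Nmc_i$, which you are still defining.)

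The paper sidesteps both problems with a purely \emph{syntactic} sandwich. For each $\gamma$ and term $x$ it sets
$\lfloor\gamma\rfloor_x=\{n\mid n{:}\gamma(x)\in S_n\}$ and
$\lceil\gamma\rceil_x=\Wmc\setminus\{n\mid n{:}\dnot\gamma(x)\in S_n\}$
(similarly for formulas), and declares $\alpha\in\Nmc_i(n)$ iff there exist $\Box_i\gamma_1,\ldots,\Box_i\gamma_k$ at $n$ with
$\bigcap_j\lfloor\gamma_j\rfloor_{x_j}\subseteq\alpha\subseteq\mathsf{UB}(\overline{\gamma_k})$,
where $\mathsf{UB}$ is $\Wmc$ if $\mathbf{M}\in\Lvar$ and $\bigcap_j\lceil\gamma_j\rceil_{x_j}$ otherwise (with $k=1$ unless $\mathbf{C}\in\Lvar$, and $\Wmc$ added if $\mathbf{N}\in\Lvar$). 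The truth lemma is then only the \emph{one} direction you already believe, and it yields $\lfloor\gamma\rfloor_x\subseteq\llbracket\gamma\rrbracket^{\Mmc}_x\subseteq\lceil\gamma\rceil_x$, which immediately puts $\llbracket\gamma\rrbracket^{\Mmc}_x$ into the sandwich. For $\Diamond_i\delta$, the witness $m$ produced by $\mathsf{R}_{\mathit{L}}$ shows either $m\in\bigcap_j\lfloor\gamma_j\rfloor$ while $m\in\llbracket\delta\rrbracket$, or $m\notin\lceil\gamma_j\rceil$ while $m\notin\llbracket\delta\rrbracket$; in either case $\Wmc\setminus\llbracket\delta\rrbracket$ falls outside the sandwich. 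The branching alternatives in $\mathsf{R}_{\mathit{L}}$ and $\mathsf{R}_{\mathit{L}\mathbf{D}}$ when $\mathbf{M}\notin\Lvar$ correspond precisely to the two ways of escaping the sandwich, which is why the upper bound $\mathsf{UB}$ is tightened to $\bigcap_j\lceil\gamma_j\rceil$ exactly in that case.

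\textbf{Minor point on the domain.} You quotient $\Delta_n$ by blocking; the paper instead keeps all occurring terms and enlarges $r^{\Imc_n}$ with pairs $(x,y)$ whenever $n{:}r(z,y)\in S_n$ for some $z$ blocking $x$. Since blocking here is only a \emph{subset} inclusion of concept labels (not equality), identifying $u$ with its blocker is not obviously safe; the role-redirection approach avoids the issue and makes the $\exists r.D$ and $\forall r.D$ cases of the truth lemma go through cleanly.
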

%

\begin{proof}
Suppose that  the $\LnALC$ tableau algorithm for $\p$
constructs
 an $\LnALC$-complete and clash-free completion set $\T$ for $\p$.
We define, for $n \in \mathsf{L}_{\T}$, $\psi \in \forneg(\p)$, $C \in \conneg(\p)$, and $x$ occurring in $\T$,

\begin{align*}
	\lfloor C \rfloor_{x} & = \{ n \in \mathsf{L}_{\T} \mid n : C(x) \in S_{n} \}, \\
	\lceil C \rceil_{x} & = \mathsf{L}_{\T} \setminus \{ n \in \mathsf{L}_{\T} \mid n : \dnot C(x) \in S_{n} \}, \\
	\lfloor \psi \rfloor & = \{ n \in \mathsf{L}_{\T} \mid n : \psi \in S_{n} \}, \\
	\lceil \psi \rceil & = \mathsf{L}_{\T} \setminus \{ n \in \mathsf{L}_{\T} \mid n : \dnot\psi \in S_{n}\}. \\
\end{align*}

\noindent
Moreover, define $\Gamma^{x}_{n} =  \{ \psi \mid n : \psi \in S_{n} \} \cup \{ C \mid  n : C(x) \in S_{n} \}$ and let $\gamma, \delta$ range over $\MLnALC$ formulas or concepts,
where: $\lfloor \gamma \rfloor_{x} = \lfloor \psi \rfloor$, if $\gamma = \psi$, and $\lfloor \gamma \rfloor_{x}  = \lfloor C \rfloor_{x} $, if $\gamma = C$; and similarly for $\lceil \gamma \rceil_{x}$.
%
We set $\Mmc = (\Fmc, \Imc)$, with $\Fmc = (\Wmc, \{ \Nmc_{i} \}_{i \in J})$ and $\Imc_{n} = (\Delta_{n}, \cdot^{\Imc_{n}})$, for $n \in \Wmc$, defined as follows:
\begin{itemize}
	\item $\Wmc =  \mathsf{L}_{\T}$;
	\item for every $i \in J = \{1, \ldots, n\}$, we set $\Nmc_{i} \colon \W \rightarrow 2^{2^{\Wmc}}$ such that: 

{{
		\begin{align*}
			\Nmc_{i}(n) =
			\big\{ \alpha \mid & \textnormal{\ for some \ }
			\Box_{i}\gamma_{1}  \in \Gamma^{{x}_{1}}_{n}, \ldots, \Box_{i}\gamma_{\mathsf{k}} \in \Gamma^{{x}_{\mathsf{k}}}_{n}
			\colon \\
			& \mathsf{LB}(\overline{\gamma_{\mathsf{k}}})
				\subseteq \alpha \subseteq
				\mathsf{UB}(\overline{\gamma_{\mathsf{k}}})
				\big\}
				\cup \mathsf{S};
		\end{align*}
	where:
		\begin{itemize}
			\item $\mathsf{LB}(\overline{\gamma_{\mathsf{k}}}) = \bigcap^{\mathsf{k}}_{j = 1} \lfloor \gamma_{j} \rfloor_{{{x}_{j}}}$;
			\item
			$\mathsf{UB}(\overline{\gamma_{\mathsf{k}}}) =
			\begin{cases}
				\Wmc, & \text{if $\mathbf{M} \in L$} \\
				\bigcap^{\mathsf{k}}_{j = 1} \lceil \gamma_{j} \rceil_{{{x}_{j}}}, & \text{if $\mathbf{M} \not \in L$}
			\end{cases};
			$
			\item
			$\mathsf{k}
			\begin{cases}
				\geq 1, & \text{if $\mathbf{C} \in L$} \\
				= 1, & \text{if $\mathbf{C} \not \in L$}
			\end{cases};
			$
			\item
			$\mathsf{S} =
			\begin{cases}
				\{ \Wmc \}, & \text{if $\mathbf{N} \in L$} \\
				\emptyset, & \text{if $\mathbf{N} \not \in L$}
			\end{cases};
			$
		\end{itemize}

}}

	\item
	{{
	$\Delta_{n} = \{ x \mid x \ \text{is a term occurring in} \ S_{n} \}$;
	}}
	\item $A^{\Imc_{n}} = \{ x \in \Delta_{n} \mid n : A(x) \in S_{n} \}$;
	\item
	{{
	$a^{\Imc_{n}} =
	\begin{cases}
		a, & \text{if $a$ occurs in $S_{n}$} \\
		\text{arbitrary}, & \text{otherwise}
	\end{cases}$
	;
	}}
	\item $r^{\Imc_{n}} = \{ (x, y) \in \Delta_{n} \times \Delta_{n} \mid n : r(x, y) \in S_{n} \ \text{or} \ n : r(z, y) \in S_{n}, $
	for some $z$ blocking $x$ in $S_{n}\}$.
\end{itemize}

We require the following claims.

\begin{claim}
\label{cla:modelcond}
  For $\mathbf{X}\in\{\mathbf{M,C,N,T,P,Q,D}\}$,
  if $\mathbf{X}\in\Lvar$,
  then $\Mmc$ satisfies the $X$-condition.
\end{claim}
\begin{proof}[Proof of Claim]

\quad

\begin{enumerate}[leftmargin=*, align=left]
	\item[$\mathbf{M}\in\Lvar$.] Suppose that $\alpha\in \Nmc_{i}(n)$ and $\alpha \subseteq \beta \subseteq \Wmc$. 
By definition, there are
$\Box_{i}\gamma_{1}  \in \Gamma^{{x}_{1}}_{n}, \ldots, \Box_{i}\gamma_{k} \in \Gamma^{{x}_{k}}_{n}$ 
such that $\mathsf{LB}(\overline{\gamma_{k}}) \subseteq \alpha$. 
Then $\mathsf{LB}(\overline{\gamma_{k}}) \subseteq \beta$, 
hence $\beta \in \Nmc_{i}(n)$.

	\item[$\mathbf{C}\in\Lvar$.] Suppose that $\alpha,\beta\in \Nmc_{i}(n)$.
Then there are
$\Box_{i}\gamma_{1}  \in \Gamma^{{x}_{1}}_{n}, \ldots, \Box_{i}\gamma_{k} \in \Gamma^{{x}_{k}}_{n}$ 
such that $\mathsf{LB}(\overline{\gamma_{k}}) \subseteq \alpha \subseteq \mathsf{UB}(\overline{\gamma_{k}})$,
and there are
$\Box_{i}\delta_{1}  \in \Gamma^{{x}_{1}}_{n}, \ldots, \Box_{i}\delta_{h} \in \Gamma^{{x}_{h}}_{n}$ 
such that $\mathsf{LB}(\overline{\delta_{h}}) \subseteq \beta \subseteq \mathsf{UB}(\overline{\delta_{h}})$.
Then 
$\mathsf{LB}(\overline{\gamma_{k}}) \cap \mathsf{LB}(\overline{\delta_{h}}) = \mathsf{LB}(\overline{\gamma_{k},\delta_h}) \subseteq \alpha\cap\beta
\subseteq \mathsf{UB}(\overline{\gamma_{k}}) \cap \mathsf{UB}(\overline{\delta_{h}}) = \mathsf{UB}(\overline{\gamma_{k},\delta_h})$,
which implies $\alpha\cap\beta\in \Nmc_{i}(n)$.
				 
	\item[$\mathbf{N}\in\Lvar$.] By construction, $\Wmc \in \Nmc_{i}(n)$ for all $n \in \Wmc$.
	
	\item[$\mathbf{P}\in\Lvar$.] Suppose that $\alpha\in\Nmc_{i}(n)$. 
Then there are
$\Box_{i}\gamma_{1}  \in \Gamma^{{x}_{1}}_{n}, \ldots, \Box_{i}\gamma_{k} \in \Gamma^{{x}_{k}}_{n}$ 
such that $\mathsf{LB}(\overline{\gamma_{k}}) \subseteq \alpha \subseteq \mathsf{UB}(\overline{\gamma_{k}})$.
Since $\T$ is $\LnALC$-complete,
by the rule $\mathsf{R}_{\mathit{L}\mathbf{P}}$,
there is 
$m$
such that 
$m: \gamma_{j} \in \T$ for all $1 \leq j \leq k$, that is $m\in\mathsf{LB}(\overline{\gamma_{k}})$.
Then $\alpha\not=\emptyset$.

	\item[$\mathbf{Q}\in\Lvar$.] Suppose that $\alpha\in\Nmc_{i}(n)$. 
Then there are
$\Box_{i}\gamma_{1}  \in \Gamma^{{x}_{1}}_{n}, \ldots, \Box_{i}\gamma_{k} \in \Gamma^{{x}_{k}}_{n}$ 
such that $\mathsf{LB}(\overline{\gamma_{k}}) \subseteq \alpha \subseteq \mathsf{UB}(\overline{\gamma_{k}})$.
Since $\T$ is $\LnALC$-complete,
by the rule $\mathsf{R}_{\mathit{L}\mathbf{Q}}$,
there is $m$ 
such that 
$m: \dot{\lnot}\gamma_{j} \in \T$ for some $1 \leq j \leq k$, that is $m\in\Wmc$ and $m\notin\lceil \gamma_{j} \rceil_{{{x}_{j}}}$,
hence $m\notin\mathsf{UB}(\overline{\gamma_{k}})$.
Then $\alpha\not=\Wmc$.

	\item[$\mathbf{D}\in\Lvar$.] Suppose that $\alpha,\beta\in \Nmc_{i}(n)$.
Then there are
$\Box_{i}\gamma_{1}  \in \Gamma^{{x}_{1}}_{n}, \ldots, \Box_{i}\gamma_{k} \in \Gamma^{{x}_{k}}_{n}$ 
such that $\mathsf{LB}(\overline{\gamma_{k}}) \subseteq \alpha \subseteq \mathsf{UB}(\overline{\gamma_{k}})$,
and there are
$\Box_{i}\delta_{1}  \in \Gamma^{{x}_{1}}_{n}, \ldots, \Box_{i}\delta_{h} \in \Gamma^{{x}_{h}}_{n}$ 
such that $\mathsf{LB}(\overline{\delta_{h}}) \subseteq \beta \subseteq \mathsf{UB}(\overline{\delta_{h}})$.
Since $\T$ is $\LnALC$-complete,
b the rule $\mathsf{R}_{\mathit{L}\mathbf{D}}$,
there is $m$ 
such that 
$m: \gamma_{j}, m: \delta_{\ell} \in \T$
for all $1\leq j\leq k$, $1\leq \ell \leq h$;
or
$m: \dot{\lnot}\gamma_{j}, m: \dot{\lnot}\delta_{\ell} \in \T$
for some $1\leq j\leq k$, $1\leq \ell \leq h$.
In the first case,
$m \in \mathsf{LB}(\overline{\gamma_{k}}) \cap \mathsf{LB}(\overline{\delta_{h}})$,
hence $m\in\alpha\cap\beta$.
In the second case, 
$m\in(\Wmc\setminus\mathsf{UB}(\overline{\gamma_{k}})) \cap (\Wmc\setminus\mathsf{UB}(\overline{\delta_{h}}))$,
hence 
$m\in(\Wmc\setminus\alpha)\cap(\Wmc\setminus\beta)$.
In either case $\beta\neq\Wmc\setminus\alpha$.

	\item[$\mathbf{T}\in\Lvar$.] Suppose that $\alpha\in\Nmc_{i}(n)$. 
Then there are
$\Box_{i}\gamma_{1}  \in \Gamma^{{x}_{1}}_{n}, \ldots, \Box_{i}\gamma_{k} \in \Gamma^{{x}_{k}}_{n}$ 
such that $\mathsf{LB}(\overline{\gamma_{k}}) \subseteq \alpha \subseteq \mathsf{UB}(\overline{\gamma_{k}})$.
Since $\T$ is $\LnALC$-complete,
by the rule $\mathsf{R}_{\mathit{L}\mathbf{T}}$,
$n: \gamma_{j} \in \T$ for all $1 \leq j \leq k$,
then $n\in \mathsf{LB}(\overline{\gamma_{k}})$,
thus $n\in\alpha$.\qedhere
\end{enumerate}
\end{proof}

%
\begin{claim}
\label{cla:conind}
For every $n \in \Wmc$, $C \in \conneg(\p)$, and $x \in \Delta_{n}$: if $n : C(x) \in S_{n}$, then $x \in C^{\Imc_{n}}$.
\end{claim}
\begin{proof}[Proof of Claim]
We show the claim by induction on the weight of $C$ (in NNF).
The base case of $C = A$ comes immediately from the definitions.
For the base case of $C = \lnot A$, suppose that $n : \lnot A(x) \in S_{n}$. Since $\T$ is clash-free, we have that $n : A(x) \not \in S_{n}$, and thus $x \not \in A^{\Imc_{n}}$ by definition of $A^{\Imc_{n}}$, meaning $x \in (\lnot A)^{\Imc_{n}}$.
The inductive cases of $C = D \sqcap E$ and $C = D \sqcup E$ come from the fact that $S_{n}$ is closed under $\mathsf{R}_{\sqcap}$ and $\mathsf{R}_{\sqcup}$, respectively, and straightforward applications of the inductive hypothesis.
We show the remaining cases (cf. also~\cite[Claim 15.2]{GabEtAl03}).

\begin{enumerate}[leftmargin=*, align=left]
	\item[$C = \exists r.D$.]
Let $n : \exists r.D(x) \in S_{n}$, meaning that $\exists r.D \in \Gamma^{x}_{n}$. We distinguish two cases.
\begin{itemize}
\item[$(i)$] $x$ is not blocked by any variable in $S_{n}$. Since $S_{n}$ is closed under $\mathsf{R}_{\exists}$, there exists $y$ occurring in $S_{n}$ such that $n : r(x,y) \in S_{n}$ and $n : D(y) \in S_{n}$. Thus, by definition, $(x, y) \in r^{\Imc_{n}}$ and $n : D(y) \in S_{n}$. By inductive hypothesis, we obtain that $x \in (\exists r.D)^{\Imc_{n}}$.

\item[$(ii)$] $x$ is blocked by a variable in $S_{n}$, implying that there exists a $<$-minimal (since $<$ is a well-ordering) $y$ occurring in $S_{n}$ such that $y < x$ and $\{ E \mid n : E(x) \in S_{n} \} \subseteq \{ E \mid n : E(y) \in S_{n} \}$.
In turn, this implies that $y$ is not blocked by any other variable $z$ in $S_{n}$ (for otherwise $z$ would block $x$, with $z < y$, against the fact that $y$ is $<$-minimal).
By reasoning as in the case above, since $y$ is not blocked and $S_{n}$ is closed under $\mathsf{R}_{\exists}$, we have a variable $z$ occurring in $S_{n}$ such that $n : r(y,z) \in S_{n}$ and $n : D(z) \in S_{n}$.
Since $y$ blocks $x$, by definition we have that $(x, z) \in r^{\Imc_{n}}$, and by inductive hypothesis we get from $n : D(z)$ that $z \in D^{\Imc_{n}}$.
Thus, $x \in (\exists r.D)^{\Imc_{n}}$.
\end{itemize}

\item[$C = \forall r.D$.]
Let $n : \forall r.D(x) \in S_{n}$, meaning that $\forall r.D \in \Gamma^{x}_{n}$, and suppose that $(x, y) \in r^{\Imc_{n}}$. By definition, either $n : r(x,y) \in S_{n}$ or $n : r(z,y) \in S_{n}$, for some $z$ blocking $x$ in $S_{n}$.
In the former case, since $S_{n}$ is closed under $\mathsf{R}_{\forall}$, we get that $n : D(y) \in S_{n}$.
In the latter case, since $z$ blocks $x$ in $S_{n}$, we obtain $n : \forall r.D(z) \in S_{n}$; again, since $S_{n}$ is closed under $\mathsf{R}_{\forall}$, this implies that $n : D(y) \in S_{n}$.
Hence, in both cases, we have $n : D(y) \in S_{n}$.
By inductive hypothesis, this means that $y \in D^{\Imc_{n}}$.
Since $y$ was arbitrary, we conclude that $x \in (\forall r.D)^{\Imc_{n}}$.


%

\item[$C = \Box_{i} D$.]
Let $n : \Box_{i} D(x) \in S_{n}$, meaning that $\Box_{i} D \in \Gamma^{x}_{n}$.
		We have by inductive hypothesis that
	$\lfloor D \rfloor_{x} = \{ n \in \Wmc \mid n : D(x) \in S_{n} \} \subseteq \{ n \in \Wmc \mid x \in D^{\Imc_{n}} \} = \llbracket D \rrbracket^{\Mmc}_{x}$.
	By inductive hypothesis 
	(since $| \dnot D | = | D |$), 
	we also have that
	$\{ n \in \Wmc \mid n : \dnot D(x) \in S_{n} \} \subseteq \{ n \in \Wmc \mid x \in (\dnot D)^{\Imc_{n}} \} = \llbracket \dnot D \rrbracket^{\Mmc}_{x} = \Wmc \setminus \llbracket D \rrbracket^{\Mmc}_{x}$.
	Hence, $\llbracket D \rrbracket^{\Mmc}_{x} \subseteq \Wmc \setminus \{ w \in \Wmc \mid n : \dnot D(x) \in S_{n} \} = \lceil D \rceil_{x}$. In conclusion, we have $\Box_{i} D \in \Gamma^{x}_{n} $ such that $\lfloor D \rfloor_{x} \subseteq \llbracket D \rrbracket^{\Mmc}_{x} \subseteq \lceil D \rceil_{x}$. Thus, by definition, $\llbracket D \rrbracket^{\Mmc}_{x} \in \Nmc_{i}(n)$, as required.
(If $\mathbf{M}\in\Lvar$, $\lfloor D \rfloor_{x} \subseteq \llbracket D \rrbracket^{\Mmc}_{x}$, and by definition this means $\llbracket D \rrbracket^{\Mmc}_{x} \in \Nmc_{i}(n)$, as required.)

	\item[$C = \Diamond_{i} D$.]
Let $n : \Diamond_{i}D(x) \in S_{n}$. 
We distinguish two cases.

\begin{itemize}
\item[$(i)$] There exists no $\Box_{i} \gamma \in \Gamma^{y}_{n}$.
Then if $\mathbf{N}\notin\Lvar$, $\Nmc_{i}(n) = \emptyset$, thus $\Wmc \setminus \llbracket D \rrbracket^{\Mmc}_{x} \not \in \Nmc_{i}(n)$, meaning that $x \in (\Diamond_{i}D)^{\Imc_{n}}$.
If instead $\mathbf{N}\in\Lvar$,
then $\Nmc_{i}(n) = \Wmc$.
Moreover, since $\T$ is $\LnALC$-complete, 
by the rule $\mathsf{R}_{\mathit{L}\mathbf{N}}$,
there is 
$m$
such that 
$m : D(x) \in S_{m}$. By inductive hypothesis, this implies $x \in D^{\Imc_{m}}$, that is, $\llbracket D \rrbracket^{\Mmc}_{x} \neq \emptyset$. Then we have $\Wmc \setminus \llbracket D \rrbracket^{\Mmc}_{x} \neq \Wmc$, and thus $\Wmc \setminus \llbracket D \rrbracket^{\Mmc}_{x} \not \in \Nmc_{i}(n)$. Hence, $x \in (\Diamond_{i}D)^{\Imc_{n}}$.

\item[$(ii)$] There exist $\Box_{i} \gamma_{1} \in \Gamma^{y_{1}}_{n}, \ldots, \Box_{i} \gamma_{k} \in \Gamma^{y_{k}}_{n}$.
		Since $\T$ is $\LnALC$-complete, there exists $m \in \Wmc$ such that:
$\gamma_{1}  \in \Gamma^{y_{1}}_{m}, \ldots, \gamma_{k} \in \Gamma^{y_{k}}_{m}$ and $D \in \Gamma^{x}_{m}$; or
$\dnot \gamma_{j} \in \Gamma^{y_{j}}_{m}$ and $\dnot D \in \Gamma^{x}_{m}$, for some $j\leq k$.
				By inductive hypothesis, the previous step implies that there exists $m \in \Wmc$ such that:
$\gamma_{1} \in \Gamma^{y_{1}}_{m}, \ldots, \gamma_{k} \in \Gamma^{y_{k}}_{m}$ and $x \in D^{\Imc_{m}}$; or
$\dnot \gamma_{j} \in \Gamma^{y_{j}}_{m}$ and $x \in \dnot D^{\Imc_{m}}$, for some $j\leq k$.
Thus
$\bigcap_{j = 1}^{k} \lfloor \gamma_{j} \rfloor_{y_{j}} \not\subseteq \Wmc \setminus \llbracket D \rrbracket^{\Mmc}_{x}$; or
$\Wmc \setminus \llbracket D \rrbracket^{\Mmc}_{x} \not\subseteq \bigcap_{j = 1}^{k}\lceil \gamma_{l} \rceil_{y_{l}}$.
Since this holds for every $\Box_{i} \gamma_{1} \in \Gamma^{y_{1}}_{n}, \ldots, \Box_{i} \gamma_{k} \in \Gamma^{y_{k}}_{n}$, we conclude that $\Wmc \setminus  \llbracket D \rrbracket^{\Mmc}_{x} \not \in \Nmc_{i}(n)$, i.e., $x \in (\Diamond_{i}D)^{\Imc_{n}}$, as required.
(If $\mathbf{M}\in\Lvar$, 
there exists $m \in \Wmc$ such that
$\gamma_{1}  \in \Gamma^{y_{1}}_{m}, \ldots, \gamma_{k} \in \Gamma^{y_{k}}_{m}$ and $D \in \Gamma^{x}_{m}$,
thus
$x \in D^{\Imc_{m}}$,
hence
$\bigcap_{j = 1}^{k} \lfloor \gamma_{j} \rfloor_{y_{j}} \not\subseteq \Wmc \setminus \llbracket D \rrbracket^{\Mmc}_{x}$,
therefore $\Wmc \setminus  \llbracket D \rrbracket^{\Mmc}_{x} \not \in \Nmc_{i}(n)$.)\qedhere
\end{itemize}
\end{enumerate}
\end{proof}

\begin{claim}
\label{cla:forind}
For every $n \in \Wmc$ and $\psi \in \conneg(\p)$: if $n : \psi \in S_{n}$, then $\Mmc, n \models \psi$.
\end{claim}
\begin{proof}[Proof of Claim]
We prove the claim by induction on the weight of $\p$ (in NNF).


\begin{enumerate}[leftmargin=*, align=left]
	\item[$\psi = C(a)$.] 
	{{Let $n : C(a) \in S_{n}$. By definition of $\Imc_{n}$ and Claim~\ref{cla:conind}, we have that $a^{\Imc_{n}} \in C^{\Imc_{n}}$, hence $\Mmc, n \models C(a)$. (For $\psi = \lnot C(a)$, recall that $\lnot C(a)$ is equivalent to $D(a)$ with $D = \lnot C$).}}

	\item[$\psi = r(a,b)$.]
{{Let $n: r(a,b) \in S_{n}$. By definition of $\Imc_{n}$, this implies $(a^{\Imc_{n}}, b^{\Imc_{n}}) \in r^{\Imc_{n}}$, hence $\Mmc, n \models r(a,b)$.}}

	\item[$\psi = \lnot r(a,b)$.]
{{ Let $n: \lnot r(a,b) \in S_{n}$. Since $\mathbf{T}$ is clash-free, we have that $n: r(a,b) \not \in S_{n}$. Thus, by definition
of $\Imc_{n}$
we have $(a^{\Imc_{n}}, b^{\Imc_{n}}) \not \in r^{\Imc_{n}}$, meaning that $\Mmc, n \not \models r(a,b)$.
}}

	\item[$\psi = (\top \sqsubseteq C)$.] Let $n : \top \sqsubseteq C \in S_{n}$ and let $x \in \Delta_{n}$. Since $S_{n}$ is closed under $(\mathsf{R}_{\sqsubseteq})$ and $x$ occurs in $S_{n}$, we have that $n : C(x) \in S_{n}$. By Claim~\ref{cla:conind}, we have that $x \in C^{\Imc_{n}}$. Given that $x$ is arbitrary, we conclude that $\Mmc, n \models \top \sqsubseteq C$.

	\item[$\psi = \lnot (\top \sqsubseteq C)$.] Let $n : \lnot (\top \sqsubseteq C) \in S_{n}$. Since $S_{n}$ is closed under $(\mathsf{R}_{\not\sqsubseteq})$, there exists $x$ occurring in $S_{n}$ such that $n : \dnot C(x) \in S_{n}$. By Claim~\ref{cla:conind}, we obtain that $x \in (\dnot C)^{\Imc_{n}}$, for some $x \in \Delta_{w}$. Hence, $\Mmc, n \models \lnot (\top \sqsubseteq C)$.
\end{enumerate}

The inductive cases of 
$\psi = \chi \land \vartheta$
and
$\psi = \chi \lor \vartheta$ follow from the definitions and straighforward applications of the inductive hypothesis.
%
Moreover the inductive cases of 
$\psi = \Box_{i} \chi$
and 
$\psi = \Diamond_{i} \chi$ can be proved analogously to Claim~\ref{cla:conind}.
\end{proof}

Since,
by definition,
we have 
$0 : \p \in S_{0} \subseteq \mathbf{T}$,
thanks to Claim~\ref{cla:forind} we obtain $\Mmc, 0 \models \p$.
{{Moreover,
by Claim~\ref{cla:modelcond}, $\Mmc$ is a
$\mathit{L}^n$ model.
Therefore $\p$ is $\LnALC$ satisfiable.}}
\end{proof}


We finally show completeness of the $\LnALC$ tableau algorithm.

\begin{restatable}[Completeness]{theorem}{Completeness}
	\label{thm:completeness}
	If $\p$ is $\LnALC$ satisfiable, then there exists an execution of the $\LnALC$ tableau algorithm for $\p$ that constructs a complete and clash-free completion set.
\end{restatable}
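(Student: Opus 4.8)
The plan is the standard model-guided strategy: starting from a satisfying model of $\p$, I show that the non-deterministic choices of the algorithm can always be resolved so that the current completion set remains \emph{realisable}, and that a realisable completion set cannot contain a clash; since the algorithm terminates (Theorem~\ref{thm:termination}), the run it produces then must end with a complete, clash-free completion set. Concretely, assume $\p$ is $\LnALC$ satisfiable and fix an $L^{n}$ (varying domain) neighbourhood model $\Mmc_{0}$ with $\Mmc_{0}, w_{0} \models \p$. Call a completion set $\T$ \emph{realisable} if there are an $L^{n}$ model $\Mmc = (\Wmc, \{\Nmc_{i}\}_{i\in J}, \Imc)$, a map $\pi\colon \mathsf{L}_{\T} \to \Wmc$, and, for each $n \in \mathsf{L}_{\T}$, a map $\iota_{n}$ from the terms occurring in $S_{n}$ into $\Delta_{\pi(n)}$, such that $\iota_{n}(a) = a^{\Imc_{\pi(n)}}$ for every individual name $a$ occurring in $S_{n}$, and such that $n:\psi \in S_{n}$ implies $\Mmc,\pi(n)\models\psi$, $n:C(x)\in S_{n}$ implies $\iota_{n}(x)\in C^{\Imc_{\pi(n)}}$, and $n:r(x,y)\in S_{n}$ implies $(\iota_{n}(x),\iota_{n}(y))\in r^{\Imc_{\pi(n)}}$. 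The initial set $\{0:\p\}$ is realisable via $\Mmc_{0}$ and $\pi(0)=w_{0}$; and any realisable $\T$ is clash-free, since each of the three clash patterns would require mutually contradictory facts to hold at the same world of $\Mmc$ (resp.\ about the same object).

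The crux is a one-step lemma: if $\T$ is realisable and an $\LnALC$-rule is applicable to $\T$, then some resolution of the rule's non-deterministic choices yields a realisable $\T'$. The non-generating and the ``local'' rules $\mathsf{R}_{\land},\mathsf{R}_{\lor},\mathsf{R}_{\sqcap},\mathsf{R}_{\sqcup},\mathsf{R}_{\forall},\mathsf{R}_{\sqsubseteq},\mathsf{R}_{\not\sqsubseteq},\mathsf{R}_{\exists}$ are treated exactly as in \ALC tableaux (cf.~\cite{GabEtAl03}): the witness $\Mmc$ is kept fixed, for the branching rules one selects the disjunct true in $\pi(n)$ (resp.\ of $\iota_{n}(x)$), and for $\mathsf{R}_{\exists}$ and $\mathsf{R}_{\not\sqsubseteq}$ one reads off from the semantics a domain element to which the fresh variable is mapped. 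The interesting cases are the modal rules, which create a fresh label $m$: here one extends $\Mmc$ by a new world serving as $\pi(m)$ and realising the chosen expansion, using the relevant frame condition of $\Mmc$ to guarantee that such a world can be added while staying in the class of $L^{n}$ frames. For $\mathsf{R}_{L\mathbf{T}}$ the $\mathbf{T}$-condition at $\pi(n)$ gives $n:\gamma$ in $\Mmc$ without a fresh world; for $\mathsf{R}_{L\mathbf{N}},\mathsf{R}_{L\mathbf{P}},\mathsf{R}_{L\mathbf{Q}}$ one invokes the $\mathbf{N}$-, $\mathbf{P}$-, $\mathbf{Q}$-condition (together with $\mathbf{C}$ when $\mathbf{C}\in L$) to see that $\bigcap_{j}\llbracket\gamma_{j}\rrbracket$ admits a member, or fails to be all of $\Wmc$, and uses that to shape the new world. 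For $\mathsf{R}_{L}$ and $\mathsf{R}_{L\mathbf{D}}$ one forms the truth sets of the boxed arguments --- truth sets of formulas, or truth sets of concepts relative to $\iota_{n}(x_{j})$ --- whose intersection belongs to $\Nmc_{i}(\pi(n))$ by the $\mathbf{E}$- and (if present) $\mathbf{C}$-condition, while the truth set attached to the diamond argument does not; these two sets being distinct, if $\mathbf{M}\in L$ the $\mathbf{M}$-condition forces the former not to be contained in the complement of the latter, pinning down the single expansion, and if $\mathbf{M}\notin L$ one of the two inclusions fails, selecting either the main or one of the side expansions.

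I expect the main obstacle to be exactly this model surgery in the modal rules over varying domains. A fresh label cannot in general be mapped to an \emph{existing} world of the given model --- a small example with box and diamond applied to concepts at distinct terms already shows that no tableau branch need track a fixed model --- so one must genuinely grow $\Mmc$ by a new world, and this perturbs every truth set and hence every modal constraint already validated; the argument must therefore control how $\Wmc$, the domains, and the neighbourhood functions evolve (organising the witness as a tree of worlds indexed by the labels, as in the soundness construction), applying the $\mathbf{M}$ (resp.\ $\mathbf{C},\mathbf{N},\mathbf{T},\mathbf{D},\mathbf{P},\mathbf{Q}$) condition at each step, and exploiting that over varying domains the truth set of $\dot{\lnot}C$ at an object equals the complement of the truth set of $C$ only \emph{within that object's existence set}, so boxed and diamond arguments must be manipulated as truth sets relative to objects throughout, precisely as in the soundness proof. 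Once the one-step lemma is established, completeness follows: starting from $\{0:\p\}$ and resolving the don't-know non-determinism as the lemma prescribes, the run (which halts by Theorem~\ref{thm:termination}) produces a realisable --- hence clash-free --- completion set, which, the algorithm having terminated, is $\LnALC$-complete.
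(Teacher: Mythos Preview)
Your realisability invariant and the overall architecture of the argument coincide with the paper's proof: ``realisable'' is exactly the paper's ``$\Mmc$-compatible'', the initial set $\{0:\p\}$ is shown to have the property, a one-step preservation lemma is established for every applicable rule, compatibility trivially excludes clashes, and termination then yields a complete clash-free set. The divergence is in how you treat the world-creating modal rules, and there you take an unnecessary detour based on a mistaken premise.

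The paper never modifies the witnessing model $\Mmc$. For each of $\mathsf{R}_{L}$, $\mathsf{R}_{L\mathbf{N}}$, $\mathsf{R}_{L\mathbf{P}}$, $\mathsf{R}_{L\mathbf{Q}}$, $\mathsf{R}_{L\mathbf{D}}$, the relevant frame condition is invoked \emph{inside the fixed $\Mmc$} to locate an existing world $v\in\Wmc$, and one simply sets $\pi(m)=v$ and $\pi_m(x_j)=\pi_n(x_j)$. For $\mathsf{R}_{L}$, for instance, compatibility at $n$ gives $\bigcap_j\llbracket\gamma_j\rrbracket^{\Mmc}_{\pi_n(x_j)}\in\Nmc_i(\pi(n))$ (using the $\mathbf{C}$-condition when $\mathbf{C}\in L$) and $\llbracket\dnot\delta\rrbracket^{\Mmc}_{\pi_n(y)}\notin\Nmc_i(\pi(n))$; these two subsets of $\Wmc$ being distinct (respectively, the first not contained in the second when $\mathbf{M}\in L$), any $v$ witnessing the difference realises one of the rule's branches, and that branch is the one chosen. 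The remaining modal rules are handled the same way. Because no world is ever added, truth sets never change and previously validated modal constraints are never perturbed --- the ``model surgery'' difficulty you identify as the main obstacle simply does not arise. Your assertion that ``a fresh label cannot in general be mapped to an existing world of the given model'' is therefore unfounded, and the surgery programme you sketch (which, as you yourself note, would require delicate control over how all neighbourhoods and truth sets evolve under extension) is not needed.
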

%

\begin{proof}
In the proof we assume $\mathbf{C}\in\Lvar$,
for the case $\mathbf{C}\notin\Lvar$ consider $k = h = 1$.
Let $\Mmc = (\Fmc, \Imc)$ be an $\LnALC$-model satisfying $\p$, with $\Fmc = (\Wmc, \{ \Nmc \}_{i \in J})$, i.e.,
$\Mmc, w_{\p} \models \p$, for some $w_{\p} \in \Wmc$.
We require the following definitions and technical results.
%
First, we let $\gamma, \delta$ (possibly indexed) range over $\MLnALC$ concepts and formulas, with $\llbracket \gamma \rrbracket^{\Mmc}_{d} = \llbracket \psi \rrbracket^{\Mmc}$, if $\gamma = \psi$, and $\llbracket \gamma \rrbracket^{\Mmc}_{d} = \llbracket C \rrbracket^{\Mmc}_{d}$, if $\gamma = C$.
Then, for $w \in \Wmc$ and $d \in \bigcup_{v \in \Wmc} \Delta_{v}$, define
$\Phi^{d}_{w} = \{ \psi \in \forneg(\p) \mid \Mmc, w \models \psi \} \cup \{ C \in \conneg(\p) \mid d \in C^{\Imc_{w}} \}$.
Observe that, if $C \in \Phi^{d}_{w}$, then $d \in \Delta_{w}$.
Moreover, given a completion set $\T$ for $\p$
and $S_{n} \subseteq \T$,
let $\Gamma^{x}_{n} = \{ \psi \mid n : \psi \in S_{n} \} \cup \{ C \mid n : C(x) \in S_{n} \}$.
We say that a completion set $\T$ for $\p$ is \emph{$\Mmc$-compatible} if
there exists a function $\pi$ from $\mathsf{L}_{\T}$ to $\Wmc$, and, for every $n \in \mathsf{L}_{\T}$, there exists a function $\pi_{n}$ from the set of 
{{terms}}
occurring in $S_{n}$ to $\Delta_{\pi(n)}$, such that
$\gamma \in \Gamma^{x}_{n}$ implies $\gamma \in \Phi^{\pi_{n}(x)}_{\pi(n)}$.
We  require the following claim.

\begin{claim}
\label{cla:compatible}
If a completion set $\T$ for $\p$ is $\Mmc$-compatible
{{and $\Mmc$ is an $\LnALC$-model}},
then for every  $\LnALC$-rule $\mathsf{R}$ applicable to $\T$, there exists a completion set $\T'$ obtained from $\T$ by an application of $\mathsf{R}$ such that $\T'$ is $\Mmc$-compatible.
\end{claim}
\begin{proof}
{{Given an $\Mmc$-compatible completion set $\T$ for $\p$ and a label $n \in \mathsf{L}_{\T}$, let $\pi$ and $\pi_{n}$ be the functions provided by the definition of $\Mmc$-compatibility.
We need to consider each $\LnALC$-rule $\mathsf{R}$.
 For $\mathsf{R} \in \{ \mathsf{R}_{\land}, \mathsf{R}_{\lor}, \mathsf{R}_{\sqcap}, \mathsf{R}_{\sqcup}, \mathsf{R}_{\forall}, \mathsf{R}_{\exists}, \mathsf{R}_{\sqsubseteq}, \mathsf{R}_{\not\sqsubseteq} \}$, we proceed similarly to~\cite[Claim 15.14]{GabEtAl03}.
Here we consider 
the modal rules.

\begin{enumerate}[leftmargin=*, align=left]
	\item[($\mathsf{R}_{\mathit{L}}$)]
Suppose that $\mathsf{R}_{\mathit{L}}$ is applicable to $\T$.
	Then there are $\Box_{i} \gamma_{1} \in \Gamma^{x_{1}}_{n}, \ldots, \Box_{i} \gamma_{k} \in \Gamma^{x_{k}}_{n}, \Diamond_{i} \delta \in \Gamma^{y}_{n}$.
	 Since $\T$ is $\Mmc$-compatible,
	 we have that $\Box_{i}\gamma_{1} \in \Phi^{\pi_{n}(x_{1})}_{\pi(n)}, \ldots, \Box_{i} \gamma_{k} \in \Phi^{\pi_{n}(x_{k})}_{\pi(n)}$ and $\Diamond_{i} \delta \in \Phi^{\pi_{n}(y)}_{\pi(n)}$,
	 meaning that $\llbracket \gamma_{j} \rrbracket^{\Mmc}_{d_{j}} \in \Nmc_{i}(\pi(n))$, for $j = 1, \ldots, k$, 
	 hence by the $C$-condition  $\bigcap_{j = 1}^{k} \llbracket \gamma_{j} \rrbracket^{\Mmc}_{d_{j}}  \in \Nmc_{i}(\pi(n))$,
	 and $\Wmc \setminus \llbracket \delta \rrbracket^{\Mmc}_{e} \not \in \Nmc_{i}(\pi(n))$, i.e., $\llbracket \dnot \delta \rrbracket^{\Mmc}_{e} \not \in \Nmc_{i}(\pi(n))$.
	 Then 
	 $\bigcap_{j = 1}^{k} \llbracket \gamma_{j} \rrbracket^{\Mmc}_{d_{j}} \neq \llbracket \dnot \delta \rrbracket^{\Mmc}_{e}$
	 (if $\mathbf{M}\in\Lvar$, 
	 $\bigcap_{j = 1}^{k} \llbracket \gamma_{j} \rrbracket^{\Mmc}_{d_{j}} \not\subseteq \llbracket \dnot \delta \rrbracket^{\Mmc}_{e}$).
It follows that there exists $v \in \Wmc$ such that
$\gamma_{1} \in \Phi^{\pi_{n}(x_{1})}_{v}, \ldots, \gamma_{k} \in \Phi^{\pi_{n}(x_{k})}_{v}$ and $\delta \in \Phi^{\pi_{n}(y)}_{v}$; or
$\dnot \gamma_{j} \in \Phi^{\pi_{n}(x_{j})}_{v}$ and $\dnot \delta \in \Phi^{\pi_{n}(y)}_{v}$,
for some $j\leq k$.
%
Then by applying the rule $\mathsf{R}_{\mathit{L}}$ 
accordingly, 
we expand $\T$ to $\T'$ with
$m : \gamma_1, \ldots, m : \gamma_k, m : \delta$, or with $m : \dot{\lnot}\gamma_j, m : \dnot \delta $, for some $j\leq k$,
for some $m$ satisfying the application condition of $\mathsf{R}_{\mathit{L}}$.
Since $m$ is fresh, we can extend $\pi$ with $\pi(m) = v$, and $\pi_{m}$ with $\pi_{m}(x_{1}) = \pi_{n}(x_{1})$, \ldots, $\pi_{m}(x_{k}) = \pi_{n}(x_{k})$, $\pi_{m}(y) = \pi_{n}(y)$, thus obtaining that $\T'$ is $\Mmc$-compatible.

	\item[($\mathsf{R}_{\mathit{L}\mathbf{N}}$)]
	Suppose that $\mathsf{R}_{\mathit{L}\mathbf{N}}$ is applicable to $\T$.
	Then there is $\Diamond_{i} \delta \in \Gamma^{y}_{n}$.
	 Since $\T$ is $\Mmc$-compatible,
	 we have that $\Diamond_{i} \delta \in \Phi^{\pi_{n}(y)}_{\pi(n)}$,
	 meaning that $\Wmc \setminus \llbracket \delta \rrbracket^{\Mmc}_{e} \not \in \Nmc_{i}(\pi(n))$.
	 At the same time, by the $N$-condition, $\Wmc \in \Nmc_{i}(\pi(n))$,
	  hence $\llbracket \delta \rrbracket^{\Mmc}_{e} \not= \emptyset$,
	 that is there exists $v \in \Wmc$ such that $\delta \in \Phi^{\pi_{n}(y)}_{v}$. 
	Then we expand $\T$ with $m : \delta$, for some $m$ satisfying the application condition of $\mathsf{R}_{\mathit{L}\mathbf{N}}$.
Since $m$ is fresh, we can extend $\pi$ with $\pi(m) = v$, and $\pi_{m}$ with $\pi_{m}(y) = \pi_{n}(y)$, thus obtaining that $\T'$ is $\Mmc$-compatible.
	
	\item[($\mathsf{R}_{\mathit{L}\mathbf{P}}$)]
Suppose that $\mathsf{R}_{\mathit{L}\mathbf{P}}$ is applicable to $\T$.
	Then there are $\Box_{i} \gamma_{1} \in \Gamma^{x_{1}}_{n}, \ldots, \Box_{i} \gamma_{k} \in \Gamma^{x_{k}}_{n}$.
	 Since $\T$ is $\Mmc$-compatible,
	 we have that $\Box_{i}\gamma_{1} \in \Phi^{\pi_{n}(x_{1})}_{\pi(n)}, \ldots, \Box_{i} \gamma_{k} \in \Phi^{\pi_{n}(x_{k})}_{\pi(n)}$, 
	 meaning that $\llbracket \gamma_{j} \rrbracket^{\Mmc}_{d_{j}} \in \Nmc_{i}(\pi(n))$, for $j = 1, \ldots, k$,
	 hence $\bigcap_{j = 1}^{k} \llbracket \gamma_{j} \rrbracket^{\Mmc}_{d_{j}} \in \Nmc_{i}(\pi(n))$.
	 At the same time, by the $P$-condition, 
	 $\bigcap_{j = 1}^{k} \llbracket \gamma_{j} \rrbracket^{\Mmc}_{d_{j}} \neq \emptyset$,
	that is there exists $v \in \Wmc$ such that
	$\gamma_{1} \in \Phi^{\pi_{n}(x_{1})}_{v}, \ldots, \gamma_{k} \in \Phi^{\pi_{n}(x_{k})}_{v}$.
	Then we expand $\T$ with $m : \gamma_1, \ldots, m : \gamma_k$,
	for some $m$ satisfying the application condition of $\mathsf{R}_{\mathit{L}\mathbf{P}}$.
	Since $m$ is fresh, we can extend $\pi$ with $\pi(m) = v$, and $\pi_{m}$ with $\pi_{m}(x_{1}) = \pi_{n}(x_{1})$, \ldots, $\pi_{m}(x_{k}) = \pi_{n}(x_{k})$, thus obtaining that $\T'$ is $\Mmc$-compatible.

	\item[($\mathsf{R}_{\mathit{L}\mathbf{Q}}$)]
	Suppose that $\mathsf{R}_{\mathit{L}\mathbf{Q}}$ is applicable to $\T$.
	Then there are $\Box_{i} \gamma_{1} \in \Gamma^{x_{1}}_{n}, \ldots, \Box_{i} \gamma_{k} \in \Gamma^{x_{k}}_{n}$.
	 Since $\T$ is $\Mmc$-compatible,
	 we have that $\Box_{i}\gamma_{1} \in \Phi^{\pi_{n}(x_{1})}_{\pi(n)}, \ldots, \Box_{i} \gamma_{k} \in \Phi^{\pi_{n}(x_{k})}_{\pi(n)}$, 
	 meaning that $\llbracket \gamma_{j} \rrbracket^{\Mmc}_{d_{j}} \in \Nmc_{i}(\pi(n))$, for $j = 1, \ldots, k$,
	 hence $\bigcap_{j = 1}^{k} \llbracket \gamma_{j} \rrbracket^{\Mmc}_{d_{j}} \in \Nmc_{i}(\pi(n))$.
	 At the same time, by the $Q$-condition, 
	 $\bigcap_{j = 1}^{k} \llbracket \gamma_{j} \rrbracket^{\Mmc}_{d_{j}} \neq \Wmc$,
	that is there exists $v \in \Wmc$ such that
	$\gamma_{j} \notin \Phi^{\pi_{n}(x_{j})}_{v}$ for some $j\leq k$.
	Then by applying $\mathsf{R}_{\mathit{L}\mathbf{Q}}$ accordingly, we expand $\T$ with $m : \dot{\lnot}\gamma_j$,
	for some $m$ satisfying the application condition of $\mathsf{R}_{\mathit{L}\mathbf{Q}}$.
	Since $m$ is fresh, we can extend $\pi$ with $\pi(m) = v$, and 
	{{$\pi_{m}$ with $\pi_{m}(x_{j}) = \pi_{n}(x_{j})$,  thus obtaining that $\T'$ is $\Mmc$-compatible.}}

	\item[($\mathsf{R}_{\mathit{L}\mathbf{D}}$)]
	Suppose that $\mathsf{R}_{\mathit{L}\mathbf{D}}$ is applicable to $\T$.
	Then there are $\Box_{i} \gamma_{1} \in \Gamma^{x_{1}}_{n}, \ldots, \Box_{i} \gamma_{k} \in \Gamma^{x_{k}}_{n}, \Box_{i} \delta_{1} \in \Gamma^{y_{1}}_{n}, \ldots, \Box_{i} \delta_{h} \in \Gamma^{y_{h}}_{n}$.
	Since $\T$ is $\Mmc$-compatible,
	we have that $\Box_{i}\gamma_{1} \in \Phi^{\pi_{n}(x_{1})}_{\pi(n)}, \ldots, \Box_{i} \gamma_{k} \in \Phi^{\pi_{n}(x_{k})}_{\pi(n)}$,
	and $\Box_{i}\delta_{1} \in \Phi^{\pi_{n}(y_{1})}_{\pi(n)}, \ldots, \Box_{i} \delta_{h} \in \Phi^{\pi_{n}(y_{h})}_{\pi(n)}$, 
	meaning that $\llbracket \gamma_{j} \rrbracket^{\Mmc}_{d_{j}} \in \Nmc_{i}(\pi(n))$, for $j = 1, \ldots, k$,
	and $\llbracket \delta_{\ell} \rrbracket^{\Mmc}_{e_{\ell}} \in \Nmc_{i}(\pi(n))$, for $\ell = 1, \ldots, h$;
	hence $\bigcap_{j = 1}^{k} \llbracket \gamma_{j} \rrbracket^{\Mmc}_{d_{j}} \in \Nmc_{i}(\pi(n))$,
	and $\bigcap_{\ell = 1}^{h} \llbracket \delta_{\ell} \rrbracket^{\Mmc}_{e_{\ell}} \in \Nmc_{i}(\pi(n))$.
	By the $D$-condition, 
	$\bigcap_{\ell = 1}^{h} \llbracket \delta_{\ell} \rrbracket^{\Mmc}_{e_{\ell}} \not=\Wmc \setminus \bigcap_{j = 1}^{k} \llbracket \gamma_{j} \rrbracket^{\Mmc}_{d_{j}}$
	(if $\mathbf{M}\in\Lvar$, $\bigcap_{\ell = 1}^{h} \llbracket \delta_{\ell} \rrbracket^{\Mmc}_{e_{\ell}} \not \subseteq \Wmc \setminus \bigcap_{j = 1}^{k} \llbracket \gamma_{j} \rrbracket^{\Mmc}_{d_{j}}$).
	This means that there exists $v \in \Wmc$ such that
$\gamma_{1} \in \Phi^{\pi_{n}(x_{1})}_{v}, \ldots, \gamma_{k} \in \Phi^{\pi_{n}(x_{k})}_{v},
\delta_{1} \in \Phi^{\pi_{n}(y_{1})}_{v}, \ldots, \delta_{h} \in \Phi^{\pi_{n}(y_{h})}_{v}$; or
$\dnot \gamma_{j} \in \Phi^{\pi_{n}(x_{j})}_{v}$ and $\dnot \delta_{\ell} \in \Phi^{\pi_{n}(y_{\ell})}_{v}$,
for some $j\leq k$, $\ell \leq h$.
Then by applying the rule $\mathsf{R}_{\mathit{L}\mathbf{D}}$ 
accordingly, 
we expand $\T$ to $\T'$ with
$m : \gamma_1, \ldots, m : \gamma_k, m : \delta_1, \ldots, m : \delta_h$, or with $m : \dot{\lnot}\gamma_j, m : \dnot \delta_{\ell} $, for some $j\leq k$, $\ell \leq h$,
and some $m$ satisfying the application condition of $\mathsf{R}_{\mathit{L}\mathbf{D}}$.
Since $m$ is fresh, we can extend $\pi$ with $\pi(m) = v$, 
and $\pi_{m}$ with $\pi_{m}(x_{j}) = \pi_{n}(x_{j})$, for $j \leq k$, 
$\pi_{m}(y_{\ell}) = \pi_{n}(y_{\ell})$, for $\ell \leq h$; thus obtaining that $\T'$ is $\Mmc$-compatible.

	\item[($\mathsf{R}_{\mathit{L}\mathbf{T}}$)]
	Suppose that $\mathsf{R}_{\mathit{L}\mathbf{T}}$ is applicable to $\T$.
	Then there is $\Box_{i} \gamma \in \Gamma^{x}_{n}$.
	 Since $\T$ is $\Mmc$-compatible,
	 we have that $\Box_{i}\gamma \in \Phi^{\pi_{n}(x)}_{\pi(n)}$, 
	 meaning that $\llbracket \gamma \rrbracket^{\Mmc}_{d} \in \Nmc_{i}(\pi(n))$.
	By the $T$-condition, 
	$\pi(n) \in \llbracket \gamma \rrbracket^{\Mmc}_{d}$,
	that is $\gamma \in \Phi^{\pi_{n}(x)}_{\pi(n)}$.
	Then the expansion $\T'$ of $\T$ with $n: \gamma$, obtained by the application of $\mathsf{R}_{\mathit{L}\mathbf{T}}$,
	 is $\Mmc$-compatible.\qedhere
\end{enumerate}
}}
By the argument of Theorem~\ref{thm:termination},
it can be seen that
after finitely many steps we obtain a complete completion set $\Tmc'$. Moreover, $\Tmc'$ is $\Mmc$-compatible, hence clearly clash-free.
\end{proof}

By Theorem~\ref{thm:termination}, we have that the $\LnALC$ tableau algorithm terminates after exponentially many steps in the size of the input formula. By Theorems~\ref{thm:soundness} and~\ref{thm:completeness}, the non-deterministic decision procedure based on the $\LnALC$ tableau algorithm is sound and complete with respect to satisfiability in varying domain neighbourhood models. 
Thus, we obtain the following result.

\begin{theorem}
	\label{thm:upperbound}
	Satisfiability in $\LnALC$  on varying domain neighbourhood models is decidable in $\NExpTime$.
\end{theorem}

To conclude, let $\T_{\p} = \{0 : \p
\}$ be the initial completion set for $\p$.
Define $\pi(0) = w_{\p}$ (where $\Mmc, w_{\p} \models \p$) and $\pi_{0}(x) = d$, for an arbitrary $d \in \Delta_{w_{\p}}$.
Clearly, these functions ensure that $\T_{\p}$ is $\Mmc$-compatible.
By Claim~\ref{cla:compatible}, we can apply the $\LnALC$-rules so that the obtained completion sets are $\Mmc$-compatible as well.
From Theorem~\ref{thm:termination}, we have that the $\LnALC$ tableau algorithm eventually terminates, 
providing an $\LnALC$-complete completion set for $\p$ that is clash-free by construction.
\end{proof}

\Fmp*
\begin{proof}
By Theorem \ref{thm:completeness}, if $\p$ is $\LnALC$ satisfiable, then 
there is a $\LnALC$-complete and clash-free completion set $\T$ for it.
Then by Theorem \ref{thm:soundness},
there exists a model 
$\Mmc = (\Wmc, \{ \Nmc_{i} \}_{i \in J}, \Imc)$
for $\p$ where $\Wmc =  \mathsf{L}_{\T}$
and for each $n\in\Wmc$, $\Delta_{n} = \{ x \in \mathsf{N_{V}} \mid x \ \text{occurs in} \ S_{n} \}$.
By Theorem~\ref{thm:termination}, Claim~\ref{cla:termglobal}, it follows
$|\Wmc| \leq p(|\fg(\p)|)$,
if $\mathbf{C}\notin\Lvar$, 
and $|\Wmc| \leq 2^{q(|\fg(\p)|)}$, 
if $\mathbf{C}\in\Lvar$,
where $p$ and $q$ are polynomial functions.
Finally by Theorem~\ref{thm:termination}, Claim~\ref{cla:termlocal}, 
for each $n\in\Wmc$, $|\Delta_n|$ does not exceed $2^{r(|\fg(\p)|)}$,
where $r$ is a polynomial function.
\end{proof}

\section{Proofs for Section~\ref{sec:fragvardom}}

\LemmapropL*
\begin{proof}
If an
$\MLnALCg$
formula $\varphi$ is $\LnALCg$ satisfiable
on varying domain neighbourhood models
then, clearly,
$\prop{\varphi}$ is satisfied in a $\setsymbols_\varphi$-consistent $L^{n}$ model.  
We now argue about the converse direction. 
Suppose that $\prop{\varphi}$ is satisfied in a $\setsymbols_\varphi$-consistent $L^{n}$ model
$\propmodel = (\Wmc, \{ \Nmc_{i} \}_{i \in J}, \Vmc)$. 
%
As $\propmodel$ is $\setsymbols_\varphi$-consistent, we have that, for every $w\in \propdomain$,
the $\ALC$ formula
$\alcform$
is satisfied by an $\ALC$ interpretation, say $\Imc_{w} = (\Delta_{w}, \cdot^{\Imc_{w}})$.
We define the
varying domain neighbourhood model $\Mmc=(\Fmc,\Imc)$, where the $L^{n}$ frame $\Fmc = ( \W, \{ \Nmc_{i} \}_{i \in J} )$ is as above,
and where $\Imc$ is a function associating with each $w \in \Wmc$ the $\ALC$ interpretation $\Imc_{w}$.
By induction on the structure of subformulas $\psi$ of $\varphi$, it can be shown that,
for every $w \in \Wmc$, we have
$\propmodel, w \models \prop{\psi}$
iff
 $\Mmc, w \models \psi$. We show this in Claim~\ref{cl:ind}.
\begin{claim}\label{cl:ind}
For every subformula $\psi$ of $\varphi$ and every $w \in \Wmc$, we have
$\propmodel, w \models \prop{\psi}$ iff
$\Mmc, w \models \psi$.
\end{claim}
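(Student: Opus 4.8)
The plan is to prove Claim~\ref{cl:ind} by structural induction on the subformula $\psi$ of $\varphi$, exploiting the fact that $\Mmc$ and $\propmodel$ are built over the \emph{same} $L^{n}$ frame $\Fmc = (\Wmc, \{ \Nmc_{i} \}_{i \in J})$: the two models differ only in how atoms are interpreted at each world, so the neighbourhood layer will transfer automatically once the atomic level is matched.

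For the base case, let $\psi = \pi$ be an $\ALC$ atom, so that $\prop{\psi} = p_{\pi} \in \setsymbols_\varphi$. Since $\varphi$ is an $\MLnALCg$ formula, no modalised concept occurs inside $\pi$; hence whether $\Mmc, w \models \pi$ holds depends only on the local $\ALC$ interpretation, i.e. $\Mmc, w \models \pi$ iff $\Imc_{w} \models \pi$. By construction $\Imc_{w}$ satisfies $\alcform$, which is a conjunction containing, for \emph{every} $p_{\pi} \in \setsymbols_\varphi$, either the conjunct $\pi$ (when $p_{\pi} \in \formtp{\varphi}$) or the conjunct $\neg \pi$ (otherwise). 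Consequently $\Imc_{w} \models \pi$ iff $p_{\pi} \in \formtp{\varphi}$ iff $w \in \Vmc(p_{\pi})$ iff $\propmodel, w \models p_{\pi}$, which is precisely $\propmodel, w \models \prop{\psi}$.

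The Boolean cases $\psi = \neg\chi$ and $\psi = \chi_{1} \wedge \chi_{2}$ follow at once from the induction hypothesis and the satisfaction clauses for $\neg$ and $\wedge$. For the modal case $\psi = \Box_{i}\chi$, applying the induction hypothesis at every $v \in \Wmc$ gives $v \in \llbracket \prop{\chi} \rrbracket^{\propmodel}$ iff $\Mmc, v \models \chi$ iff $v \in \llbracket \chi \rrbracket^{\Mmc}$, so the two truth sets coincide: $\llbracket \prop{\chi} \rrbracket^{\propmodel} = \llbracket \chi \rrbracket^{\Mmc}$. As $\propmodel$ and $\Mmc$ carry the same neighbourhood functions $\Nmc_{i}$, we get $\llbracket \prop{\chi} \rrbracket^{\propmodel} \in \Nmc_{i}(w)$ iff $\llbracket \chi \rrbracket^{\Mmc} \in \Nmc_{i}(w)$, that is, $\propmodel, w \models \Box_{i}\prop{\chi}$ iff $\Mmc, w \models \Box_{i}\chi$.

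The only point requiring care is the base case: it is exactly the prohibition of modalised concepts in $\MLnALCg$ that lets every $\ALC$ atom of $\varphi$ act as an opaque propositional atom for the modal layer, so that its truth at a world is governed solely by $\Imc_{w}$ and is faithfully recorded by the choice of $\Imc_{w}$ as a model of $\alcform$. Once Claim~\ref{cl:ind} is in place, taking $\psi = \varphi$ and a world $w$ with $\propmodel, w \models \prop{\varphi}$ yields $\Mmc, w \models \varphi$; since $\Mmc$ is a varying domain neighbourhood model over an $L^{n}$ frame, this shows $\varphi$ is $\LnALCg$ satisfiable on varying domain neighbourhood models, completing the proof of Lemma~\ref{lem:propL}.
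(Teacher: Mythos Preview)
Your proof is correct and follows essentially the same approach as the paper: a structural induction on subformulas, with the base case handled via the $\Sigma_\varphi$-consistency of $\propmodel$ (so that $\Imc_w$ decides each atom $\pi$ exactly according to $\Vmc(p_\pi)$), and the modal case handled by showing the truth sets $\llbracket \prop{\chi} \rrbracket^{\propmodel}$ and $\llbracket \chi \rrbracket^{\Mmc}$ coincide over the shared neighbourhood frame. Your explicit remark that the absence of modalised concepts is what makes the base case go through is a nice clarification the paper leaves implicit.
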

\begin{proof}
In the base case $\psi$ is an $\ALC$ atom $\pi$ in $\varphi$ and $\prop{\psi}$ is a propositional symbol $p_\pi$. By the semantics of propositional neighbourhood models,
$\propmodel, w \models \prop{\psi}$ iff $w\in\Vmc(p_\pi)$. 
For every \ALC atom $\pi$ in $\varphi$, $w\in\Vmc(p_\pi)$ iff 
$\pi$ is a conjunct of $\hat{\varphi}_{\Vmc,w}$.
As $\propmodel$ is $\setsymbols_\varphi$-consistent, we have that, for every $w\in \propdomain$,
the $\ALC$ formula
$\hat{\varphi}_{\Vmc,w}$
is satisfied by the $\ALC$ interpretation $\Imc_{w} = (\Delta_{w}, \cdot^{\Imc_{w}})$.
 Thus, $\pi$ is a conjunct of $\hat{\varphi}_{\Vmc,w}$ iff $\Imc_w\models\pi$.
 By the semantics of $\MLnALCg$  neighbourhood models,
 $\Imc_w\models\pi$ iff $\Mmc, w \models \psi$.
 Suppose that Claim~\ref{cl:ind} holds for $\psi_1,\psi_2$. 
 For the inductive step, we make the following case distinction on 
 the format of $\psi$. 
 \begin{itemize}
 	\item $\psi=\neg\psi_1$: By the semantics of  propositional neighbourhood models,
 	$\propmodel, w \models \prop{\neg{\psi_1}}$ iff $\propmodel, w \not\models \prop{{\psi_1}}$. By the inductive hypothesis, Claim~\ref{cl:ind} holds for $\psi_1$.
 	By the contrapositive in each direction, $\propmodel, w \not\models \prop{{\psi_1}}$
 	iff $\Mmc, w \not\models \psi_1$. By the semantics of  $\MLnALCg$ neighbourhood models, $\Mmc, w \not\models \psi_1$ iff $\Mmc, w \models \neg\psi_1$.
\item $\psi=\psi_1\wedge\psi_2$: By the semantics of  propositional neighbourhood models,
$\propmodel, w \models \prop{{(\psi_1\wedge\psi_2)}}$ iff $\propmodel, w \models \prop{{\psi_1}}$ and $\propmodel, w \models \prop{{\psi_2}}$. By the inductive hypothesis, Claim~\ref{cl:ind} holds for $\psi_1,\psi_2$.
So, $\propmodel, w \models \prop{{\psi_i}}$
iff $\Mmc, w \models \psi_i$, for $i\in \{1,2\}$. By the semantics of  $\MLnALCg$ neighbourhood models, $\Mmc, w \models \psi_1$ and $\Mmc, w \models \psi_2$ iff $\Mmc, w \models \psi_1\wedge \psi_2$.
\item $\psi=\B_{i} \psi_1$: By the semantics of  propositional neighbourhood models,
$\propmodel, w \models \prop{{(\B_{i} \psi_1)}}$ iff $\llbracket \prop{{\psi_1}}\rrbracket^{\propmodel} \in \Nmc_{i}(w)$ where
$\llbracket \prop{{\psi_1}} \rrbracket^{\propmodel} = \{ v \in \Wmc \mid \propmodel, v \models \prop{{\psi_1}} \}$. By the inductive hypothesis, Claim~\ref{cl:ind} holds for $\psi_1$.
So, $\propmodel, v \models \prop{{\psi_1}}$
iff $\Mmc, v \models \psi_1$, for every $v\in\Wmc$. 
Thus, $\llbracket \prop{{\psi_1}} \rrbracket^{\propmodel}=\llbracket {{\psi_1}} \rrbracket^{\Mmc}$. By definition of $\propmodel$ and \Mmc, we have that $\Nmc_{i}(w)$
is the same in both $\propmodel$ and \Mmc, for every $w\in\Wmc$ and $i\in J$.
So $\llbracket \prop{{\psi_1}}\rrbracket^{\propmodel} \in \Nmc_{i}(w)$
iff $\llbracket {{\psi_1}} \rrbracket^{\Mmc} \in \Nmc_{i}(w)$.
By the semantics of  $\MLnALCg$ neighbourhood models, $\llbracket {{\psi_1}} \rrbracket^{\Mmc} \in \Nmc_{i}(w)$ iff $\Mmc, w \models \B_{i} \psi_1$.
 \end{itemize}
We have thus shown that for every subformula $\psi$ of $\varphi$ and every $w \in \Wmc$, we have
$\propmodel, w \models \prop{\psi}$ iff
$\Mmc, w \models \psi$.
\end{proof}
Since $\propmodel , v \models \prop{\varphi}$, for some $v \in \Wmc$, we conclude that $\varphi$ is $\LnALCg$ satisfiable. 
\end{proof}

\Lemmapropvardi*
\begin{proof}
{{In this proof, for any set $S\subseteq\{\mathbf{E,M,C,N,T,P,Q,D}\}$, we call $S$ model any neighbourhood model satisfying all conditions in $S$.}}
	We start with proving ($\Rightarrow$). 
	We consider the more complex case where $\mathbf{C}\in\Lvar$.
	For $\mathbf{C}\not\in\Lvar$ the proof simplifies by taking $k = 1$.
	Suppose that $\phi$ is satisfied in a world $w$ of a $\setsymbols$-consistent $\Lvar^{n}$ model 
	$\propmodel = (\propdomain, \{ \propneigh_{i} \}_{i \in J}, \propassign)$. That is, 
	$\propmodel, w\models \phi$. We define a $\setsymbols$-consistent valuation for 
	$\phi$
	by setting, for all $\psi \in {\sf sub}(\phi)$,
	$\nu(\psi)=1$ if $\propmodel, w\models \psi$ and $\nu(\psi) = 0$
	if  $\propmodel, w\not\models \psi$. 
	It is easy to check that $\nu$ is indeed a 
	$\setsymbols$-consistent valuation   (given that $\propmodel$ is a  
			$\setsymbols$-consistent $\Lvar^{n}$ model).   
	Now assume that $\B_i\psi_1, \dots, \B_i\psi_k, \B_i\chi\in{\sf sub}(\phi)$,
	$\valuation(\B_i\psi_j)=1$ for all $1\leq j \leq k$,
	and $\valuation(\B_i\chi)=0$.
	Then $\propmodel, w\models \B_i\psi_1 \land ... \land \B_i\psi_n\land\neg\B_i\chi$.
	Since $\propmodel$ is a $\mathbf{EC}$ model, this means that
	$\propmodel\not\models \psi_1\land ... \land \psi_n \leftrightarrow \chi$, that is,
	there is a worlds $u$ such that 
	$\propmodel, u\models (\bigwedge^{k}_{j=1}\psi_j\wedge\neg\chi) \vee \bigvee^{k}_{j=1} (\neg\psi_j\wedge\chi)$.
	(If $\mathbf{M}\in\Lvar$, then 
	$\propmodel\not\models \psi_1\land ... \land \psi_n \to \chi$, that is,
	there $u$ such that 
	$\propmodel, u\models (\bigwedge^{k}_{j=1}\psi_j\wedge\neg\chi)$.)
	Since $\propmodel$ is $\setsymbols$-consistent this concludes the proof.
	Now we prove that $\nu$ satifies $(\mathbf{X})$ if $\mathbf{X}\in\Lvar$, for $\mathbf{X}\in\{\mathbf{N,T,P,Q,D}\}$.
	\begin{itemize}
		\item[($\mathbf{N}$)]
		If $\nu(\B_i\psi)=0$, then $\propmodel, w\not\models \B_i\psi$.
		Since $\propmodel$ is a $\mathbf{EN}$ model, this means that $\propmodel\not\models\psi$
		(otherwise $\propmodel\models\B_i\psi$),
		that is there is $u$ such that $\propmodel, u \models \neg\psi$.
		
		\item[($\mathbf{T}$)]
		If $\nu(\B_i\psi)=1$, then $\propmodel, w\models \B_i\psi$, thus since $\propmodel$ is a $\mathbf{ET}$ model, $\propmodel\models\B_i\psi\to\psi$,
		hence $\propmodel, w \models \psi$, that is $\nu(\psi)=1$.
		
		\item[($\mathbf{P}$)]
		If $\valuation(\B_i\psi_1) = ... = \valuation(\B_i\psi_k) = 1$, 
		then $\propmodel, w \models \B_i\psi_1\land ... \land \B_i\psi_k$.
		Since $\propmodel$ is a $\mathbf{ECP}$ model, $\propmodel, w \models \B_i(\psi_1\land ... \land \psi_k)$,
		and $\propmodel\models\neg\B_i\falseprop$.
		Then $\propmodel\not\models \psi_1\land ... \land \psi_k \leftrightarrow \falseprop$,
		thus there is $u$ such that $\propmodel, u \models \psi_1\land ... \land \psi_k$.
		
		\item[($\mathbf{Q}$)]
		If $\valuation(\B_i\psi_1) = ... = \valuation(\B_i\psi_k) = 1$, 
		then $\propmodel, w \models \B_i\psi_1\land ... \land \B_i\psi_k$.
		Since $\propmodel$ is a $\mathbf{ECQ}$ model, $\propmodel, w \models \B_i(\psi_1\land ... \land \psi_k)$,
		and $\propmodel\models\neg\B_i(\trueprop)$.
		Then $\propmodel\not\models \psi_1\land ... \land \psi_k \leftrightarrow \trueprop$,
		thus there is $u$ such that $\propmodel, u \models \neg\psi_1\lor ...\lor\neg\psi_k$.
		
		\item[($\mathbf{D}$)]
		If $\valuation(\B_i\psi_j)= \valuation(\B_i\chi_\ell)=1$ for all $1\leq j \leq k$, $1\leq \ell \leq h$,
		then $\propmodel, w \models \bigwedge^{k}_{j=1}\B_i\psi_j \land \bigwedge^{h}_{\ell=1}\B_i\chi_\ell$.
		Since $\propmodel$ is a $\mathbf{ECD}$ model, 
		$\propmodel, w \models \B_i(\psi_i\land ... \land \psi_k) \land \B_i(\chi_1\land ... \land \chi_h)$, and
		$\propmodel\models\B_i\zeta\to\neg\B_i\neg\zeta$.
		Then $\propmodel\not\models \psi_1\land ... \land \psi_k \leftrightarrow \neg(\chi_1\land ... \land \chi_h)$,
		thus there is $u$ such that 
		$\propmodel, u \models (\psi_1\land ... \land \psi_k \land \chi_1\land ... \land \chi_h) \lor 
		(\neg(\psi_1\land ... \land \psi_k) \land \neg(\chi_1\land ... \land \chi_h))$.
		If $\mathbf{M}\in\Lvar$, then 
		$\propmodel\not\models \psi_1\land ... \land \psi_k \to \neg(\chi_1\land ... \land \chi_h)$,
		hence there is $u$ such that 
		$\propmodel, u \models \psi_1\land ... \land \psi_k \land \chi_1\land ... \land \chi_h$.
	\end{itemize}
	
	The proof of the converse ($\Leftarrow$) is as follows. 
	Suppose there is a $\setsymbols$-consistent valuation $\nu$ for $\phi$
	satisfying the conditions stated by the lemma.
	We construct a $\Lvar^{n}$ model $\propmodel$ and a world $w$ such that $\propmodel,w\models\phi$.
	By the condition, it follows that for all sets $\Psi$ of formulas
	$\B_i\psi_1, \dots, \B_i\psi_k$  in ${\sf sub}(\phi)$
	such that $\valuation(\B_i\psi_j)=1$ for all $1\leq j \leq k$,
	and all $\B_i\chi$ in ${\sf sub}(\phi)$ such that $\valuation(\B_i\chi)=0$, 
	there is a $\setsymbols$-consistent model \[\propmodel_{\Psi,\chi}=(\propdomain_{\Psi,\chi},
	\{ \propneigh_{{(\Psi,\chi)}_{i}} \}_{i \in J},\propassign_{\Psi,\chi})\]
	and a world 
	$w_{\Psi,\chi}\in \propdomain_{\Psi,\chi}$ such that 
	$\propmodel_{\Psi,\chi},w_{\Psi,\chi}\models(\bigwedge^{k}_{j=1}\psi_j\wedge\neg\chi) \vee \boldsymbol\vartheta$; moreover 
	if $\mathbf{X}\in\Lvar$, for $\mathbf{X}\in\{\mathbf{N,P,Q,D}\}$, the following hold: 
	\begin{itemize}
		\item[$(\mathbf{N})$]
		for all $\B_i\psi$ in ${\sf sub}(\phi)$ such that $\valuation(\B_i\psi)=0$, 
		there is a $\setsymbols$-consistent $\Lvar^{n}$ model 
		$\propmodel_{\psi}=(\propdomain_{\psi},  \{ \propneigh_{{\psi}_{i}} \}_{i \in J},\propassign_{\psi})$
		and a world 
		$w_{\psi}\in \propdomain_{\psi}$ such that 
		$\propmodel_{\psi}, w_{\psi} \models \neg\psi$;
		\item[$(\mathbf{P})$]
		for all $\Psi = \{\B_i\psi_1, \dots, \B_i\psi_k\}\subseteq{\sf sub}(\phi)$
		such that $\valuation(\B_i\psi_j)=1$ for all $1\leq j \leq k$,
		there is a $\setsymbols$-consistent $\Lvar^{n}$ model 
		$\propmodel_{\Psi}=(\propdomain_{\Psi},  \{ \propneigh_{{\Psi}_{i}} \}_{i \in J},\propassign_{\Psi})$
		and a world 
		$w_{\Psi}\in \propdomain_{\Psi}$ such that 
		$\propmodel_{\Psi}, w_{\Psi} \models \psi_1\land...\land\psi_k$;
		
		\item[$(\mathbf{Q})$]
		for all $\Psi = \{\B_i\psi_1, \dots, \B_i\psi_k\}\subseteq{\sf sub}(\phi)$
		such that $\valuation(\B_i\psi_j)=1$ for all $1\leq j \leq k$,
		there is a $\setsymbols$-consistent $\Lvar^{n}$ model 
		$\propmodel_{\Psi}=(\propdomain_{\Psi},  \{ \propneigh_{{\Psi}_{i}} \}_{i \in J},\propassign_{\Psi})$
		and a world 
		$w_{\Psi}\in \propdomain_{\Psi}$ such that 
		$\propmodel_{\Psi}, w_{\Psi} \models \neg\psi_1\lor ... \lor\neg\psi_k$;
		
		\item[$(\mathbf{D})$]
		for all $\Psi = \{\B_i\psi_1, \dots, \B_i\psi_k\}$, $\Lambda = \{\B_i\chi_1, \dots, \B_i\chi_h\}$,
		$\Psi,\Lambda\subseteq{\sf sub}(\phi)$
		such that 
		$\valuation(\B_i\psi_j)=\valuation(\B_i\chi_\ell)=1$ for all $1\leq j \leq k$,  $1\leq \ell \leq h$,
		there is a $\setsymbols$-consistent $\Lvar^{n}$ model 
		$\propmodel_{\Psi,\Lambda}=(\propdomain_{\Psi,\Lambda},  \{ \propneigh_{{(\Psi,\Lambda)}_{i}} \}_{i \in J},\propassign_{\Psi,\Lambda})$
		and a world 
		$w_{\Psi,\Lambda}\in \propdomain_{\Psi,\Lambda}$ such that 
		$\propmodel_{\Psi,\Lambda}, w_{\Psi,\Lambda} \models (\bigwedge^{k}_{j=1}\psi_j \land \bigwedge^{h}_{\ell=1}\chi_\ell) \vee \boldsymbol\eta$.
	\end{itemize}
	
	Let $\propmodel_1, ..., \propmodel_m$
	be an enumeration of all $\Lvar^{n}$ models listed above,
	where 
	$\propmodel_j = (\propdomain_j, \{ \propneigh_{j_{i}} \}_{i \in J},\propassign_j)$.
	That is, we take one model $\propmodel_{\Psi,\chi}$ 
	for each pair $(\Psi,\B_i\chi)$,
	where $\Psi = \{\B_i\psi_1, \dots, \B_i\psi_k\}\subseteq{\sf sub}(\phi)$,
	$\valuation(\B_i\psi_j)=1$ for all $1\leq j \leq k$,
	$\B_i\chi$ in ${\sf sub}(\phi)$, and
	$\valuation(\B_i\chi)=0$;
	and similarly 
	we take one model $\propmodel_\psi$, $\propmodel_\Psi$, or $\propmodel_{\Psi,\Lambda}$
	for all formulas or sets of formulas 
	listed in items $(\mathbf{N})$, $(\mathbf{P})$, $(\mathbf{Q})$, $(\mathbf{D})$. 
	Assume without loss of generality that 
	$\propdomain_j\cap \propdomain_\ell=\emptyset$ 
	for $j\neq \ell$. 
	We define a $\setsymbols$-consistent $\Lvar^{n}$ model   
	$\propmodel = (\propdomain,\{ \propneigh_{i} \}_{i \in J}, \propassign)$ for $\phi$
	as follows.
	\begin{itemize}
		\item $\propdomain = \bigcup_{j = 1}^{m} \propdomain_j \cup \{w\}$, where $w$ is a new world.
		
		\item 
		Consider a function $\ext{\cdot}: {\sf sub}(\phi)\rightarrow \Pmc(\Wmc)$
		with $\ext{\psi}=\bigcup_{j = 1}^{m} \llbracket \psi \rrbracket^{\propmodel_j} \cup \ext{\psi}_0$ for all $\psi\in {\sf sub}(\phi)$, where 
		$\ext{\cdot}_0: {\sf sub}(\varphi)\rightarrow  \Pmc(\{w\})$ is the function
		that assigns $\psi$ to $\{w\}$, if $\nu(\psi)=1$, 
		and to $\emptyset$, otherwise.
		By construction, we have that $\ext{\neg \psi}=\propdomain\setminus \ext{\psi}$
		and $\ext{\psi_1\wedge \psi_2} =\ext{\psi_1}\cap \ext{\psi_2}$. 
		We define the assignment $\propassign$ as the function 
		$\propassign: \NPr(\varphi)\rightarrow \Pmc(\Wmc)$ satisfying 
		$\propassign(p_\elaxiom)=\ext{p_\elaxiom}$ for all $p_\elaxiom\in \NPr(\varphi)$. 
		
		\item It remains to define $\propneigh_i$, for $1 \leq i \leq n$.
		For $u\in \propdomain_j$,
		we define $\alpha\in\propneigh_i(u)$ if and only if 
		there is $\B_i\psi$ in ${\sf sub}(\phi)$ such that
		$\propmodel_j, u \models \B_i\psi$ and $\ext{\psi} = \alpha$;
		and
		we define $\alpha\in\propneigh_i(w)$ if and only if 
		there is $\B_i\psi$ in ${\sf sub}(\phi)$ such that
		$\valuation(\B_i\psi)=1$ and $\ext{\psi} = \alpha$.
		Then if $\mathbf{C}\in\Lvar$,
		we close $\propneigh_i$ under intersection, 
		if $\mathbf{M}\in\Lvar$,
		we close $\propneigh_i$ under supersets,
		and if $\mathbf{N}\in\Lvar$,
		we extend $\propneigh_i(u)$ with $\propdomain$ for all
		$u\in\propdomain$,
		so that $\propmodel$ is a $\EC$, respectively a $\EM$,
		respectively a $\EN$, model.
	\end{itemize}
	
	We prove the following claim which ensures that
	$\propneigh_i$ is well-defined.
	
	\begin{claim}
		(i) For $u\in\propdomain_j$, if $\beta \in \Nmc_i(u)$ and 
		$\beta = \ext{\chi}$ for some 
		$\B_i\chi$ in ${\sf sub}(\phi)$,
		then $\propmodel_j,u\models\B_i\chi$.
		(ii) If $\beta \in \Nmc_i(w)$ and  $\beta = \ext{\chi}$ for some 
		$\B_i\chi$ in ${\sf sub}(\phi)$,
		then $\valuation(\B_i\chi) = 1$.
	\end{claim}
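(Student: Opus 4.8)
The plan is to prove both parts by a direct case analysis on how the set $\beta$ can have entered $\propneigh_i(u)$ (respectively $\propneigh_i(w)$): either as a \emph{base} element, or through one of the closure steps applied according to $\Lvar$ (closure under finite intersection if $\mathbf{C}\in\Lvar$, closure under supersets if $\mathbf{M}\in\Lvar$, adjoining $\propdomain$ if $\mathbf{N}\in\Lvar$). The one structural fact I would record first and use throughout is that, because $\propdomain_1,\dots,\propdomain_m$ and $\{w\}$ are pairwise disjoint, for every $\psi\in{\sf sub}(\phi)$ and every $l$ we have $\ext{\psi}\cap\propdomain_l=\llbracket\psi\rrbracket^{\propmodel_l}$, and $w\in\ext{\psi}$ iff $\valuation(\psi)=1$. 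This lets one turn any inclusion between $\ext{\cdot}$-sets into the corresponding inclusion between truth sets inside a single witness model $\propmodel_l$; and each $\propmodel_l$ is a $\setsymbols$-consistent $\Lvar^{n}$ model, hence satisfies every basic condition whose letter lies in $\Lvar$.

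For part (i), fix $u\in\propdomain_j$ and suppose $\beta=\ext{\chi}\in\propneigh_i(u)$ with $\B_i\chi\in{\sf sub}(\phi)$. If $\beta$ is a base element, then $\beta=\ext{\psi}$ for some $\B_i\psi\in{\sf sub}(\phi)$ with $\propmodel_j,u\models\B_i\psi$; intersecting $\ext{\psi}=\ext{\chi}$ with $\propdomain_j$ gives $\llbracket\psi\rrbracket^{\propmodel_j}=\llbracket\chi\rrbracket^{\propmodel_j}$, and since $\propmodel_j$ satisfies the $\mathbf{E}$-condition this yields $\propmodel_j,u\models\B_i\chi$. If $\beta$ arose from the intersection closure, then $\beta=\bigcap_{l=1}^{s}\ext{\psi_l}$ with $\propmodel_j,u\models\B_i\psi_l$ for each $l$; intersecting with $\propdomain_j$ gives $\llbracket\chi\rrbracket^{\propmodel_j}=\bigcap_l\llbracket\psi_l\rrbracket^{\propmodel_j}$, which lies in the $i$-th neighbourhood of $u$ in $\propmodel_j$ by the $\mathbf{C}$-condition, so $\propmodel_j,u\models\B_i\chi$. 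The superset case is the same computation together with the $\mathbf{M}$-condition on $\propmodel_j$, and if $\beta=\propdomain$ was adjoined by the $\mathbf{N}$-step then $\llbracket\chi\rrbracket^{\propmodel_j}=\propdomain_j$, which is in every neighbourhood of $\propmodel_j$ by the $\mathbf{N}$-condition, again giving $\propmodel_j,u\models\B_i\chi$.

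For part (ii) I would argue by contradiction: assume $\beta=\ext{\chi}\in\propneigh_i(w)$ with $\B_i\chi\in{\sf sub}(\phi)$ but $\valuation(\B_i\chi)=0$. If $\beta$ came from the base or from the intersection/superset closures, then $\ext{\chi}\supseteq\bigcap_{l=1}^{s}\ext{\psi_l}$ (with equality unless $\mathbf{M}\in\Lvar$) for some $\B_i\psi_1,\dots,\B_i\psi_s\in{\sf sub}(\phi)$ with $\valuation(\B_i\psi_l)=1$ for all $l$ and $s\le\boldsymbol\kappa$ (taking the $\psi_l$ distinct). By the condition that $\valuation$ satisfies, namely item~1 of the lemma applied to $\psi_1,\dots,\psi_s$ and $\chi$, the formula $(\bigwedge_{l=1}^{s}\psi_l\wedge\neg\chi)\vee\boldsymbol\vartheta$ is true at some world $u$ of one of the enumerated witness models $\propmodel_t$ (namely the one chosen for the pair $(\{\B_i\psi_1,\dots,\B_i\psi_s\},\B_i\chi)$); evaluating this at $u$, splitting on the two forms of $\boldsymbol\vartheta$ ($\falseprop$ if $\mathbf{M}\in\Lvar$, $\bigvee_l(\neg\psi_l\wedge\chi)$ otherwise) and intersecting the relevant sets with $\propdomain_t$, one finds that $u$ belongs to exactly one of $\bigcap_l\ext{\psi_l}$ and $\ext{\chi}$ — contradicting $\ext{\chi}\supseteq\bigcap_l\ext{\psi_l}$ (resp. $\ext{\chi}=\bigcap_l\ext{\psi_l}$). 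If instead $\beta=\propdomain$ was adjoined by the $\mathbf{N}$-step, then since $\mathbf{N}\in\Lvar$ the valuation $\valuation$ satisfies condition $(\mathbf{N})$, which from $\valuation(\B_i\chi)=0$ supplies a witness model $\propmodel_t$ with a world $u\in\propdomain_t$ at which $\neg\chi$ holds; but then $u\notin\llbracket\chi\rrbracket^{\propmodel_t}=\ext{\chi}\cap\propdomain_t=\propdomain_t$, which is impossible.

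I expect the bookkeeping across the closure cases to be essentially routine. The one place that needs genuine care is part (ii): the synthetic world $w$ carries no built-in $\mathbf{E}$-condition linking $\valuation(\B_i\psi)$ for formulas $\psi$ with equal extensions, so the direct argument of part (i) is unavailable; instead one must feed exactly the right tuple $(\B_i\psi_1,\dots,\B_i\psi_k,\B_i\chi)$ into item~1 of the lemma (and, in the $\mathbf{N}$-case, into condition $(\mathbf{N})$) to manufacture a witness model that performs the separation. Making the $\boldsymbol\vartheta$ case split line up with the disjuncts the lemma provides, and keeping the $\mathbf{M}$ versus non-$\mathbf{M}$ distinction consistent with whether the superset closure was actually used, is the main thing to get right.
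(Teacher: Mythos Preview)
Your proposal is correct and follows essentially the same approach as the paper's proof: both exploit the disjointness of the $\propdomain_j$'s to localise equalities/inclusions between $\ext{\cdot}$-sets to truth sets in a single witness model, then invoke the relevant frame condition ($\mathbf{E}$, $\mathbf{C}$, $\mathbf{M}$, or $\mathbf{N}$) on that model for part~(i), and for part~(ii) argue by contradiction using item~1 (respectively condition~$(\mathbf{N})$) of the lemma to produce a separating world among the enumerated $\propmodel_1,\dots,\propmodel_m$. The paper treats only the representative case $\mathbf{C},\mathbf{N}\in\Lvar$, $\mathbf{M}\notin\Lvar$ and leaves the others as easy adaptations, whereas you spell out the case analysis over all closure steps; otherwise the arguments coincide.
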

	\begin{proof}[Proof of Claim]
		We consider the case where $\mathbf{C},\mathbf{N}\in\Lvar$ and $\mathbf{M}\notin\Lvar$,
		for the other cases the proof can be easily adapted.
		
		(i) If $\beta \in \Nmc_i(u)$, then by definition 
		$\beta=\propdomain$, or
		$\beta = \bigcap_{\ell=1}^k \ext{\chi_\ell}$ for some $\B_i\chi_1, ..., \B_i\chi_k$ in ${\sf sub}(\phi)$ such that
		$\propmodel_j,u\models\bigwedge_{\ell=1}^k\B_i\chi_\ell$.
		If $\beta=\propdomain$, then $\ext{\chi}=\propdomain$,
		thus in particular 
		$\llbracket \chi \rrbracket^{\propmodel_j} = \propdomain_j$,
		and since $\propdomain_j\in\propneigh_{j_{i}}(u)$
		it holds
		$\propmodel_j, u \models \B_i\chi$.
		Otherwise
		$\ext{\chi} = \bigcap_{\ell=1}^k \ext{\chi_\ell}$,
		which implies 
		$\llbracket \chi \rrbracket^{\propmodel_j} =
		\bigcap_{\ell=1}^k\llbracket \chi_\ell \rrbracket^{\propmodel_j}$
		(because $\propdomain_j \cap \propdomain_k = \emptyset$ for  
		$k \neq j$).
		Since $\propmodel_j$ is a $\mathbf{EC}$ model,
		$\propmodel_j,u\models\B_i\bigwedge_{\ell=1}^k\chi_\ell$,
		then 
		$\llbracket \bigwedge_{\ell=1}^k\chi_\ell \rrbracket^{\propmodel_j} 
		\in\propneigh_{j_{i}}(u)$,
		where
		$\llbracket \bigwedge_{\ell=1}^k\chi_\ell \rrbracket^{\propmodel_j} =
		\bigcap_{\ell=1}^k\llbracket \chi_\ell \rrbracket^{\propmodel_j} =
		\llbracket \chi \rrbracket^{\propmodel_j}$,
		therefore
		$\propmodel_j,u\models\B_i\chi$.
		
		(ii) If $\beta \in \Nmc_i(u)$, then by definition 
		$\beta=\propdomain$, or
		$\beta = \bigcap_{\ell=1}^k \ext{\chi_\ell}$ for some $\B_i\chi_1, ..., \B_i\chi_k$ in ${\sf sub}(\phi)$ such that
		$\valuation(\chi_\ell) = 1$ for all $1 \leq \ell \leq k$.
		If $\beta=\propdomain$, then $\ext{\chi} = \propdomain$,
		thus $\ext{\chi}_0 = \{w\}$, that is $\valuation(\chi) = 1$.
		By contradiction, suppose that 
		$\valuation(\B_i\chi) = 0$.
		Then by $(N)$, there are a $\setsymbols$-consistent $\Lvar^{n}$ model 
		$\propmodel_{\chi}$  and a world  $w_{\chi}$ such that 
		$\propmodel_{\chi}, w_{\chi} \not\models \chi$.
		One such model is enumerated among 
		$\propmodel_1, ..., \propmodel_m$, let it be $\propmodel_o$.
		Then $\llbracket \chi \rrbracket^{\propmodel_o} \not= \propdomain_o$, 
		thus $\ext{\chi} \not= \propdomain$, giving a contradiction.
		Therefore $\valuation(\B_i\chi) = 1$.
		If instead $\beta\not=\propdomain$,
		then $\ext{\chi} = \bigcap_{\ell=1}^k \ext{\chi_\ell}$.
		Suppose that $\valuation(\chi) = 0$.
		By the hypothesis of the lemma,
		there are a $\setsymbols$-consistent model $\propmodel_{\Lambda,\chi}$ and a world 
		$w_{\Lambda,\chi}$ such that 
		$\propmodel_{\Lambda,\chi},w_{\Lambda,\chi}\models(\bigwedge^{k}_{\ell=1}\chi_\ell\wedge\neg\chi) \vee \bigvee^{k}_{\ell=1} (\neg\chi_\ell\wedge\chi)$,
		where $\Lambda = \{\B_i\chi_1,...,\B_i\chi_k\}$.
		Then one such model is enumerated among 
		$\propmodel_1, ..., \propmodel_m$, let it be $\propmodel_o$.
		Thus 
		$\llbracket \chi \rrbracket^{\propmodel_o} \not=
		\llbracket \bigwedge_{\ell=1}^k\chi_\ell \rrbracket^{\propmodel_o}$, where
		$\llbracket \bigwedge_{\ell=1}^k\chi_\ell \rrbracket^{\propmodel_\ell} =
		\bigcap_{\ell=1}^k\llbracket \chi_\ell \rrbracket^{\propmodel_\ell}$.
		Since  $\propdomain_j \cap \propdomain_k = \emptyset$ for  
		$k \neq j$,  
		this implies $\ext{\chi} \not = \bigcap_{\ell=1}^k \ext{\chi_\ell}$,
		giving a contradiction.
		Therefore $\valuation(\B_i\chi) = 1$. 
	\end{proof}
	
	\begin{claim}
		For all $\psi\in{\sf sub}(\phi)$,
		$\llbracket \psi \rrbracket^{\propmodel} = \ext{\psi}$.
	\end{claim}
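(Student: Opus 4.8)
The plan is to prove the claim by structural induction on $\psi\in{\sf sub}(\phi)$, feeding on the preceding Claim about the well-definedness of $\propneigh_i$. For the base case, if $\psi=p_\elaxiom$ is a propositional symbol in $\setsymbols$, then by the very definition of the assignment we have $\propassign(p_\elaxiom)=\ext{p_\elaxiom}$, so $\llbracket p_\elaxiom\rrbracket^{\propmodel}=\propassign(p_\elaxiom)=\ext{p_\elaxiom}$. For $\psi=\neg\chi$ and $\psi=\chi_1\wedge\chi_2$ I would simply combine the induction hypothesis with the identities $\ext{\neg\chi}=\propdomain\setminus\ext{\chi}$ and $\ext{\chi_1\wedge\chi_2}=\ext{\chi_1}\cap\ext{\chi_2}$ recorded in the construction, together with the Boolean clauses of the satisfaction relation; these cases are routine.

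The modal case $\psi=\B_i\chi$ is where the real work lies and is the main obstacle. By the induction hypothesis $\llbracket\chi\rrbracket^{\propmodel}=\ext{\chi}$, so $\llbracket\B_i\chi\rrbracket^{\propmodel}=\{u\in\propdomain\mid\ext{\chi}\in\propneigh_i(u)\}$, and I must show this coincides with $\ext{\B_i\chi}=\bigcup_{j=1}^m\llbracket\B_i\chi\rrbracket^{\propmodel_j}\cup\ext{\B_i\chi}_0$. I would split on whether $u\in\propdomain_j$ for some $j$ or $u=w$; since the $\propdomain_j$ are pairwise disjoint and disjoint from $\{w\}$, membership of $u$ in $\ext{\B_i\chi}$ reduces to membership in $\llbracket\B_i\chi\rrbracket^{\propmodel_j}$ in the first case and to $\valuation(\B_i\chi)=1$ in the second. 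For $u\in\propdomain_j$: the implication $\propmodel_j,u\models\B_i\chi\Rightarrow\ext{\chi}\in\propneigh_i(u)$ is immediate from the definition of $\propneigh_i$ (take the witness $\B_i\chi\in{\sf sub}(\phi)$), while the converse $\ext{\chi}\in\propneigh_i(u)\Rightarrow\propmodel_j,u\models\B_i\chi$ is exactly part (i) of the preceding Claim with $\beta=\ext{\chi}$. For $u=w$: the implication $\valuation(\B_i\chi)=1\Rightarrow\ext{\chi}\in\propneigh_i(w)$ is again immediate from the definition of $\propneigh_i(w)$, and the converse is part (ii) of the preceding Claim. The one subtlety to keep an eye on is that the closure operations turning $\propmodel$ into an $\mathbf{EC}$, $\mathbf{EM}$, or $\mathbf{EN}$ model only enlarge each $\propneigh_i(u)$, so they never break the easy direction; the preceding Claim was stated precisely to absorb them in the hard direction.

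Once the claim is in hand, I would instantiate $\psi=\phi$: since $\valuation(\phi)=1$ we get $w\in\ext{\phi}_0\subseteq\ext{\phi}=\llbracket\phi\rrbracket^{\propmodel}$, i.e.\ $\propmodel,w\models\phi$. Combined with the fact that each $\propmodel_j$ is $\setsymbols$-consistent (so that the $\setsymbols$-consistency of $\propmodel$ transfers world by world via $\ext{\cdot}$) and that the closure operations were chosen exactly to enforce the frame conditions required by $\Lvar$, this shows $\phi$ is satisfied in a $\setsymbols$-consistent $\Lvar^{n}$ model, completing the $(\Leftarrow)$ direction of Lemma~\ref{lem:proplemmaL}.
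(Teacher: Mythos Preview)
Your proposal is correct and follows essentially the same approach as the paper: structural induction on $\psi$, with the base and Boolean cases handled via the definition of $\propassign$ and the identities for $\ext{\cdot}$, and the modal case $\B_i\chi$ split on $u\in\propdomain_j$ versus $u=w$, using the definition of $\propneigh_i$ for the easy direction and the preceding Claim (parts (i) and (ii)) for the converse. Your remark about the closure operations only enlarging $\propneigh_i(u)$ is exactly the reason the preceding Claim is needed, and your downstream use of the claim (instantiating $\psi=\phi$) matches the paper's concluding step.
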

	\begin{proof}[Proof of Claim]
		By induction on the structure of formulas.
		For $p_\elaxiom\in \NPr(\varphi)$,
		$\llbracket p_\elaxiom \rrbracket^{\propmodel}=\ext{p_\elaxiom}$
		by definition of $\propassign$.
		For boolean connectives, the claim follows immediately from the inductive hypothesis and the fact
		that $\ext{\neg \chi} =\propdomain\setminus \ext{\chi}$ and $\ext{\chi_1\wedge \chi_2} = \ext{\chi_1}\cap \ext{\chi_2}$.
		Suppose  that $u\in \ext{\B_i\chi}$.  
		Then, either $u=w$ and $\nu(\B_i\chi)=1$
		or $u\in \propdomain_j$ and $\propmodel_j,u\models\B_i\chi$. 
		By definition of $\propneigh_i$, in either case 
		we have that $\ext{\chi}\in \propneigh_i(u)$. By inductive hypothesis,
		$\llbracket \chi \rrbracket^{\propmodel} = \ext{\chi}$, it follows that $\propmodel,u\models \B_i\chi$, 
		that is, $u\in \llbracket \B_i\chi \rrbracket^{\propmodel}$. 
		Suppose now that $u\in \llbracket \B_i\chi \rrbracket^{\propmodel}$, 
		that is, $\propmodel,u\models \B_i\chi$, or, equivalently,  $\llbracket \chi \rrbracket^{\propmodel}\in \propneigh_i(u)$.
		By inductive hypothesis,
		$\llbracket \chi \rrbracket^{\propmodel} = \ext{\chi}$, then
		by the previous claim, if $u=w$, then $\valuation(\B_i\chi)=1$,
		and if $u\in \propdomain_j$, then $\propmodel_j,u\models\B_i\chi$. 
		By definition of $\ext{\cdot}$, in either case we have that $u\in \ext{\B_i\psi}$. 
	\end{proof}

	\begin{claim}
		For $\mathbf{X}\in\{\mathbf{M,C,N,T,P,Q,D}\}$,
		if $\mathbf{X}\in\Lvar$,
		then $\propmodel$ satisfies the $\mathbf{X}$-condition.
	\end{claim}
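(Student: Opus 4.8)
The plan is to split the eight conditions into two groups according to how they interact with the definition of $\propneigh_i$. For $\mathbf{X}\in\{\mathbf{M},\mathbf{C},\mathbf{N}\}$ nothing is to be done beyond unwinding the construction: by the last clause in the definition of $\propmodel$, each $\propneigh_i(u)$ is closed under supersets when $\mathbf{M}\in\Lvar$, closed under finite intersections when $\mathbf{C}\in\Lvar$, and contains $\propdomain$ when $\mathbf{N}\in\Lvar$, so the corresponding $\mathbf{X}$-condition holds verbatim.

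The substantial part is $\mathbf{X}\in\{\mathbf{T},\mathbf{P},\mathbf{Q},\mathbf{D}\}$. For each such $\mathbf{X}$ I would fix $u\in\propdomain$ and $\alpha\in\propneigh_i(u)$ and unfold the definition: after the closures that were applied, $\alpha$ is either $\propdomain$ (only possible when $\mathbf{N}\in\Lvar$) or, when $\mathbf{M}\notin\Lvar$, a finite intersection $\bigcap_{\ell}\ext{\chi_\ell}$ of extensions of boxed formulas $\B_i\chi_\ell$ \emph{accepted at $u$}, and when $\mathbf{M}\in\Lvar$ a superset of such an intersection; here ``accepted at $u$'' means $\propmodel_j,u\models\B_i\chi_\ell$ if $u\in\propdomain_j$, and $\valuation(\B_i\chi_\ell)=1$ if $u=w$. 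The verification then combines three ingredients: (i) each component model $\propmodel_j$ is an $\Lvar^{n}$ model, hence satisfies the $\mathbf{X}$-condition (and the $\mathbf{C}$-/$\mathbf{M}$-conditions when these are in $\Lvar$), which settles the case $u\in\propdomain_j$; (ii) the properties $(\mathbf{T})$--$(\mathbf{D})$ of $\valuation$ hypothesised in the lemma, together with the fact that the witness models $\propmodel_\psi,\propmodel_\Psi,\propmodel_{\Psi,\Lambda},\propmodel_{\Psi,\chi}$ they supply occur among $\propmodel_1,\dots,\propmodel_m$, which settles the case $u=w$; and (iii) the truth lemma $\llbracket\psi\rrbracket^{\propmodel}=\ext{\psi}$ already established together with the pairwise disjointness of the $\propdomain_j$, which allow transferring membership and non-membership of a concrete world between some $\propmodel_o$ and $\propmodel$. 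Concretely: for $\mathbf{T}$ one shows $u\in\alpha$ (using the $\mathbf{T}$-condition of $\propmodel_j$, resp. property $(\mathbf{T})$ of $\valuation$); for $\mathbf{P}$ one exhibits a witness world inside $\alpha$; for $\mathbf{Q}$ one exhibits a world outside $\alpha$; and for $\mathbf{D}$, given $\alpha,\beta\in\propneigh_i(u)$, one produces a world witnessing $\beta\neq\propdomain\setminus\alpha$.

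Keeping the $\mathbf{N}$-closure compatible with $\mathbf{T},\mathbf{D},\mathbf{Q}$ will require the implications recorded in Theorem~\ref{prop:implicationsystem} and Figure~\ref{fig:pantheon}: since the $\mathbf{NQ}$- and $\mathbf{MQ}$-conditions are not among those in $\Log$, whenever $\mathbf{Q}\in\Lvar$ neither $\mathbf{N}$ nor $\mathbf{M}$ is; and since the $\mathbf{ND}$-, $\mathbf{MD}$-, $\mathbf{CP}$- and $\mathbf{T}$-conditions all entail the $\mathbf{P}$-condition, every component model occurring in a combination containing $\mathbf{D}$ or $\mathbf{T}$ is automatically a $\mathbf{P}$ model, so $\emptyset$ is never a core of any $\propneigh_i(u)$ --- which is precisely what makes $\propdomain\in\propneigh_i(u)$ consistent with the $\mathbf{D}$-condition.

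I expect the $\mathbf{D}$ case to be the main obstacle, especially in combination with $\mathbf{M}$ and/or $\mathbf{C}$, where $\alpha$ may be a \emph{proper} superset of a finite intersection of extensions, so that no single ``core formula'' can be read off. The argument there would go through the observation that a supplemented neighbourhood generated by a core family satisfies the $\mathbf{D}$-condition iff any two of its cores meet, which in turn holds because an $\mathbf{MD}$-neighbourhood (which is then a $\mathbf{P}$-neighbourhood) forces any two of its members to intersect; one then checks, via the truth lemma and disjointness of domains, that for any two cores accepted at $u$ a common world is produced either from the $\mathbf{D}$-condition of the relevant $\propmodel_j$ (case $u\in\propdomain_j$) or from property $(\mathbf{D})$ of $\valuation$ and its witness model (case $u=w$). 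Once this bookkeeping is in place, each individual verification is a short diagram chase, entirely parallel to the case $\mathbf{C},\mathbf{N}\in\Lvar$, $\mathbf{M}\notin\Lvar$ already spelled out in the surrounding proof.
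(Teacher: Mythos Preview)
Your proposal is correct and mirrors the paper's own proof: both treat $\mathbf{M},\mathbf{C},\mathbf{N}$ as immediate from the closure clauses in the definition of $\propneigh_i$, and for $\mathbf{T},\mathbf{P},\mathbf{Q},\mathbf{D}$ both perform the case split $u\in\propdomain_j$ versus $u=w$, invoking respectively the $\mathbf{X}$-condition of the component model $\propmodel_j$ and the hypothesis $(\mathbf{X})$ on $\valuation$ together with its witness model among $\propmodel_1,\dots,\propmodel_m$. Your explicit discussion of how the $\mathbf{N}$-closure interacts with $\mathbf{T},\mathbf{D},\mathbf{Q}$ and of the $\mathbf{M}\in\Lvar$ case (via cores of supplemented neighbourhoods) goes somewhat beyond what the paper spells out---the paper restricts attention to $\mathbf{C}\in\Lvar$, $\mathbf{M}\notin\Lvar$ and asserts the remaining cases ``can be easily adapted''---but the underlying argument is the same.
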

	\begin{proof}[Proof of Claim]
		For $\mathbf{X}\in\{\mathbf{M,C,N}\}$, 
		the claim follows immediately from
		the definition of $\propneigh_i$.
		We consider the other cases,
		assuming 
		that 
		$\mathbf{C}\in\Lvar$ and $\mathbf{M}\notin\Lvar$
		(for $\mathbf{C}\notin\Lvar$ or $\mathbf{M}\in\Lvar$ the proof can be easily adapted).
		
		\begin{itemize}
			\item[$(\mathbf{T})$]  Suppose that $\alpha\in\propneigh_i(u)$.
			Then $\alpha = \propdomain$ (if $\mathbf{N}\in\Lvar$), 
			which implies $u\in\alpha$,
			or $\alpha = \bigcap_{\ell = 1}^k \ext{\psi_\ell}$ for $\B_i\psi_1, ..., \B_k\psi \in{\sf sub}(\phi)$.
			If $u\in\propdomain_j$, then $\propmodel_j, u \models \B_i\psi_1 \land ... \land \B_i\psi_k$. 
			Since $\propmodel_j$ is a $\mathbf{ET}$ model,
			$\propmodel_j\models \B_i\chi\to\chi$, thus $\propmodel_j, u \models \psi_1 \land ... \land \psi_k$,
			that is $u\in\llbracket \psi_\ell \rrbracket^{\propmodel_j}$ for all $1\leq \ell \leq k$.
			It follows $u\in \bigcap_{\ell = 1}^k \ext{\psi_\ell} = \alpha$. 
			If instead $u = w$, then $\valuation(\B_i\psi_\ell) = 1$  for all $1\leq \ell \leq k$.
			By the hypothesis of the proposition, $\valuation(\psi) = 1$ for all $1\leq \ell \leq k$, 
			thus $\ext{\psi_\ell}_0 = \{w\}$  for all $1\leq \ell \leq k$.
			therefore $u = w \in \bigcap_{\ell = 1}^k \ext{\psi_\ell} = \alpha$. 
			
			
			\item[$(\mathbf{P})$]  Assume that $\emptyset\in\propneigh_i(u)$.
			Then $\emptyset = \bigcap_{\ell = 1}^k \ext{\psi_\ell}$ for $\B_i\psi_1, ..., \B_k\psi \in{\sf sub}(\phi)$.
			If $u\in\propdomain_j$, then $\propmodel_j, u \models \B_i\psi_1 \land ... \land \B_i\psi_k$,
			that is $\llbracket \psi_\ell \rrbracket^{\propmodel_j}\in\propneigh_{j_i}(u)$ for all $1\leq \ell \leq k$.
			By the $\mathbf{C}$-condition, $\bigcap_{\ell = 1}^k \llbracket \psi_\ell \rrbracket^{\propmodel_j} \in\propneigh_{j_i}(u)$,
			and by construction of $J$, 
			$\bigcap_{\ell = 1}^k \llbracket \psi_\ell \rrbracket^{\propmodel_j} = \emptyset$,
			contradicting the fact that $\propmodel_j$ is a $\mathbf{EP}$ model.
			If instead $u = w$, then $\valuation(\B_i\psi_\ell) = 1$  for all $1\leq \ell \leq k$.
			By item $(\mathbf{P})$ above, there are  $\propmodel_{\Psi}$ and  $w_{\Psi}$ such that 
			$\propmodel_{\Psi}, w_{\Psi} \models \psi_1\land...\land\psi_k$.
			One such model is enumerated among 
			$\propmodel_1, ..., \propmodel_m$, let it be $\propmodel_o$.
			Then $\llbracket \bigwedge_{\ell=1}^k\psi_\ell \rrbracket^{\propmodel_o} =
			\bigcap_{\ell = 1}^k \llbracket \psi_\ell \rrbracket^{\propmodel_o} \not=\emptyset$,
			thus $\bigcap_{\ell = 1}^k \ext{\psi_\ell}  \not=\emptyset$. 
			In either case $\emptyset\notin\propneigh_i(u)$.
			
			\item[$(\mathbf{Q})$]  Suppose that $\alpha\in\propneigh_i(u)$.
			Then $\alpha = \bigcap_{\ell = 1}^k \ext{\psi_\ell}$ for $\B_i\psi_1, ..., \B_k\psi \in{\sf sub}(\phi)$.
			If $u\in\propdomain_j$, then $\propmodel_j, u \models \B_i\psi_1 \land ... \land \B_i\psi_k$,
			that is $\llbracket \psi_\ell \rrbracket^{\propmodel_j}\in\propneigh_{j_i}(u)$ for all $1\leq \ell \leq k$.
			Since $\propmodel_j$ is a $\mathbf{EQ}$ model, $\llbracket \psi_\ell \rrbracket^{\propmodel_j} \not = \propdomain_j$,
			then $\bigcap_{\ell = 1}^k \ext{\psi_\ell}\not=\propdomain$.
			If instead $u = w$, then $\valuation(\B_i\psi_\ell) = 1$  for all $1\leq \ell \leq k$.
			By item $(\mathbf{Q})$ above, there are  $\propmodel_{\Psi}$ and  $w_{\Psi}$ such that 
			$\propmodel_{\Psi}, w_{\Psi} \models \neg\psi_1\lor ... \lor\neg\psi_k$.
			One such model is enumerated among 
			$\propmodel_1, ..., \propmodel_m$, let it be $\propmodel_o$.
			Then $\llbracket \neg\psi_\ell \rrbracket^{\propmodel_o} \not=\emptyset$ for some $\B_i\psi_\ell$,
			that is $\llbracket \psi_\ell \rrbracket^{\propmodel_o} \not=\propdomain$,
			therefore $\bigcap_{\ell = 1}^k \llbracket \psi_\ell \rrbracket^{\propmodel} \not=\propdomain$.
			In either case $\alpha\not=\propdomain$, that is $\propdomain\notin\propneigh_i(u)$.
			
			\item[$(\mathbf{D})$]  Suppose that $\alpha,\beta\in\propneigh_i(u)$.
			Then $\alpha = \bigcap_{\ell = 1}^k \ext{\psi_\ell}$ for $\B_i\psi_1, ..., \B_i\psi_k \in{\sf sub}(\phi)$, and
			$\beta = \bigcap_{{\ell'} = 1}^k \ext{\chi_{\ell'}}$ for $\B_i\chi_1, ..., \B_i\chi_h \in{\sf sub}(\phi)$.
			If $u\in\propdomain_j$, then 
			$\propmodel_j, u \models \B_i\psi_1 \land ... \land \B_i\psi_k \land \B_i\chi_1 \land ... \land \B_i\chi_h$.
			Since $\propmodel_j$ is a $\mathbf{EC}$ model,
			$\propmodel_j, u\models \B_i \bigwedge_{\ell=1}^k\psi_\ell \land \B_i\bigwedge_{{\ell'}=1}^h\chi_{\ell'}$, and
			since $\propmodel_j$ is a $\mathbf{ED}$ model,
			$\propmodel_j\not\models \bigwedge_{\ell=1}^k\psi_\ell \leftrightarrow \neg\bigwedge_{{\ell'}=1}^h\chi_{\ell'}$,
			that is 
			$\bigcap_{\ell = 1}^k \llbracket \psi_\ell \rrbracket^{\propmodel_j} =
			\llbracket \bigwedge_{\ell=1}^k\psi_\ell \rrbracket^{\propmodel_j} \not=
			\llbracket \neg\bigwedge_{{\ell'}=1}^h\chi_{\ell'} \rrbracket^{\propmodel_j} =
			\propdomain_j \setminus \bigcap_{{\ell'}=1}^h \llbracket \chi_{\ell'} \rrbracket^{\propmodel_j}$.
			If instead $u = w$, 
			then  $\valuation(\B_i\psi_\ell)=\valuation(\B_i\chi_{\ell'})=1$ for all $1\leq \ell \leq k$,  $1\leq \ell' \leq h$.
			By item $(\mathbf{D})$ above, there are
			$\propmodel_{\Psi,\Lambda}$ and $w_{\Psi,\Lambda}$ such that
			$\propmodel_{\Psi,\Lambda}, w_{\Psi,\Lambda} \models (\bigwedge^{k}_{\ell=1}\psi_\ell \land \bigwedge^{h}_{\ell'=1}\chi_{\ell'}) \vee (\neg(\bigwedge^{k}_{\ell=1}\psi_\ell) \land \neg(\bigwedge^{h}_{\ell'=1}\chi_{\ell'}))$.
			One such model is enumerated among 
			$\propmodel_1, ..., \propmodel_m$, let it be $\propmodel_o$.
			Then $\bigcap_{\ell = 1}^k \llbracket \psi_\ell \rrbracket^{\propmodel_o} =
			\llbracket \bigwedge_{\ell=1}^k\psi_\ell \rrbracket^{\propmodel_o} \not=
			\llbracket \neg\bigwedge_{{\ell'}=1}^h\chi_{\ell'} \rrbracket^{\propmodel_o} =
			\propdomain_o \setminus \bigcap_{{\ell'}=1}^h \llbracket \chi_{\ell'} \rrbracket^{\propmodel_o}$.
			Thus in either case, 
			$\bigcap_{\ell = 1}^k \ext{\psi_\ell} \not=\propdomain \setminus \bigcap_{{\ell'} = 1}^k \ext{\chi_{\ell'}}$,
			that is, $\alpha\not=\propdomain\setminus\beta$.\qedhere
		\end{itemize} 
	\end{proof}

	\begin{claim}
		$\propmodel$ is $\setsymbols$-consistent.
	\end{claim}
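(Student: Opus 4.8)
The plan is to check $\setsymbols$-consistency of $\propmodel$ one world at a time, using the disjoint‑union structure $\propdomain = \bigcup_{j=1}^{m} \propdomain_{j} \cup \{w\}$ together with the way $\propassign$ was obtained from $\ext{\cdot}$. Recall that $\setsymbols$-consistency of $\propmodel$ means that for every $u \in \propdomain$ the \ALC formula $\bigwedge_{u \in \propassign(p_{\elaxiom})} \elaxiom \wedge \bigwedge_{u \notin \propassign(p_{\elaxiom})} \neg\elaxiom$ (with $p_{\elaxiom}$ ranging over $\setsymbols$) is satisfiable; equivalently, that the propositional type of $u$ over $\setsymbols$ induces a satisfiable \ALC formula. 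So no genuinely new satisfiability argument is needed: $\setsymbols$-consistency will be inherited from the component models $\propmodel_{1}, \dots, \propmodel_{m}$ and from the valuation $\nu$.

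First I would treat a world $u \in \propdomain_{j}$. Since $\propassign(p_{\elaxiom}) = \ext{p_{\elaxiom}} = \bigcup_{\ell=1}^{m} \llbracket p_{\elaxiom} \rrbracket^{\propmodel_{\ell}} \cup \ext{p_{\elaxiom}}_{0}$, and since the $\propdomain_{\ell}$ are pairwise disjoint while $w \notin \propdomain_{j}$, we get $u \in \propassign(p_{\elaxiom})$ iff $u \in \llbracket p_{\elaxiom} \rrbracket^{\propmodel_{j}}$, for every $p_{\elaxiom} \in \setsymbols$. Hence the propositional type of $u$ in $\propmodel$ restricted to $\setsymbols$ coincides with its type in $\propmodel_{j}$. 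Each $\propmodel_{j}$ was taken, by the hypothesis of the lemma and by items $(\mathbf{N})$, $(\mathbf{P})$, $(\mathbf{Q})$, $(\mathbf{D})$, to be a $\setsymbols$-consistent model; therefore the \ALC formula associated with $u$ is satisfiable.

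It remains to treat $u = w$. Since $w$ is a fresh world, $w \notin \llbracket p_{\elaxiom} \rrbracket^{\propmodel_{\ell}}$ for every $\ell$, so $w \in \propassign(p_{\elaxiom})$ iff $w \in \ext{p_{\elaxiom}}_{0}$ iff $\nu(p_{\elaxiom}) = 1$. Thus the propositional type of $w$ over $\setsymbols$ is exactly the restriction of $\nu$ to the symbols of $\setsymbols$, and the corresponding \ALC formula is satisfiable because $\nu$ is $\setsymbols$-consistent by assumption. This exhausts $\propdomain$, so $\propmodel$ is $\setsymbols$-consistent.

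The only point requiring care — hence the (mild) main obstacle — is the bookkeeping around $\ext{\cdot}$: one must invoke the disjointness $\propdomain_{j} \cap \propdomain_{\ell} = \emptyset$ for $j \neq \ell$ and the freshness of $w$ to conclude that membership of $u$ in $\ext{p_{\elaxiom}}$ reduces to membership in the single relevant summand, and one must have recorded that all the component models, as well as $\nu$, are $\setsymbols$-consistent. With that in hand the claim is immediate, and the proof of the lemma then concludes by combining this claim with the earlier ones ($\llbracket \phi \rrbracket^{\propmodel} = \ext{\phi}$ and $\propmodel$ being an $\Lvar^{n}$ model) to obtain $\propmodel, w \models \phi$.
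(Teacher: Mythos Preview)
Your proof is correct and follows exactly the same approach as the paper's: $\setsymbols$-consistency of $\propmodel$ is inherited world by world from the $\setsymbols$-consistency of the component models $\propmodel_1,\ldots,\propmodel_m$ and of the valuation $\nu$. In fact, you spell out the disjointness bookkeeping that the paper leaves implicit, so your argument is a strictly more detailed version of the same proof.
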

	\begin{proof}[Proof of Claim]
		$\nu$, used to construct the assignment 
		related to $w$, is $\setsymbols$-consistent and 
		the models $\propmodel_1,\ldots,\propmodel_m$, used to define 
		the remaining worlds in $\Wmc$, are all $\setsymbols$-consistent. 
	\end{proof}
	
	Finally, since $\valuation(\phi)=1$, we have that $w\in \ext{\phi}$, 
	and consequently $\propmodel, w \models \phi$.
	Given that $\propmodel$ is a $\setsymbols$-consistent $\Lvar^{n}$ model, this concludes the proof.
\end{proof}

\begin{algorithm}[t]
	\KwIn{$L$,  $\setsymbols$, and an $\MLn$ formula $\phi$ built from $\setsymbols$.}
	\KwOut{$\mathsf{satisfiable}$, if $\psi$ is  satisfiable in a $\setsymbols$-consistent $L^n$ model; $\mathsf{unsatisfiable}$, otherwise.}
	\BlankLine
	
	\For{each $\setsymbols$-consistent valuation $\valuation$ for $\phi$}{
		\uIf{$\mathsf{Check}(L,\setsymbols,\valuation,\phi)=1$}{
			\uIf{$L\cap\{ \mathbf{N},\mathbf{T},\mathbf{P},\mathbf{Q}\} \neq \emptyset$}{
				\uIf{$\mathsf{CheckNTPQ}(L,\setsymbols,\valuation,\phi)=1$}{
					\uIf{$\mathbf{D}\notin L$}{
						\Return $\mathsf{satisfiable}$\;
					}\uElseIf{$\mathsf{CheckD}(L,\setsymbols,\valuation,\phi){=}1$}
						{\Return $\mathsf{satisfiable}$\;}
				}
			}\uElseIf{$\mathbf{D}\in L$, $\mathsf{CheckD}(L,\setsymbols,\valuation,\phi){=}1$
		}{\Return $\mathsf{satisfiable}$\;} 
		}
		
	}
\BlankLine
\Return $\mathsf{unsatisfiable}$\;
\caption{$\mathsf{Sat}$}
\label{alg:propSAT}
\end{algorithm}

\begin{algorithm}[t]
	\KwIn{$L$, $\setsymbols$, a $\setsymbols$-consistent valuation $\valuation$, and an $\MLn$ formula $\phi$ built from $\setsymbols$.}
	\KwOut{$\mathsf{1}$, if $\valuation$ satisfies the conditions of Lemma~\ref{lem:proplemmaL}; $0$, otherwise.}
	\BlankLine
	\uIf{$\mathbf{C} \in \Lvar$}{
		$\boldsymbol{\kappa}:= | {\sf sub}({\phi}) |$\;
	}
	\uElse{$\boldsymbol{\kappa}:=1$}
	\BlankLine
		\For{all $1\leq k\leq \boldsymbol{\kappa}$}{
			\For{ $\B_i\psi_1, \dots, \B_i\psi_k, \B_i\chi\in{\sf sub}(\phi)$,
				with $\valuation(\B_i\psi_j)=1$ for all $1\leq j \leq k$,
				and $\valuation(\B_i\chi)=0$}{ 
				\uIf{$\mathbf{M}\in L$}{ 
					\uIf{$\mathsf{Sat}(L,\setsymbols,\bigwedge^{k}_{j=1}\psi_j\wedge\neg\chi)= \mathsf{unsatisfiable}$}{\Return $0$\;} 
				}
				\uElseIf{ $\mathsf{Sat}(L,\setsymbols,(\bigwedge^{k}_{j=1}\psi_j\wedge\neg\chi)\vee (\bigvee^{k}_{j=1} (\neg\psi_j\wedge\chi)))=\mathsf{unsatisfiable}$\;}
				{\Return $0$\;}
			}		
		}
		\Return $1$\;
 		\BlankLine
		\caption{$\mathsf{Check}$}
	\label{alg:prop1}
\end{algorithm}

\begin{algorithm}[t]
	\KwIn{$L$, $\setsymbols$, a $\setsymbols$-consistent valuation $\valuation$, and an $\MLn$ formula $\phi$ built from $\setsymbols$.}
	\KwOut{$\mathsf{1}$, if $\valuation$ satisfies the conditions of  Lemma~\ref{lem:proplemmaL}; $0$, otherwise.}
	\BlankLine
	\uIf{$\mathbf{C} \in \Lvar$}{
		$\boldsymbol{\kappa}:= | {\sf sub}({\phi}) |$\;
	}
	\uElse{$\boldsymbol{\kappa}:=1$}
	\BlankLine
		 
		\For{all $1\leq k\leq \boldsymbol{\kappa}$}{
				
				\uIf{$\mathbf{N}\in L$}{	
					\For{ $\B_i\psi\in{\sf sub}(\phi)$  
						with $\valuation(\B_i\psi)=0$}{	
						\uIf{$\mathsf{Sat}(L,\setsymbols,\neg \psi)= 	\mathsf{unsatisfiable}$}{\Return $0$\;} 
					}
					\Return $1$\;	
				}
				
			\uIf{$\mathbf{T}\in L$}{
				\For{$\B_i\psi\in{\sf sub}(\phi)$ with  
					$\valuation(\B_i\psi)=1$}{
					\uIf{$\valuation(\psi)=0$}{
						\Return $0$\;}
				}
			}
			\uIf{$\mathbf{P}\in L$}{
				\For{$\B_i\psi_1, \dots, \B_i\psi_k\in{\sf sub}(\phi)$ with
					$\valuation(\B_i\psi_j)=1$ for all $1\leq j \leq k$}{
					\uIf{$\mathsf{Sat}(L,\setsymbols,\bigwedge^{k}_{j=1}\psi_j)= \mathsf{unsatisfiable}$}{\Return $0$\;} 
				}
			}
			\uIf{$\mathbf{Q}\in L$}{
				\For{$\B_i\psi_1, \dots, \B_i\psi_k\in{\sf sub}(\phi)$ with 
					$\valuation(\B_i\psi_j)=1$ for all $1\leq j \leq k$}{
					\uIf{$\mathsf{Sat}(L,\setsymbols,\bigvee^{k}_{j=1}\neg\psi_j)= \mathsf{unsatisfiable}$}{\Return $0$\;} 
				}
			}
		}
			\Return $1$\;

		\BlankLine
		\caption{$\mathsf{CheckNTPQ}$}
	\label{alg:prop}
\end{algorithm}

\begin{algorithm}[t]
	\KwIn{$L$, $\setsymbols$, a $\setsymbols$-consistent valuation $\valuation$, and an $\MLn$ formula $\phi$ built from $\setsymbols$.}
	\KwOut{$\mathsf{1}$, if $\valuation$ satisfies the conditions of Lemma~\ref{lem:proplemmaL}; $0$, otherwise.}
	\BlankLine
	\uIf{$\mathbf{C} \in \Lvar$}{
$\boldsymbol{\kappa}:= | {\sf sub}({\phi}) |$\;
	}
 \uElse{$\boldsymbol{\kappa}:=1$}
 \BlankLine

		\For{all $1\leq k,h\leq \boldsymbol{\kappa}$}{

		\For{$\B_i\psi_1, \dots, \B_i\psi_k, \B_i\chi_1, \dots, \B_i\chi_h\in{\sf sub}(\phi)$ with
			$\valuation(\B_i\psi_j)=1$, for all $1\leq j \leq k$, and
			$\valuation(\B_i\chi_\ell)=1$, for all $1\leq \ell \leq h$}{
			\uIf{$\mathbf{M}\in L$}{
				\uIf{$\mathsf{Sat}(L,\setsymbols,(\bigwedge^{k}_{j=1}\psi_j \land \bigwedge^{h}_{\ell=1}\chi_\ell))= \mathsf{unsatisfiable}$}{\Return $0$\;} 
			}
			\uElseIf  {$\mathsf{Sat}(L,\setsymbols,(\bigwedge^{k}_{j=1}\psi_j \land \bigwedge^{h}_{\ell=1}\chi_\ell)\vee (\neg(\bigwedge^{k}_{j=1}\psi_j) \land \neg(\bigwedge^{h}_{\ell=1}\chi_\ell)))= \mathsf{unsatisfiable}$}{\Return $0$\;}
			
		}
	}
	\Return $1$\;
	\BlankLine

	\caption{$\mathsf{CheckD}$}
	\label{alg:propD}
\end{algorithm}

\Satfragvardomexp*
\begin{proof}
Soundness and completeness of Algorithm~\ref{alg:propSAT} is given by Lemmas~\ref{lem:propL} and~\ref{lem:proplemmaL}. 
 We argue   that Algorithm~\ref{alg:prop} terminates in exponential time. 
 Since the \ALC satisfiability check is in exponential time, one can compute
 in exponential time (in the size of $\setsymbols$) all valuations $\valuation$ which are $\setsymbols$-consistent. The number of iterations in Line 1 of  Algorithm~\ref{alg:propSAT}
 is bounded by $2^{|\setsymbols|}$. It remains to argue that each iteration takes exponential time. Suppose $\prop{\varphi}$ is the original formula we want to check satisfiability.
 Since each iteration calls   the functions $\mathsf{Check}$, $\mathsf{CheckNTPQ}$, 
 $\mathsf{CheckD}$ and these functions can make recursive calls (to $\mathsf{Sat}$), we need to argue that (1) the number of recursive calls in exponentially bounded and (2) 
 the number of steps inside each function is also exponentially bounded.
 Regarding the latter, we argue that   the number of iterations of the ``for'' loops inside
 $\mathsf{Check}$, $\mathsf{CheckNTPQ}$, and $\mathsf{CheckD}$ is exponentially bounded by the number of subformulas of the formula given as input to each function (and each such formula has size linear in the size of $\prop{\varphi}$). If  the total number of recursive calls is exponentially bounded then Point (2) holds.
 So it remains to argue about Point (1). 
  Consider a computation tree where each node corresponds to a recursive call to  $\mathsf{Sat}$ and the parent relation in the tree is defined by the recursive calls.
 Since each recursive call reduces the number of nested epistemic 
 operators of the original formula $\prop{\varphi}$, any nested sequence  of   recursive calls
 is polynomial in the size of $\prop{\varphi}$. This means that the depth of such tree is polynomial in the size of $\prop{\varphi}$ and, since the number of children of each node is  exponentially bounded (see Point(2)), the total number of nodes of the tree is exponential in the size of $\prop{\varphi}$. We have thus shown that the number of recursive calls in exponentially bounded.
 As satisfiability in $\ALC$ is $\ExpTime$-hard, our upper bound is tight.
\end{proof}

\section{Proofs for Section~\ref{sec:reasoncondom}}


\Theoremcomplealc*
\begin{proof}
This theorem is a consequence of the following claim, and 
the complexity of formula satisfiability in \KnALC{3n} constant domain relational models~\cite{KraWol,GasHer,GabEtAl03}.
\begin{claim} 
The $\EnALC{n}$ formula satisfiability problem on constant domain neighbourhood models can be reduced in polynomial time to the \KnALC{3n} formula satisfiability problem on constant domain relational models.
\end{claim}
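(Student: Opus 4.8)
The plan is to prove the claim by verifying the two properties of the translation $\cdot\tr$ defined just above: that it is computable in polynomial time, and that $\p$ is satisfiable in $\EnALC{n}$ on constant domain neighbourhood models iff $\p\tr$ is satisfiable in $\KnALC{3n}$ on constant domain relational models. Polynomiality is immediate, since $\cdot\tr$ replaces every occurrence of a modal operator by a fixed-size gadget over the $3n$ fresh indices $i_1, i_2, i_3$ ($i \in J$), so $|\p\tr|$ is linear in $|\p|$; the work lies in the correctness equivalence, which I would split into the two directions below. In both directions I would keep the $\ALC$ interpretations (hence the constant domain $\Delta$, with rigid individual names) essentially untouched, so that only the relational/neighbourhood structure changes, and argue by induction on the structure of concepts, and then of formulas (using the concept claim for the atoms), that $C^{\Imc_w} = (C\tr)^{I_w}$ and $\Mmc, w \models \psi$ iff $\Mmf, w \models \psi\tr$ at the relevant worlds.

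For the direction from neighbourhood to relational models, I would take $\Mmc = (\Wmc, \{\Nmc_i\}_{i \in J}, \Delta, \Imc)$ with $\Mmc, w_0 \models \p$ and build a relational model $\Mmf$ on the worlds $W = \Wmc \cup \{(w,i,\alpha) \mid w \in \Wmc,\ i \in J,\ \alpha \in \Nmc_i(w)\}$, with relations $w\,R_{i_1}\,(w,i,\alpha)$ for all $\alpha \in \Nmc_i(w)$, $(w,i,\alpha)\,R_{i_2}\,u$ iff $u \in \alpha$, $(w,i,\alpha)\,R_{i_3}\,u$ iff $u \in \Wmc \setminus \alpha$, and all other instances of the relations empty; on the auxiliary worlds I would fix any legal $\ALC$ interpretation over $\Delta$ (e.g.\ a copy of $\Imc_{w_0}$, so that individual names stay rigid), and on the worlds of $\Wmc$ keep $\Imc$. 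The gadget works because, at an auxiliary world $v = (w,i,\alpha)$, one has $d \in (\B_{i_2}\gamma\tr)^{I_v}$ iff $\alpha \subseteq \llbracket \gamma \rrbracket^{\Mmc}_d$ and $d \in (\B_{i_3}\lnot\gamma\tr)^{I_v}$ iff $\llbracket \gamma \rrbracket^{\Mmc}_d \subseteq \alpha$, so their conjunction holds iff $\alpha = \llbracket \gamma \rrbracket^{\Mmc}_d$; hence $d \in (\D_{i_1}(\B_{i_2}\gamma\tr \circ \B_{i_3}\lnot\gamma\tr))^{I_w}$ iff $\llbracket \gamma \rrbracket^{\Mmc}_d \in \Nmc_i(w)$ iff $d \in (\B_i\gamma)^{\Imc_w}$, which drives the inductive step (the formula case being identical with $\models$ in place of membership in an interpretation). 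This yields $\Mmf, w_0 \models \p\tr$ with $\Mmf$ a constant domain relational model over $3n$ relations.

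For the converse, given a constant domain relational model $\Mmf = (W, \{R_{i_1}, R_{i_2}, R_{i_3}\}_{i \in J}, \Delta, I)$ with $\Mmf, w_0 \models \p\tr$, I would let $\Mmc$ have the same worlds, domain, and interpretations (now read as a neighbourhood model), and set $\Nmc_i(w) = \{\alpha \subseteq W \mid \exists v\,(w\,R_{i_1}\,v,\ R_{i_2}(v) \subseteq \alpha,\ R_{i_3}(v) \cap \alpha = \emptyset)\}$, writing $R_j(v) = \{u \mid v\,R_j\,u\}$; no frame condition is imposed, which is legitimate since the target logic is $\EnALC{n}$. Unfolding the relational clauses for $\D_{i_1}, \B_{i_2}, \B_{i_3}$, one sees that $d \in (\D_{i_1}(\B_{i_2}\gamma\tr \circ \B_{i_3}\lnot\gamma\tr))^{I_w}$ iff there is $v$ with $w\,R_{i_1}\,v$, $R_{i_2}(v) \subseteq \llbracket \gamma \rrbracket^{\Mmc}_d$ and $R_{i_3}(v) \cap \llbracket \gamma \rrbracket^{\Mmc}_d = \emptyset$, which by the definition of $\Nmc_i$ is exactly $\llbracket \gamma \rrbracket^{\Mmc}_d \in \Nmc_i(w)$, i.e.\ $d \in (\B_i\gamma)^{\Imc_w}$; so the induction again goes through and $\Mmc, w_0 \models \p$, giving an $\EnALC{n}$ model on a constant domain.

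The hard part is getting this backward direction right: the naive choice of letting $\Nmc_i(w)$ consist of the sets $R_{i_2}(v)$ for $i_1$-successors $v$ of $w$ whose $R_{i_3}$-successors form the exact complement fails, since an $i_1$-successor $v$ with $R_{i_2}(v) \subsetneq \llbracket \gamma \rrbracket^{\Mmc}_d$ and $R_{i_3}(v) \cap \llbracket \gamma \rrbracket^{\Mmc}_d = \emptyset$ still forces $\B_i\gamma$ in the relational reading; the ``interval'' definition of $\Nmc_i(w)$ above repairs this, and one has to check it is coherent with the neighbourhood clause for $\B_i$ throughout the induction. The remaining obligations are bookkeeping: confirming that the two intertwined inductions (concepts first, formulas next, the atoms linking them) are well-founded, and that the constant domain assumption together with rigidity of individual names is preserved — which it is, since neither construction alters any world's domain, and in the first construction the auxiliary worlds can simply be given a copy of an existing interpretation.
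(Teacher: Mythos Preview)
Your proposal is correct and follows essentially the same approach as the paper: the forward construction introduces auxiliary ``neighbourhood'' worlds reached via $R_{i_1}$ whose $R_{i_2}$- and $R_{i_3}$-successors encode membership and non-membership in the corresponding set, and the backward construction defines $\Nmc_i(w)$ by the same ``interval'' condition you give. The only cosmetic difference is that the paper tags auxiliary worlds simply as $(\alpha,1)$ (shared across worlds and indices) rather than your more fine-grained $(w,i,\alpha)$, and sets concept and role names to $\emptyset$ there rather than copying $\Imc_{w_0}$; neither choice affects the argument.
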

%
\begin{proof}[Proof of Claim]
Consider an \MLALC{n} formula $\p$
s.t.
$\Mmc, w \mdl \p$, for some constant domain neighbourhood model $\Mmc = (\Fmc, \Delta, \Int)$ with $\Fmc = (\Wmc, \{ \Nmc_{i} \}_{i \in J})$
and some $w\in\Wmc$.
We define a relational frame
$\Fmf = (W, \{ R_{i_{1}}, R_{i_{2}}, R_{i_{3}} \}_{i \in J})$
and an $\MLALC{3n}$ relational model 
$\Mmf = (\Fmf, \Delta, I)$
such that:
	\begin{itemize}
		\item $W = \{ (w, 0) \mid w \in \Wmc \} \cup \{ (\alpha, 1) \mid \alpha \in \bigcup_{v \in \Wmc} \Nmc_{i}(v) \}$
		\item $\relations_{i_1} = \{ ((w, 0), (\alpha, 1)) \mid \alpha \in \Nmc_{i}(w)\}$;
		\item $\relations_{i_2} = \{ ((\alpha, 1), (w, 0)) \mid w \in \alpha \}$
		\item $\relations_{i_3} = \{  ((\alpha, 1), (w, 0)) \mid w \not \in \alpha \}$ 
		\item for every $(w, 0) \in W$, $I_{(w, 0)} = \Imc_{w}$; for every $(\alpha, 1) \in W$, $X^{I_{(\alpha, 1)}} = \eset$, for all $X \in \NC \cup \NR$, and $a^{I_{(\alpha, 1)}} = a^{\Int}$, for all $a \in \NI$.
	\end{itemize}
The pairs $(w, 0), (\alpha, 1)$ are used to ensure that $W$ is the disjoint union of the sets of worlds $w$ and subsets $\alpha$ of $\Wmc$.


Firstly we show, by induction on the structure of concepts $C$, that for all $d \in \Delta$ and all $w \in \Wmc$:
\[
d \in C^{\Int_w} \text{ iff } d \in (C\tr)^{I_{(w, 0)}}.
\]
For the base case $C = A \in \NC$, 
the claim follows immediately from the definitions of $I$ and $\cdot\tr$. 
Assume the claim holds for $D$ and $E$. 
The inductive cases $C = \lnot D$ and $C = (D \sqcap E)$ are straightforward. 
We are left with the cases below.

$C = \exists r.D$.
We have that
$d \in (\exists r.D)^{\Imc_{w}}$
iff there is $d' \in D^{\Imc_{w}}$ such that $(d,d') \in r^{\Imc_{w}}$.
By i.h. and definition of $I$, this is equivalent to
$d' \in (D\tr)^{I_{(w, 0)}}$ and $(d,d') \in r^{I_{(w, 0)}}$,
which means that
$d \in (\exists r. (D\tr))^{I_{(w, 0)}}$. 
By definition of $\cdot\tr$, $d \in ((\exists r.D)\tr)^{I_{(w, 0)}}$.

$C = \B_{i} D$. 
By definition, we have that
$d \in (\B_{i} D)^{\Imc_{w}}$
iff 
$\llbracket D \rrbracket^{\Mmc}_{d} \in \Nmc_{i}(w)$.
Equivalently, iff there is
$\alpha \in \Nmc_{i}(w)$
s.t. for all $v \in \Wmc$,
$v \in \alpha \Leftrightarrow d \in D^{\Imc_{v}}$.
By i.h. and definitions of $\relations_{i_2}$ and $\relations_{i_3}$, this means that there is $\alpha \in \Nmc_{i}(w)$ s.t.
$(i)$ for every $v \in \Wmc : (\alpha, 1) \relations_{i_2} (v, 0) \Rightarrow d \in (D\tr)^{I_{(v, 0)}}$
and
$(ii)$ for every $v \in \Wmc: (\alpha, 1) \relations_{i_3} (v, 0) \Rightarrow d \not \in (D\tr)^{I_{(v, 0)}}$.
This holds iff there is
$\alpha \in \Nmc_{i}(w)$ s.t. $d \in (\B_{i_2} D\tr \sqcap \B_{i_3} \lnot D\tr)^{I_{(\alpha, 1)}}$. 
By definition of $\relations_{i_1}$, this means that there exists
$(\alpha, 1) \in W$ s.t. $(w, 0) \relations_{i_1} (\alpha, 1)$ and $d \in (\B_{i_2} D\tr \sqcap \B_{i_3} \lnot D\tr)^{I_{(\alpha, 1)}}$.
That is, $d \in \D_{i_1} (\B_{i_2} D\tr \sqcap \B_{i_3} \lnot D\tr)^{I_{(w, 0)}}$
iff, by definition of $\cdot\tr$,
$d \in ((\B_{i} D)\tr)^{I_{(w, 0)}}$.
\newline

We now show that for every $\MLALC{n}$ formula $\psi$ and every $w \in \Wmc$:
\[
	\Mmc, w \mdl \psi \text{ iff } \Mmf, (w, 0) \mdl \psi\tr
\]
For the case $\psi = C \sqs D$, it follows from the previous claim, while for $\psi = C(a)$ and $\psi = r(a,b)$, it is immediate from the definitions of $I$ and $\cdot\tr$, as well as from the claim above.
Assuming that the lemma holds for $\chi$ and $\zeta$, the inductive cases $\psi = \lnot \chi$ and $\psi = \chi \land \zeta$ are straightforward.
We prove the statement for modalised formulas.

$\psi = \B_{i} \chi$.
$\Mmc, w \mdl \B_{i} \chi$
iff, by definition,
$\llbracket \chi \rrbracket^{\Mmc}  \in \Nmc_{i}(w)$.
That is, iff there is $\alpha \in \Nmc_{i}(w)$ s.t. for all $v \in \Wmc : v \in \alpha \Leftrightarrow \Mmc, v \mdl \chi$.
By i.h. and definitions of $\relations_{i_2}, \relations_{i_3}$, this means that there is
$\alpha \in \Nmc_{i}(w)$ s.t.
$(i)$ for all $v \in \Wmc : (\alpha, 1) \relations_{i_2} (v, 0) \Rightarrow \Mmf, (v, 0) \mdl \chi\tr$
and
$(ii)$ for all $v \in \Wmc: (\alpha, 1) \relations_{i_3} (v, 0) \Rightarrow \Mmf, (v, 0) \not\mdl \chi\tr$,
iff there is $\alpha \in \Nmc_{i}(w)$ s.t.
$\Mmf, (\alpha, 1) \mdl \B_{i_2} \chi\tr \land \B_{i_3} \lnot \chi\tr$.
By definition of $\relations_{i_1}$, the previous step is equivalent to:
there is $(\alpha, 1) \in W$ s.t. $(w, 0) \relations_{1} (\alpha, 1)$ and $\Mmf, (\alpha, 1) \mdl \B_{i_2} \chi\tr \land \B_{i_3} \lnot \chi\tr$,
iff
$\Mmf, (w, 0) \mdl \D_{i_1} (\B_{i_2} \chi\tr \land \B_{i_3} \lnot \chi\tr)$.
By definition of $\cdot\tr$,
$\Mmf, (w, 0) \mdl (\B_{i} \chi)\tr$.

Thus, in particular, we obtain $\Mmf, (w, 0) \mdl \p\tr$.
\newline

Conversely, consider a $\MLALC{3n}$ formula $\p\tr$ s.t.
$\Mmf, w \mdl \p\tr$,
for some $\MLALC{3n}$ R-model
$\Mmf = (\Fmf, \Delta, I)$
based on
$\Fmf = (W, \{ \relations_{i_{j}} \}_{j \in [1, 3]})$,
and some
$w \in W$.
We define a $\MLnALC{n}$
neighbourhood model
$\Mmc = (\Fmc, \Delta, \Int)$
based on
$\Fmc = (\Wmc, \{ \Nmc_{i} \}_{i \in [1, n]})$
s.t.
$\Wmc = W$,
and for all $w \in W$:
\begin{itemize}
			\item $\alpha \in \Nmc_{i}(w)$ iff there is $v \in W$ s.t. $w \relations_{i_1} v$ and: $(i)$ for all $u \in W$, $v \relations_{i_2} u \Rightarrow u \in \alpha$, and $(ii)$ for all $u \in W$, $v \relations_{i_3} u \Rightarrow u \not \in \alpha$;
			\item $\Imc_{w} = I_{w}$.
\end{itemize}

Again, we show firstly, by induction on the structure of concepts $C$, that for all $d \in \Delta$ and all $w \in W$:
\[
d \in C^{\Imc_{w}} \text{ iff } d \in (C\tr)^{I_{w}}.
\]
For the base case $C = A \in \NC$, the claim follows from the definitions of $\Int$ and $\cdot\tr$. 
Assume the claim holds for $D$ and $E$. 
The inductive cases $C = \lnot D$ and $C = (D \sqcap E)$ are straightforward. 
We are left with the cases below.

$C = \exists r.D$.
We have that
$d \in (\exists r.D)^{\Imc_{w}}$
iff there is $d' \in D^{\Imc_{w}}$ such that $(d,d') \in r^{\Imc_{w}}$.
By i.h. and definition of $\Int$, this is equivalent to
$d' \in (D\tr)^{I_{w}}$ and $(d,d')\in r^{I_{w}}$, which means that $d \in (\exists r. (D\tr))^{I_{w}}$. 
By definition of $\cdot\tr$, $d \in ((\exists r.D)\tr)^{I_{w}}$.

$C = \B_{i} D$. 
We have  
$d \in (\B_{i} D)^{\Imc_{w}}$ iff 
$\llbracket D \rrbracket^{\Mmc}_{d} \in \Nmc_{i}(w)$.
By definition of $\Nmc_{i}$, this means that there is
$v \in W$ s.t. $w \relations_{i_1} v$
and:
$(i)$ for all $u \in W$: $v \relations_{i_2} u \Rightarrow d \in D^{\Imc_{u}}$
and
$(ii)$ for all $u \in W$: $v \relations_{i_3} u \Rightarrow d \not \in D^{\Imc_{u}}$.
By i.h. the previous step is equivalent to:
there is $v \in W$ s.t. $w \relations_{i_1} v$ and:
$(i)$ for all $u \in  W$: $v \relations_{i_2} u \Rightarrow d \in (D\tr)^{I(u)}$ and
$(ii)$ for all $u \in W$: $v \relations_{i_3} u \Rightarrow d \not \in (D\tr)^{I(u)}$.
Equivalently,
$d \in (\D_{i_1}(\B_{i_2} D\tr \sqcap \B_{i_3} \lnot D\tr))^{I_{w}}$
iff, by definition of $\cdot\tr$,
$d \in ((\B_{i} D)\tr)^{I_{w}}$.
\newline

We now prove, by induction on $\MLALC{n}$ formulas $\psi$, that for every $w \in W$:
\[
	\Mmc, w \mdl \psi \text{ iff } \Mmf, w \mdl \psi\tr
\]

For the case $\psi = C \sqs D$, it follows from the previous claim, while for $\psi = C(a)$ and $\psi = r(a,b)$, it is immediate from the definitions of $\Int$ and $\cdot\tr$, as well as from the claim above. Assuming that the lemma holds for $\chi$ and $\zeta$, the inductive cases $\psi = \lnot \chi$ and $\psi = \chi \land \zeta$ are straightforward. We prove the statement for modalised formulas.

$\psi = \B_{i} \chi$.
$\Mmc, w \mdl \Box_{i} \chi$ iff
$\llbracket \chi \rrbracket^{\Mmc}  \in \Nmc_{i}(w)$.
That is, there is $v \in W$ s.t. $w \relations_{i_1} v$ and $(i)$ for all $u \in  W$: $v \relations_{i_2} u \Rightarrow \Mmc, u \mdl \chi$ and $(ii)$ for all $u \in W$: $v \relations_{i_3} u \Rightarrow \Mmc, u \not \mdl \chi$.
By i.h., this is equivalent to: there is $v \in W$ s.t. $w \relations_{i_1} v$ and
$(i)$ for all $u \in  W$: $v \relations_{i_2} u \Rightarrow \Mmf, u \mdl \chi\tr$ and
$(ii)$ for all $u \in  W$: $v \relations_{i_3} u \Rightarrow \Mmf, u \not \mdl \chi\tr$.
The previous step means that: 
$\Mmf, w \mdl \D_{i_1} (\B_{i_2} \chi\tr \land \B_{i_3} \lnot \chi\tr)$
iff, by definition of $\cdot\tr$,
$\Mmf, w \mdl (\B_{i} \chi)\tr$.


Therefore,  in particular, $\Mmc, w \mdl \p$.
\end{proof}
\end{proof}

\Theoremcomplmalc*
\begin{proof}
This theorem is a consequence of the following claim, and the complexity 
of formula satisfiability in $\KnALC{2n}$ constant domain relational models~\cite{KraWol,GasHer,GabEtAl03}. 
\begin{claim} 
The $\MnALC{n}$ formula satisfiability problem on constant domain neighbourhood models can be reduced in polynomial time to the \KnALC{2n} formula satisfiability problem on constant domain relational models.
\end{claim}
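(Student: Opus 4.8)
The plan is to establish the claim by exhibiting polynomial-time model transformations in both directions between $\MnALC{n}$ constant domain neighbourhood models and $\KnALC{2n}$ constant domain relational models, using the translation $\cdot\ttr$ defined in the excerpt. The structure mirrors the proof of the analogous claim for $\EnALC{n}$ (Theorem~\ref{theor:complealc}), but with the key simplification that the $\mathbf{M}$-condition (supplementation) lets us collapse the two relations $R_{i_2}, R_{i_3}$ into a single relation $R_{i_2}$ and drop the $\B_{i_3}\lnot\gamma\ttr$ conjunct: since $\Nmc_i$ is supplemented, membership of a truth set $\llbracket D\rrbracket^{\Mmc}_d$ in $\Nmc_i(w)$ is equivalent to the existence of some $\alpha\in\Nmc_i(w)$ with $\alpha\subseteq\llbracket D\rrbracket^{\Mmc}_d$, i.e., a lower bound that is a subset of the truth set, which is exactly what a single $\B_{i_2}$ quantifier captures.

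First I would describe the forward direction. Given a constant domain neighbourhood model $\Mmc=(\Fmc,\Delta,\Imc)$ with $\Fmc=(\Wmc,\{\Nmc_i\}_{i\in J})$ and $\Mmc,w\mdl\p$, I would build the relational frame $\Fmf=(W,\{R_{i_1},R_{i_2}\}_{i\in J})$ with $W=\{(w,0)\mid w\in\Wmc\}\cup\{(\alpha,1)\mid \alpha\in\bigcup_{v\in\Wmc}\Nmc_i(v),\ i\in J\}$, setting $R_{i_1}=\{((w,0),(\alpha,1))\mid\alpha\in\Nmc_i(w)\}$ and $R_{i_2}=\{((\alpha,1),(v,0))\mid v\in\alpha\}$, with $I_{(w,0)}=\Imc_w$ on the $0$-labelled worlds, all concept and role names empty on $1$-labelled worlds, and individual names rigidly set to $a^{\Imc}$ everywhere. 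Then I would prove by induction on concepts $C$ that $d\in C^{\Imc_w}$ iff $d\in(C\ttr)^{I_{(w,0)}}$, the only interesting case being $C=\B_i D$: $d\in(\B_i D)^{\Imc_w}$ iff $\llbracket D\rrbracket^{\Mmc}_d\in\Nmc_i(w)$ iff (by supplementation) there is $\alpha\in\Nmc_i(w)$ with $\alpha\subseteq\llbracket D\rrbracket^{\Mmc}_d$ iff there is $(\alpha,1)$ with $(w,0)R_{i_1}(\alpha,1)$ and $d\in(\B_{i_2}D\ttr)^{I_{(\alpha,1)}}$ iff $d\in(\D_{i_1}\B_{i_2}D\ttr)^{I_{(w,0)}}$. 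An entirely parallel induction on formulas $\psi$ gives $\Mmc,w\mdl\psi$ iff $\Mmf,(w,0)\mdl\psi\ttr$, hence $\Mmf,(w,0)\mdl\p\ttr$.

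Next I would handle the converse: given $\Mmf=(\Fmf,\Delta,I)$ with $\Fmf=(W,\{R_{i_1},R_{i_2}\}_{i\in J})$ and $\Mmf,w\mdl\p\ttr$, I would define a neighbourhood model $\Mmc=(\Fmc,\Delta,\Imc)$ on $\Wmc=W$ with $\Imc_w=I_w$ and $\alpha\in\Nmc_i(w)$ iff there is $v\in W$ with $wR_{i_1}v$ and $\{u\mid vR_{i_2}u\}\subseteq\alpha$. One must check this $\Nmc_i$ satisfies the $\mathbf{M}$-condition — immediate since the defining clause is upward-closed in $\alpha$ — and then run the two inductions (on concepts, then formulas) to get $\Mmc,w\mdl\p$. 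Both transformations are clearly polynomial in $|\p|$ and $|\Mmf|$ respectively, and the reduction itself is the syntactic map $\p\mapsto\p\ttr$, computable in polynomial time. Combined with $\NExpTime$-completeness of $\KnALC{2n}$ on constant domain relational models~\cite[Theorem 15.15]{GabEtAl03}, the theorem follows.

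I expect the main obstacle to be the bookkeeping in the $C=\B_i D$ case of the backward induction, specifically the direction showing $d\in(\D_{i_1}\B_{i_2}D\ttr)^{I_w}\Rightarrow d\in(\B_i D)^{\Imc_w}$: from a witness $v$ with $wR_{i_1}v$ and $d\in(\B_{i_2}D\ttr)^{I_u}$ for all $R_{i_2}$-successors $u$ of $v$, one obtains by the induction hypothesis $d\in D^{\Imc_u}$ for those $u$, i.e., $\{u\mid vR_{i_2}u\}\subseteq\llbracket D\rrbracket^{\Mmc}_d$, hence $\llbracket D\rrbracket^{\Mmc}_d\in\Nmc_i(w)$ by the definition of $\Nmc_i$; the subtlety is merely ensuring the quantifier alternation in $\D_{i_1}\B_{i_2}$ lines up exactly with ``there exists a lower bound contained in the truth set'', which the supplementation closure guarantees. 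Everything else is routine and can be delegated to the reader with a pointer to~\cite[Claim 15.2, Claim 15.14]{GabEtAl03} and to the proof of Theorem~\ref{theor:complealc}.
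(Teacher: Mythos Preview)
Your proposal is correct and follows essentially the same approach as the paper: the same two model constructions (disjoint union of worlds and neighbourhoods with $R_{i_1}, R_{i_2}$ in the forward direction; $\Nmc_i(w)$ defined via ``there exists an $R_{i_1}$-successor whose $R_{i_2}$-successors lie in $\alpha$'' in the backward direction), the same use of supplementation to turn $\llbracket D\rrbracket^{\Mmc}_d\in\Nmc_i(w)$ into an existential-subset statement, and the same pair of inductions on concepts and formulas. The paper's proof only spells out the $\B_i$ cases explicitly and defers the rest to the $\EnALC{n}$ argument, exactly as you suggest.
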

 \begin{proof}[Proof of Claim]
The proof is analogous to that of Theorem~\ref{theor:complealc} (we show the cases for modalised concepts and formulas only).
Consider a $\MLALC{n}$ formula $\p$ satisfiable on supplemented neighbourhood frames, i.e., so that there is a neighbourhood model
$\Mmc = (\Fmc, \Delta, \Int)$
based on a supplemented neighbourhood frame
$\Fmc = (\Wmc, \{ \Nmc_{i} \}_{i \in J})$ and a $w$ in $\Mmc$ such that
$\Mmc, w \mdl \p$.
We define a relational frame
$\Fmf = (W, \{ R_{i_{1}}, R_{i_{2}} \}_{i \in J})$,
and an $\MLALC{2n}$ relational model
$\Mmf = (\Fmf, \Delta, I)$ based on $\Fmf$, such that:
	\begin{itemize}
		\item $W = \{ (w, 0) \mid w \in \Wmc \} \cup \{ (\alpha, 1) \mid \alpha \in \bigcup_{v \in \Wmc} \Nmc_{i}(v) \}$
		\item $\relations_{i_1} = \{ ((w, 0), (\alpha, 1)) \mid \alpha \in \Nmc_{i}(w)\}$;
		\item $\relations_{i_2} = \{ ((\alpha, 1), (w, 0)) \mid w \in \alpha \}$
		\item for every $(w, 0) \in W$, $I_{(w, 0)} = \Imc_{w}$; for every $(\alpha, 1) \in W$, $X^{I_{(\alpha, 1)}} = \eset$, for all $X \in \NC \cup \NR$, and $a^{I_{(\alpha, 1)}} = a^{\Int}$, for all $a \in \NI$.
	\end{itemize}

Firstly we show, by induction on the structure of concepts $C$, that for all $d \in \Delta$ and all $w \in \Wmc$:
\[
d \in C^{\Imc_{w}} \text{ iff } d \in (C\ttr)^{I(w,0)}.
\]
\noindent
$C = \B_{i} D$. 
We have  
$d \in (\B_{i} D)^{\Imc_{w}}$
iff 
$\llbracket D \rrbracket^{\Mmc}_{d} \in \Nmc_{i}(w)$.
Since $\Fmc$ is supplemented, the previous step means that there is
$\alpha \in \Nmc_{i}(w) \colon \alpha \sbs \llbracket D \rrbracket^{\Mmc}_{d}$.
This is equivalent to:
there is $\alpha \in \Nmc_{i}(w)$
s.t. for every
$v \in \Wmc : v \in \alpha \Rightarrow d \in D^{\Imc_{v}}$.
By i.h. and definition of $\relations_{i_2}$,
there is $\alpha \in \Nmc_{i}(w)$
s.t. for every
$v \in \Wmc : (\alpha, 1) \relations_{i_2} (v, 0) \Rightarrow d \in (D\ttr)^{I(v,0)}$.
Equivalently, there is
$\alpha \in \Nmc_{i}(w) \colon d \in (\B_{i_2} D\ttr)^{I(\alpha,1)}$.
By definition of $\relations_{i_1}$, this means:
there exists
$(\alpha,1) \in W$
s.t.
$(w,0) \relations_{i_1} (\alpha,1)$ and $d \in (\B_{i_2} D\ttr)^{I(\alpha,1)}$, 
iff
$d \in (\D_{i_1} \B_{i_2} D\ttr )^{I(w,0)}$.
That is, by definition of $\cdot\tr$,
$d \in ((\B D_{i})\ttr)^{I(w,0)}$.
\newline

Then, we prove that for every $\MLALC{n}$ formula $\psi$ and every $w \in \Wmc$:
\[
	\Mmc, w \mdl \psi \text{ iff } \Mmf, (w,0) \mdl \psi\ttr
\]
\noindent
$\psi = \B_{i} \chi$.
$\Mmc, w \mdl \B_{i} \chi$ iff
$\llbracket \chi \rrbracket^{\Mmc}  \in \Nmc_{i}(w)$.
Since $\Fmc$ is supplemented, this is equivalent to: 
there is $\alpha \in \Nmc_{i}(w)$ s.t. $ \alpha \sbs \llbracket \chi \rrbracket^{\Mmc}$,
iff
there is $\alpha \in \Nmc_{i}(w)$ s.t. for all $v \in \Wmc : v \in \alpha \Rightarrow \Mmc, v \mdl \chi$.
By i.h. and definition of $\relations_{i_2}$, there is $\alpha \in \Nmc_{i}(w)$
s.t. for all $v \in \Wmc : (\alpha,1) \relations_{i_2} (v,0) \Rightarrow \Mmf, (v,0) \mdl (\chi\ttr)$.
This means that, for some $\alpha \in \Nmc_{i}(w)$,
$\Mmf, (\alpha,1) \mdl \B_{i_2} \chi\ttr$.
By definition of $\relations_{i_1}$, the previous step is equivalent to:
there is $(\alpha,1) \in W$ s.t.
$(w,0) \relations_{i_1} (\alpha,1)$
and
$\Mmf, (\alpha,1) \mdl \B_{i_2} \chi\ttr$.
That is,
$\Mmf, (w,0) \mdl \D_{i_1} \B_{i_2} \chi\ttr$.
By definition of $\cdot\ttr$,
$\Mmf, (w,0) \mdl (\B_{i} \chi)\ttr$.

Thus, in particular, we have $\Mmf, (w,0) \mdl \p\ttr$.
\newline

Conversely, consider an $\MLALC{2n}$ formula $\p\ttr$ satisfiable on relational models:
$\Mmf, w \mdl \p\ttr$,
for some $\MLALC{2n}$ model
$\Mmf = (\Fmf, \Delta, I)$ based on $\Fmf = (W, \{ \relations_{i_{j}} \}_{j \in [1, 2]})$, and some $w \in W$. 
We define an $\MLALC{n}$ neighbourhood model
$\Mmc = (\Fmc, \Delta, \Int)$
based on
$\Fmc = (\Wmc, \{ \Nmc_{i} \}_{i \in [1, n]})$
s.t. $\Wmc = W$, and for all $w \in W$:
\begin{itemize}
			\item $\alpha \in \Nmc_{i}(w)$ iff there is $v \in W$ s.t. $w \relations_{i_1} v$ and for all $u \in W$, $v \relations_{i_2} u \Rightarrow u \in \alpha$;
			\item $\Imc_{w} =I_{w}$.
\end{itemize} 

Firstly, notice that $\Fmc$ is supplemented: 
for all $w \in W$, if $\alpha \in \Nmc_{i}(w)$ and $\alpha \sbs \beta \sbs W$, then there is $v \in W$ s.t. $w \relations_{i_1} v$ and for all $u \in W$, $v \relations_{i_2} u \Rightarrow u \in \beta$, i.e., $\beta \in \Nmc_{i}(w)$.
We now show, by induction on the structure of concepts $C$, that for all $d \in \Delta$ and all $w \in W$:
\[
d \in C^{\Imc_{w}} \text{ iff } d \in (C\ttr)^{I_{w}}.
\]
\noindent
$C = \B_{i} D$. 
We have  
$d \in (\B_{i} D)^{\Imc_{w}}$ iff 
$\llbracket D \rrbracket^{\Mmc}_{d} \in \Nmc_{i}(w)$.
By definition of $\Nmc$, this means that there is
$v \in W$ s.t. $w \relations_{i_1} v$ and for all $u \in  W$: $v \relations_{i_2} u \Rightarrow d \in D^{\Imc_{u}}$.
By i.h. the previous step is equivalent to: there is $v \in W$ s.t. $w \relations_{i_1} v$ and for all $u \in  W$: $v \relations_{i_2} u \Rightarrow d \in (D\ttr)^{I_{u}}$, iff
$d \in (\D_{i_1}\B_{i_2} D\ttr)^{I_{w}}$.
By definition of $\cdot\ttr$, $d \in ((\B_{i} D)\ttr)^{I_{w}}$.
\newline

Finally, we prove, by induction on $\MLALC{n}$ formulas $\psi$, that for every $w \in W$:
\[
	\Mmc, w \mdl \psi \text{ iff } \Mmf, w \mdl \psi\ttr
\]
\noindent
$\psi = \B_{i} \chi$.
$\Mmc, w \mdl \B_{i} \chi$ iff
$\llbracket \chi \rrbracket^{\Mmc}  \in \Nmc_{i}(w)$.
That is, there is $v \in W$ s.t. $w \relations_{i_1} v$ and for all $u \in  W$: $v \relations_{i_2} u \Rightarrow \Mmc, u \mdl \chi$.
By i.h., this is equivalent to: there is $v \in W$ s.t. $w \relations_{i_1} v$ and for all $u \in  W$: $v \relations_{i_2} u \Rightarrow \Mmf, u \mdl \chi\ttr$,
iff 
$\Mmf, w \mdl \D_{i_1} \B_{i_2} \chi\ttr$.
By definition of $\cdot\ttr$,
$\Mmf, w \mdl (\B_{i} \chi)\ttr$.

Therefore, we have in particular that $\Mmc, w \mdl \p$, i.e., $\p$ is satisfiable on a supplemented neighbourhood model.
\end{proof}
\end{proof}

\end{document}